\newcommand{\xmark}{\ding{55}}%
\newtheorem{proposition}{Proposition}
\newtheorem{theorem}{Theorem}
\newtheorem{lemma}{Lemma}
\newtheorem{corollary}{Corollary}
\newtheorem{definition}{Definition}
\newtheorem{example}{Example}
\newtheorem{remark}{Remark}
\newsavebox{\mytable}
\renewcommand{\paragraph}[1]{{\vspace{\baselineskip}\noindent\normalfont\bfseries#1\quad}}
\title{
Proportional Budget Allocations: Towards a Systematization
}
\author{
Maaike Los$^1$
\and
Zo\'{e} Christoff$^1$\and
Davide Grossi$^{1,2}$
\affiliations
$^1$University of Groningen\\
$^2$University of Amsterdam
\emails
\{m.d.los, z.l.christoff, d.grossi\}@rug.nl
}
\begin{document}

\maketitle

\begin{abstract}
We contribute to the programme of lifting proportionality axioms from the multi-winner voting setting to participatory budgeting. We define novel proportionality axioms for participatory budgeting and test them on known proportionality-driven rules such as Phragm\'{e}n and Rule X. We investigate logical implications among old and new axioms and provide a systematic overview of proportionality criteria in participatory budgeting. 
\end{abstract}

\section{Introduction}
First introduced in Porto Alegre, Brazil in 1988, Participatory budgeting (PB) is a democratic budgeting practice in which citizens are consulted, through some voting method, on how to best allocate a given budget to public projects. The practice is attracting increasing attention from both democracy practitioners worldwide and researchers, among others within computational social choice \citep{aziz_shah_pb}. 

The axiomatic study of PB has formulated a number of criteria for the desirable behaviour of participatory budgeting methods, or PB rules. Special attention has been dedicated to forms of `fairness' or `proportionality'. 
Intuitively, one may want a PB rule to output a division of the available budget over the projects that `reflects' divisions in the voters' preferences. 
A variety of proportionality axioms has been proposed in recent literature, and  this paper provides a first systematization of the axiomatic landscape of proportionality in PB and its special case of multi-winner voting (MWV, \citep{faliszewski17multiwinner}),
or committee selection, that is, a PB setting where all projects (referred to as candidates) have identical cost. 


\paragraph{State of the art.}
The current understanding of proportionality in PB is rooted in MWV \citep{skowron17proportional,peters18prop}.
A key fairness axiom in MWV is {\em justified representation} (JR) \citep{Aziz2017}. In short, JR requires that if a large enough group of voters agrees about a candidate, there is at least one candidate in the chosen committee that at least one of the group members approves. Proportionality requirements have been added with the axioms of {\em extended justified representation} (EJR) \citep{Aziz2017} and {\em proportional justified representation} (PJR) \citep{sanchezfernandez2016}, following the intuition that if a larger group agrees about more candidates, they should be represented by more candidates in the winning committee.
\cite{Aziz2018} generalize these concepts from MWV to PB within an approval voting framework.
A related fairness concept is the \textit{core} \citep{Fain2016}. 
A set of projects, or {\em bundle}, is a core bundle if there is no subset of agents who can afford a different bundle (with their own share of the total budget) where every agent in that subset gets more utility than in the chosen bundle.

In \citep{Peters2020_lim_of_welf}, two MWV rules considered to be proportional---Proportional Approval Voting (PAV) and Phragm\'{e}n's rule (or simply Phragm\'{e}n)---are analysed, and shown to guarantee different types of proportionality. PAV induces a fair distribution of \emph{welfare}, so every group of agents gets a utility proportional to its size, 
while Phragm\'{e}n can be seen as inducing a fair distribution of \emph{power}: the influence of a group of agents is proportional to its size. 
The authors introduce two new proportionality axioms: \textit{priceability} and \textit{laminar proportionality} (LP), and a new rule, Rule X, that is similar to both PAV and Phragm\'{e}n, but satisfies both new axioms and the aforementioned EJR.
Finally, work has started exploring the logical relations between the proportionality axioms proposed in the literature. For instance, \cite{petersmarket} show that, in MWV, the core implies EJR, PJR, and JR. 
Moving to PB, research has focused on assessing the extent to which the above MWV axioms and rules can be meaningfully generalised to PB. \cite{peters2020proportionalPB} generalise Rule X and EJR to PB, and show that, even in this context, Rule X satisfies EJR.\footnote{\cite{peters2020proportionalPB} call Rule X in the PB setting `method of equal shares' (MES). For simplicity we stick to the term Rule X.}
The PB variant of Rule X is also shown to satisfy an approximation of the core and the axiom of priceability for PB. PAV is generalised to PB too, and shown to fail EJR without the unit-cost assumption.

\paragraph{Contribution.}
We make two contributions. First, we complete the study of proportionality for Phragm\'{e}n and Rule X in the PB setting with respect to the axioms mentioned above (Table \ref{tab:rules_and_axioms_PB}). To do so, we propose novel generalizations of PJR and LP (Definitions \ref{def:PB-PJR} and \ref{def:lamprop_PB}) from the MWV to the PB setting. Second, we provide an overview of the logical relations between proportionality axioms in MWV and PB, establishing several novel results (see Figure \ref{fig:relations}). 
The resulting picture contributes to a systematization of how proportionality is interpreted in PB. Only selected proofs and proof sketches are provided in the main text, this version extends the IJCAI 2022 version with an appendix containing full proofs.

\begin{figure}[t]
    \centering
    \scalebox{0.6}{
    \tikzset{every picture/.style={line width=0.75pt}} 

\scalebox{0.95}{
\begin{tikzpicture}[x=0.75pt,y=0.75pt,yscale=-1,xscale=1]
\draw (409,380) node [anchor=north west][inner sep=0.75pt]   [align=left] {PJR};
\draw (409,483) node [anchor=north west][inner sep=0.75pt]   [align=left] {EJR};
\draw (115,380) node [anchor=north west][inner sep=0.75pt]   [align=left] {priceability};
\draw (409,592) node [anchor=north west][inner sep=0.75pt]   [align=left] {core};
\draw (135,597) node [anchor=north west][inner sep=0.75pt]  [color={rgb, 255:red, 0; green, 0; blue, 0 }  ,opacity=1 ] [align=left] {LP};
\draw (230,371) node [anchor=north west][inner sep=0.75pt]   [align=left] {\cite{Peters2020_lim_of_welf}};
\draw (150.6,565) node [anchor=north west][inner sep=0.75pt]  [opacity=1 ,rotate=-270.64] [align=left] {{\footnotesize \textcolor[rgb]{0,0,0}{on laminar instances}}};
\draw (128.83,538.3) node [anchor=north west][inner sep=0.75pt]  [opacity=1 ,rotate=-270.99] [align=left] {Theorem \ref{thm:LP-pr_PB}};
\draw (400,590) node [anchor=north west][inner sep=0.75pt]  [opacity=1, rotate=-270.99 ] [align=left] {\scriptsize \cite{peters2020proportionalPB}}; 
\draw (400,470) node [anchor=north west][inner sep=0.75pt]  [opacity=1 ,rotate=-269.61] [align=left] {Theorem \ref{thm:PB-EJR->PB-PJR}};
\draw (290,565) node [anchor=north] [inner sep=0.75pt]   [align=left] {Theorem \ref{thm:LP-core_u-afford}};
\draw (290,585) node [anchor=north] [inner sep=0.75pt]   [align=left] {{\footnotesize subject to 
u-afford
}};
\draw [draw opacity=1]   (170,590) .. controls (200,575) and (360,575) .. (390,590) ;
\draw [shift={(392,590)}, rotate = 555] [draw opacity=1 ][line width=0.75]    (10.93,-3.29) .. controls (6.95,-1.4) and (3.31,-0.3) .. (0,0) .. controls (3.31,0.3) and (6.95,1.4) .. (10.93,3.29)   ;
\draw [draw opacity=1, dashed ]   (170,617) .. controls (200,635) and (360,635) .. (390,617) ;
\draw [shift={(390,617)}, rotate = 515] [draw opacity=1 ][line width=0.75]    (10.93,-3.29) .. controls (6.95,-1.4) and (3.31,-0.3) .. (0,0) .. controls (3.31,0.3) and (6.95,1.4) .. (10.93,3.29)   ;
\draw (290,610) node [anchor=north] [inner sep=0.75pt]   [align=left] {Corollary \ref{cor:LP-PJR-EJR-core_MWV}};
\draw (290,630) node [anchor=north] [inner sep=0.75pt]   [align=left] {on laminar instances};
\draw [draw opacity=1 ]   (147,590) -- (147,400) ;
\draw [shift={(147,400)}, rotate = 450.28] [draw opacity=1 ][line width=0.75]    (10.93,-3.29) .. controls (6.95,-1.4) and (3.31,-0.3) .. (0,0) .. controls (3.31,0.3) and (6.95,1.4) .. (10.93,3.29)   ;
\draw [draw opacity=1 , dashed]   (193,390) -- (404,390) ;
\draw [shift={(406,390)}, rotate = 180.15] [draw opacity=1 ][line width=0.75]    (10.93,-3.29) .. controls (6.95,-1.4) and (3.31,-0.3) .. (0,0) .. controls (3.31,0.3) and (6.95,1.4) .. (10.93,3.29)   ;
\draw [draw opacity=1]   (420,475) -- (420,400) ;
\draw [shift={(420,400)}, rotate = 449.25] [draw opacity=1 ][line width=0.75]    (10.93,-3.29) .. controls (6.95,-1.4) and (3.31,-0.3) .. (0,0) .. controls (3.31,0.3) and (6.95,1.4) .. (10.93,3.29)   ;
\draw [draw opacity=1]   (420,585) -- (420,505) ;
\draw [shift={(420,505)}, rotate = 449.81] [draw opacity=1 ][line width=0.75]    (10.93,-3.29) .. controls (6.95,-1.4) and (3.31,-0.3) .. (0,0) .. controls (3.31,0.3) and (6.95,1.4) .. (10.93,3.29)   ;
\draw [draw opacity=1]   (440,490) -- (470,490) ;
\draw [shift={(470,490)}, rotate = 180.15] [draw opacity=1 ][line width=0.75]    (10.93,-3.29) .. controls (6.95,-1.4) and (3.31,-0.3) .. (0,0) .. controls (3.31,0.3) and (6.95,1.4) .. (10.93,3.29)   ;
\draw [draw opacity=1]   (440,390) -- (470,390) ;
\draw [shift={(470,390)}, rotate = 180.15] [draw opacity=1 ][line width=0.75]    (10.93,-3.29) .. controls (6.95,-1.4) and (3.31,-0.3) .. (0,0) .. controls (3.31,0.3) and (6.95,1.4) .. (10.93,3.29)   ;
\draw [draw opacity=1]   (520,475) -- (520,400) ;
\draw [shift={(520,400)}, rotate = 449.25] [draw opacity=1 ][line width=0.75]    (10.93,-3.29) .. controls (6.95,-1.4) and (3.31,-0.3) .. (0,0) .. controls (3.31,0.3) and (6.95,1.4) .. (10.93,3.29)   ;
\draw (475,380) node [anchor=north west][inner sep=0.75pt]   [align=left] {PJR-up-to-one};
\draw (475,483) node [anchor=north west][inner sep=0.75pt]   [align=left] {EJR-up-to-one};
\draw (500,470) node [anchor=north west][inner sep=0.75pt]  [opacity=1 ,rotate=-269.61] [align=left] {Theorem \ref{thm:PB-EJR->PB-PJR}};

\end{tikzpicture}
}
    }
    \caption{Relations among proportionality axioms in PB. Dashed lines indicate relations that only hold in MWV. Arrows are labelled either by our results or the paper where they have been proven.
    Some of the implications only hold under certain conditions or restrictions: \textit{laminar instances} (Definition \ref{def:laminar_PB}), and \textit{unanimity affordability} (u-afford, Definition \ref{def:P-u-afford}). 
    Transitive arrows are omitted. Absence of arrows denotes the existence of a counterexample. 
    }
    \label{fig:relations}
\end{figure}


\section{Preliminaries}
\subsection{The Participatory Budgeting (PB) Problem}
We denote the set of projects (or candidates) by $C=\{c_1, c_2, ..., c_m\}$ and the set of voters by $N=\{v_1,v_2,...,v_n\}$. Each voter $i$ comes with a function $u_i$ assigning a utility to all projects. 
In the fully general setting of utility-based PB, $u_i: C\rightarrow [0,1]$. In contrast, in the special case of approval-based PB, the utility function is restricted to two values: $u_i: C\rightarrow \{0,1\}$, determining voter $i$'s approval set $A_i=\{c\in C:u_i(c)=1\}$. 
The utility of a set of projects $T \subseteq C$ for a set of voters $S \subseteq N$ is defined additively as: $u_S(T)=\sum_{i\in S}\sum_{c\in T}u_i(c)$. 
A \textit{profile} $P$ is a vector of the utility functions of all voters: $P=(u_1, ..., u_n)$. If utilities are approval-based we denote with $C(P)$ the set of all projects occurring in the approval sets in $P$.
A function $\text{cost: } C\rightarrow\mathbb{Q}_+$ 
assigns a cost to every project. The cost of a set of projects $T$ is given by $\text{cost}(T)=\sum_{c\in T}\text{cost}(c)$. The total budget is denoted by $l$. If $l$ is not mentioned, it is equal to $1$. Hence, an election instance (also called a \textit{PB-instance}) $E=(N,C,\text{cost}, P, l)$ consists of a set of voters $N$, a set of projects $C$, a cost function, a profile $P$, and a budget $l$. If all else is clear in context, we abbreviate this to $E=(P,l)$. A \textit{voting rule} $\mathcal{R}$ maps an election instance $E$ to a winning bundle $W$. A PB-instance where for all $i\in N$ $u_i: C\rightarrow \{0,1\}$ is called \textit{approval-PB-instance}.
The special case of an approval-PB-instance in which all projects have the same cost is called a \textit{MWV-instance}. In such MWV setting, we refer to a bundle as a \textit{committee}.

\subsection{Voting Rules}\label{sec:rules}

We focus on three proportionality-inspired rules: (sequential) Phragm\'{e}n, proportional approval voting (PAV) and Rule X. Below, we recall the generalizations of Phragm\'{e}n, PAV, and Rule X to PB 
introduced by \cite{peters2020proportionalPB}. 

\smallskip
\noindent
{\bf Phragm\'{e}n}
Every voter gets currency continuously at the rate of one unit of currency per unit of time. At the first moment $t$ when there is a group of  voters $S$ who all approve a not-yet-selected project $c$, and who together have cost$(c)$ units of currency, the rule adds $c$ to the bundle and asks the voters from $S$ to pay the cost of $c$ (i.e., the rule resets the balance of each voter from $S$), while the others keep their so-far earned money. The process stops when it would select a project which would overshoot the budget.

\smallskip
\noindent
{\bf PAV} 
The winning bundle $W$ of PAV is the bundle with $\text{cost}(W)\leq l$ that maximises the score $\text{PAV-score}(W)  = \sum_{i\in N}\left(1 +\frac{1}{2} + \frac{1}{3}+\cdots+\frac{1}{|W\cap A_i|} \right)$.

\smallskip
\noindent
{\bf Rule X} The rule starts by giving each voter an equal fraction of the budget. In case of a budget of 1, each of the $n$ voters gets $\frac{1}{n}$ unit of currency. We start with an empty bundle $W = \emptyset$ and sequentially add projects to $W$. To add a project $c$ to $W$, the voters have to pay for $c$. Write $p_i(c)$ for the amount that voter $i$ pays for $c$; we will need that $\sum_{i\in N}p_i(c) = \text{cost}(c).$ Let $p_i(W) = \sum_{c\in W}p_i(c) \leq \frac{1}{n}$ be the total amount voter $i$ has paid so far. For $\rho \geq 0$, we say that a project $c\notin W$ is $\rho$-affordable if $\sum_{i\in N}\min(\frac{1}{n}-p_i(W), u_i(c)\cdot \rho) = \text{cost}(c).$
The rule iteratively selects a project $c\notin W$ that is $\rho$-affordable for a minimum $\rho$. Individual payments are given by $p_i(c) = \min(\frac{1}{n}-p_i(W), u_i(c)\cdot \rho).$
If no project is $\rho$-affordable for any $\rho$, Rule X terminates and returns $W$. 

\begin{remark}
Note that Phragm\'{e}n and PAV work with approval-based while Rule X with utility-based ballots. In what follows, when discussing the first two rules, we will therefore presuppose approval-PB-instances.
Notice also that all three rules are non-resolute.
\end{remark}

\begin{example}\label{ex:rules}
Consider the profile in Table \ref{tab:ex_1}. 
To get the approval profile needed for Phragm\'{e}n and PAV, we binarize the utility function using a threshold
of 0.3: voters approve a project when it yields utility of at least 0.3. Approved projects are shaded.
    \begin{table}[t]
    \begin{adjustbox}{width=\columnwidth,center=\columnwidth}
        \centering
        \begin{tabular}{|c|l||r|r|r|r||c|c|c|}
        \hline
             project & cost & \multicolumn{4}{c||}{utilities} & \multicolumn{3}{c|}{Winning bundles} \\ \hline
              $c_1$ & 0.4  & \cellcolor[HTML]{C0C0C0}{1} & \cellcolor[HTML]{C0C0C0} {0.7} & 0.1 & 0 & & & \checkmark \\ \hline
            $c_2$ & 0.3  & \cellcolor[HTML]{C0C0C0} {0.3} & \cellcolor[HTML]{C0C0C0} {0.4} & 0 &\cellcolor[HTML]{C0C0C0}  {0.4} & $\checkmark$ & \checkmark &\\ \hline
             $c_3$ & 0.7 & 0.1 & 0.2 & \cellcolor[HTML]{C0C0C0} {0.4} & \cellcolor[HTML]{C0C0C0} {0.4}& & \checkmark & \\ \hline
             $c_4$ & 0.35& 0 &\cellcolor[HTML]{C0C0C0}  {0.4} & 0.2 &\cellcolor[HTML]{C0C0C0}  {1} & \checkmark &  & \checkmark\\ \hline \hline
                   & & $v_1$ & $v_2$ & $v_3$ & $v_4$ & Phragm\'{e}n & PAV & Rule X \\
        \hline
        \end{tabular}    
    \end{adjustbox}
    \caption{The profile used in Examples \ref{ex:rules} and \ref{ex:PB-PJR}. Each column contains the utilities per project of a voter. 
    The budget $l=1$.}
    \label{tab:ex_1}

    \end{table}
Phragm\'{e}n will first select $c_2$ at time $t=0.1$, which leaves voter $v_3$ with 0.1 units of currency. Then at $t=0.275$, $v_2$ and $v_4$ can together buy $c_4$, which leaves $v_1$ with 0.175 and $v_3$ with 0.275. After adding $c_4$, the rule ends since both remaining projects are not affordable and outputs $W=\{c_2, c_4\}$. For PAV, we compute the PAV-score of four sets: $\text{PAV-score}(\{c_1, c_4\})=3.5$, $\text{PAV-score}(\{c_1, c_2\})=4 $, $\text{PAV-score}(\{c_2, c_3\})=4.5 $, and $\text{PAV-score}(\{c_2, c_4\})=4$ (clearly, smaller sets have a lower score and larger sets are not affordable). Hence, PAV will select $W=\{c_2, c_3\}$.
Rule X starts by giving every voter $\frac{1}{4}$ unit of currency. First, $c_4$ is $\rho$-affordable for $\rho = \frac{0.35}{1.6}\approx 0.219$, then, for $\rho = \frac{0.4}{1.8}\approx0.222$, $c_1$ is $\rho$-affordable. After selecting $c_1$, the sum of the remaining amounts of all voters is 0.25, so the other projects are not $\rho$-affordable for any $\rho$. Hence, Rule X returns $W=\{c_1, c_4\}$.
\end{example}

\subsection{Known Proportionality Axioms}

We recall axioms for PB from \cite{peters2020proportionalPB} that generalize known MWV axioms. 
We start with the axioms of core and extended justified representation (EJR).

\begin{definition}[Core] \label{def:Core_PB} For a given PB-instance $E=(N,C,\text{cost}, P, l)$, a  bundle $W$ is in the core if for every $S \subseteq N$ and $T \subseteq C$ with $|S|\geq\frac{\text{cost}(T)}{l}\cdot n$ there exists $i \in S$ such that $u_i(W) \geq u_i(T)$. 
A voting rule $\mathcal{R}$ satisfies the core property if for each PB-instance $E$ the winning bundle $\mathcal{R}(E)$ is in the core. 
\end{definition}

To define EJR, we first introduce \textit{($\alpha,T$)-cohesiveness}:
\begin{definition}[($\alpha,T$)-cohesiveness]\label{def:alpha-T-cohesiveness}
 A group of voters S is \textit{($\alpha,T$)-cohesive} for $\alpha: C\rightarrow[0,1]$ and $T\subseteq C$, if $|S|\geq \frac{\text{cost}(T)}{l}\cdot n$ and if it holds that $u_i(c)\geq \alpha (c)$ for every voter $i\in S$ and each project $c\in T$.
\end{definition}

\begin{definition}[Extended justified representation]\label{def:EJR_PB} 
    A rule $\mathcal{R}$ satisfies extended justified representation (EJR) if for each PB-instance $E$ and each $(\alpha,T)$-cohesive group of voters $S$, there is a voter $i\in S$ such that $u_i(\mathcal{R}(E))\geq\sum_{c\in T}\alpha(c)$.
\end{definition}
In \cite{peters2020proportionalPB}, a weakening of EJR is considered, to which we refer as \textit{EJR-up-to-one}.
\begin{definition}[Extended justified representation up to one project]\label{def:EJR_PB-up-to-one} 
    A rule $\mathcal{R}$ satisfies extended justified representation up to one project (EJR-up-to-one)
    if for each PB-instance $E$ and each $(\alpha,T)$-cohesive group of voters $S$, there is a voter $i\in S$ such that $u_i(\mathcal{R}(E))\geq\sum_{c\in T}\alpha(c)$ or for some $a\in C$ it holds that $u_i(\mathcal{R}(E)\cup \{a\})>\sum_{c\in T}\alpha(c)$. 
\end{definition}

In MWV, Definitions~\ref{def:EJR_PB} and~\ref{def:EJR_PB-up-to-one} are equivalent:
\begin{proposition}\label{prop:ejr-up-to-one}
Def.~\ref{def:EJR_PB} and Def.~\ref{def:EJR_PB-up-to-one} are equivalent in MWV-instances.
\end{proposition}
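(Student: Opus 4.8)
The plan is to prove the two implications separately. The direction from Definition~\ref{def:EJR_PB} to Definition~\ref{def:EJR_PB-up-to-one} is immediate and holds in full generality: the first disjunct of the up-to-one condition is literally the EJR conclusion, so any rule satisfying EJR satisfies EJR-up-to-one. All the real work is in the converse, EJR-up-to-one $\Rightarrow$ EJR, and this is where the MWV structure---approval ballots, hence integer-valued utilities---enters.

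The two facts I would exploit are: (i) in an approval instance $u_i(W)=|W\cap A_i|\in\mathbb{N}$, and adding a single candidate raises a voter's utility by at most one, i.e.\ $u_i(W\cup\{a\})\leq u_i(W)+1$; and (ii) for testing EJR it suffices to consider cohesive groups whose threshold $\sum_{c\in T}\alpha(c)$ is an integer. For (ii), given any $(\alpha,T)$-cohesive group $S$, I would set $T'=\{c\in T:\alpha(c)>0\}$ and let $\alpha'$ be the indicator function of $T'$. Since utilities are $0/1$, every $i\in S$ must approve every $c\in T'$, so $u_i(c)\geq 1=\alpha'(c)$; and since $\text{cost}(T')\leq\text{cost}(T)$ the size bound $|S|\geq\frac{\text{cost}(T')}{l}\cdot n$ is preserved. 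Hence $S$ is $(\alpha',T')$-cohesive, with \emph{integer} threshold $\sum_{c\in T'}\alpha'(c)=|T'|\geq\sum_{c\in T}\alpha(c)$.

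With these in hand, assume $\mathcal{R}$ satisfies EJR-up-to-one and fix an $(\alpha,T)$-cohesive group $S$ in an MWV-instance $E$, writing $W=\mathcal{R}(E)$. Passing to $(\alpha',T')$ as above and applying EJR-up-to-one yields a voter $i\in S$ for which either $u_i(W)\geq|T'|$, or $u_i(W\cup\{a\})>|T'|$ for some $a\in C$. In the first case we are done, since $|T'|\geq\sum_{c\in T}\alpha(c)$. In the second case, $u_i(W\cup\{a\})$ and $|T'|$ are both integers, so $u_i(W\cup\{a\})>|T'|$ gives $u_i(W\cup\{a\})\geq|T'|+1$; combined with $u_i(W\cup\{a\})\leq u_i(W)+1$ this forces $u_i(W)\geq|T'|\geq\sum_{c\in T}\alpha(c)$. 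Either way the EJR conclusion holds for $S$.

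The step I expect to be delicate is exactly the reduction to an integer threshold in (ii): the up-to-one relaxation only ``buys back'' a single full unit of utility, so it can recover the EJR bound only when that bound is itself an integer. Indeed, applying EJR-up-to-one directly to a fractional threshold would yield only $u_i(W)>\sum_{c\in T}\alpha(c)-1$, which is strictly weaker than EJR. It is precisely the $0/1$ structure of MWV that lets us replace an arbitrary cohesive group by one whose threshold is an integer no smaller than the original, and this is the crux of why the equivalence holds.
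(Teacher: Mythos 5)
Your proof is correct and takes essentially the same route as the paper's: exploit the $0/1$ utilities of MWV to reduce to an integer threshold (the paper's ``we can assume $\alpha(c)=1$ for each $c\in T$'' is exactly your explicit passage to $T'=\{c\in T:\alpha(c)>0\}$ and its indicator function), and then use integrality together with the fact that one added candidate raises utility by at most one to collapse the up-to-one clause into the EJR bound. If anything, your write-up is more careful than the paper's, which asserts the reduction to $\alpha\equiv 1$ without spelling out that one must shrink $T$ to $T'$ and check that cohesiveness is preserved and the threshold does not decrease.
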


We now turn to priceability, for which the notion of price system needs to be introduced first.
\begin{definition}[Price systems] \label{def:Pricesystems_PB}
     A price system is a pair $\mathbf{ps}=(b,(p_i)_{i\in N})$ where $b\geq 1$ is the initial budget, and for each voter $i\in N$, there is a payment function $p_i:C\rightarrow \mathbb{R}$ such that (1) a voter can only pay for projects she gets at least some utility from: if $u_i(c)=0$, then $p_i(c)=0$ for each $i\in N$ and $c\in C$, and (2) each voter can spend the same budget of $\frac{b}{n}$ units of currency: $\sum_{c\in C}p_i(c)\leq \frac{b}{n}$ for each $i\in N$.
\end{definition}
    
\begin{definition}[Priceability] \label{def:Priceability_PB} 
    A rule $\mathcal{R}$ satisfies \textit{priceability} (is \textit{priceable}) if for each PB-instance $E$, there exists a price system $\mathbf{ps}=(b,(p_i)_{i\in N})$ that supports $\mathcal{R}(E)$, that is: (1) for each $c \in \mathcal{R}(E) $, the sum of the payments for $c$ equals its price, i.e., $\sum_{i\in N}p_i(c)=\text{cost}(c)$; (2) no project outside of the winning bundle gets any payment, i.e., for all $c\notin \mathcal{R}(E), \sum_{i\in N}p_i(c)=0$; (3) there exists no non-selected project whose supporters in total have a remaining unspent budget of more than its cost, i.e., for all $c\notin \mathcal{R}(E)$, 
            $
            \sum_{i\in N \text{ s.t. } u_i(c)>0}\left(\frac{b}{n}-\sum_{c'\in \mathcal{R}(E)}p_i(c')\right)\leq \text{cost}(c).
            $
\end{definition}

\begin{remark}
The above properties are defined for PB. Throughout the paper, when needing to refer to the MWV specialization of a PB axiom, we will assume the axiom is defined on MWV-instances. We say that a rule satisfies an axiom in \textit{MWV-instances} if, for all MWV-instances, the rule's winning bundle satisfies the axiom.
Observe also that, since Phragm\'{e}n and PAV are defined on approval-PB-instances, when we assess whether they satisfy an axiom, we consider the axiom only with respect to approval-PB-instances.
\end{remark}

\section{Two Novel Axioms for PB}
\subsection{Proportional Justified Representation in PB}
\cite{sanchezfernandez2016} define proportional justified representation for MWV (we refer to it as MWV-PJR here).
We generalize this axiom to PB based on the generalisation of EJR provided by \cite{peters2020proportionalPB}. Two steps are involved: dropping the unit-cost assumption, and allowing arbitrary utilities instead of just approval ones.

\begin{definition}[Proportional justified representation]\label{def:PB-PJR}
    A rule $\mathcal{R}$ satisfies \textit{proportional justified representation} (PJR) if for each PB-instance $E$ and $(\alpha,T)$-cohesive group of voters $S$, 
    \begin{equation}\label{eq:PB-PJR}
        \sum_{c\in \mathcal{R}(E)}(\max_{i\in S}u_i(c))\geq\sum_{c\in T}\alpha(c).
    \end{equation}
\end{definition}
 The intuition is that, in the winning bundle, for each cohesive group $S$ (cohesive in that $S$ agrees to a certain degree about the set of projects $T$) there should be enough projects to which at least one voter in $S$ assigns enough utility.

\begin{example}\label{ex:PB-PJR}
Consider the profile in Table \ref{tab:ex_1}, with the bundle $W=\{c_2,c_3\}$ as selected by PAV. Now consider the group $S=\{v_1, v_2\}$. Probably both voters in $S$ are happy that $c_2$ is selected, but they would both get more utility from $c_1$ than from the selected $c_2$ or $c_3$. Also, if each of them would get their share of the total budget ($\frac{1}{4}$), they could together afford the set $T = \{c_1\}$. Intuitively then, $W$ is not a fair bundle considering voters $v_1$ and $v_2$. Let us look at it more formally. 
$S$ is $(\alpha, T)$-cohesive for $\alpha(c_1)=0.7$ ($\alpha(c_2)$, $\alpha(c_3)$ and $\alpha(c_4)$ are arbitrary) since the voters in $S$ can afford $T$ with their share of the budget, and for both of them $u(c_1)\geq \alpha(c_1)$. However, Equation \ref{eq:PB-PJR} is not satisfied: $\sum_{c\in W}(\max_{i\in S}u_i(c)) = 0.4+0.2=0.6$, while $\sum_{c\in T}\alpha(c)=0.7.$ This shows that in the given election instance, the bundle $W$ does not satisfy PJR.
\end{example}

Our definition of PJR for PB is rather different from its MWV variant. It requires some work to show that the proposed definition reduces to the definition of MWV-PJR under unit-cost assumption and approval preferences. First of all, let us recall the definition of MWV-PJR:
\begin{definition}[PJR for MWV \citep{sanchezfernandez2016}] \label{def:PJR_MWV}
 An approval based voting rule $\mathcal{R}$ satisfies PJR for MWV (MWV-PJR) if for every ballot profile $P$ and committee size $k$, the rule outputs a committee $W = \mathcal{R}(P,k)$ s.t.:
for every $\ell\leq k$ and every $\ell$-cohesive set of voters $S\subseteq N$, it holds that $|W\cap\left(\cup_{i\in S}A_i\right)|\geq \ell$, where a set $S$ is $\ell$-cohesive if $|S|\geq \ell\cdot \frac{n}{k}$ and $|\cap_{i\in S}A_i|\geq \ell$.
\end{definition}

We will need the following lemma:

\begin{lemma} \label{lemma_1}
Let $E=(N,C,\text{cost}, P, l)$ be an approval-PB-instance where for all projects $c$, $\text{cost}(c)=\frac{1}{k}$.
Then:
    (a) for given $\alpha:C\rightarrow[0,1] $ and $T\subseteq C$,  any group of voters $S\subseteq N$ that is $(\alpha, T)$-cohesive is also $\ell$-cohesive for $\ell=|T'|$ with $T'= \{c\in T: \alpha(c)> 0\}$; and
    (b) for every group $S$ that is $\ell$-cohesive there are $T\subseteq C$ with $|T|=\ell$ and $\alpha: C\rightarrow [0,1]$ with $\alpha(c)=1$ for all $c\in C$, such that $S$ is $(\alpha,T)$-cohesive.
\end{lemma}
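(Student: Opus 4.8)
The plan is to reduce both parts to two elementary observations. First, under the assumption $\text{cost}(c)=\frac{1}{k}$ and the default budget $l=1$, the size threshold in $(\alpha,T)$-cohesiveness becomes completely explicit: $\frac{\text{cost}(T)}{l}\cdot n = \frac{|T|}{k}\cdot n$, which is exactly the form $\ell\cdot\frac{n}{k}$ appearing in $\ell$-cohesiveness. Second, since we are in an approval-PB-instance, utilities take only the values $0$ and $1$, so the inequality $u_i(c)\geq\alpha(c)$ together with $\alpha(c)>0$ forces $u_i(c)=1$, i.e. $c\in A_i$. These two facts let me translate one notion of cohesiveness into the other by pure bookkeeping.

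For part (a), suppose $S$ is $(\alpha,T)$-cohesive and set $\ell=|T'|$ with $T'=\{c\in T:\alpha(c)>0\}$. For the size condition of $\ell$-cohesiveness, I would note that $T'\subseteq T$ gives $|S|\geq\frac{|T|}{k}\cdot n\geq\frac{|T'|}{k}\cdot n=\ell\cdot\frac{n}{k}$. For the intersection condition, I would take an arbitrary $c\in T'$; since $\alpha(c)>0$ and $S$ is $(\alpha,T)$-cohesive, every $i\in S$ satisfies $u_i(c)\geq\alpha(c)>0$, hence $u_i(c)=1$ and $c\in A_i$. Thus $T'\subseteq\bigcap_{i\in S}A_i$, giving $|\bigcap_{i\in S}A_i|\geq|T'|=\ell$. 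Both conditions of $\ell$-cohesiveness then hold.

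For part (b), let $S$ be $\ell$-cohesive, so $|\bigcap_{i\in S}A_i|\geq\ell$. I would simply choose any $T\subseteq\bigcap_{i\in S}A_i$ with $|T|=\ell$ and set $\alpha(c)=1$ for all $c\in C$. The size condition of $(\alpha,T)$-cohesiveness holds because $\frac{\text{cost}(T)}{l}\cdot n=\frac{\ell}{k}\cdot n=\ell\cdot\frac{n}{k}\leq|S|$, and the utility condition holds because every $c\in T$ lies in each $A_i$ for $i\in S$, so $u_i(c)=1=\alpha(c)$. Hence $S$ is $(\alpha,T)$-cohesive with the required $T$ and $\alpha$.

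The arguments are short, and the one step I would treat with care is the restriction to $T'$ in part (a). Projects $c\in T$ with $\alpha(c)=0$ impose no genuine constraint (the requirement $u_i(c)\geq 0$ is vacuous) and need not belong to the common approval set $\bigcap_{i\in S}A_i$, so they cannot be counted toward $\ell$. Discarding them is precisely what makes $\ell=|T'|$, rather than $|T|$, the correct parameter, and verifying that the size bound still survives this reduction is the only place where the inclusion $T'\subseteq T$ is actually used.
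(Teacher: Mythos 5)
Your proof is correct and follows essentially the same route as the paper's: in part (a) you exploit the binary utilities to show $T'\subseteq\bigcap_{i\in S}A_i$ and use $T'\subseteq T$ for the size bound, and in part (b) you pick $T\subseteq\bigcap_{i\in S}A_i$ with $|T|=\ell$ and the constant function $\alpha\equiv 1$, exactly as the paper does. The only (harmless) difference is that you make the implicit normalization $l=1$ explicit, which the paper leaves tacit.
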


\begin{theorem}\label{thm:PB-PJR}
MWV-PJR and PJR are equivalent in MWV-instances.
\end{theorem}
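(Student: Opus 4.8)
The plan is to prove the two implications separately, using Lemma~\ref{lemma_1} to translate between $(\alpha,T)$-cohesiveness and $\ell$-cohesiveness, and exploiting the fact that in an MWV-instance all utilities are binary. Throughout, fix an MWV-instance; since all projects share the same cost and the budget is $l=1$, write $k$ for the number with $\text{cost}(c)=\frac{1}{k}$, so that any affordable bundle is a committee of size at most $k$. The crucial preliminary observation is that, because $u_i(c)\in\{0,1\}$, for any bundle $W$ and group $S$ we have $\max_{i\in S}u_i(c)=1$ exactly when $c\in\bigcup_{i\in S}A_i$ and $0$ otherwise; hence $\sum_{c\in W}\max_{i\in S}u_i(c)=\left|W\cap\bigcup_{i\in S}A_i\right|$. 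This identifies the left-hand side of the inequality in Definition~\ref{def:PB-PJR} with the quantity counted in Definition~\ref{def:PJR_MWV}, so the two definitions differ only in how the threshold on the right is expressed.

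For the direction from MWV-PJR to PJR, I would take any $(\alpha,T)$-cohesive group $S$ and set $T'=\{c\in T:\alpha(c)>0\}$ and $\ell=|T'|$. By Lemma~\ref{lemma_1}(a), $S$ is $\ell$-cohesive. Note that $\ell\le k$: a cohesive group satisfies $|S|\ge\frac{\text{cost}(T)}{l}\cdot n=\frac{|T|}{k}\,n$, and since $|S|\le n$ this forces $|T|\le k$, hence $|T'|\le k$. MWV-PJR then yields $\left|W\cap\bigcup_{i\in S}A_i\right|\ge\ell$. Since $\alpha(c)\le 1$ for every $c$ and $\alpha(c)=0$ off $T'$, the right-hand side satisfies $\sum_{c\in T}\alpha(c)=\sum_{c\in T'}\alpha(c)\le|T'|=\ell$, and combining with the preliminary identity gives the required inequality.

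For the converse, I would take an $\ell$-cohesive group $S$ with $\ell\le k$ (the only case Definition~\ref{def:PJR_MWV} constrains). Lemma~\ref{lemma_1}(b) provides $T$ with $|T|=\ell$ and $\alpha\equiv 1$ for which $S$ is $(\alpha,T)$-cohesive. Applying PJR gives $\sum_{c\in W}\max_{i\in S}u_i(c)\ge\sum_{c\in T}\alpha(c)=|T|=\ell$, and the preliminary identity turns the left-hand side into $\left|W\cap\bigcup_{i\in S}A_i\right|$, establishing MWV-PJR.

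I expect the conceptual work to sit entirely in Lemma~\ref{lemma_1}, which is assumed here; given it, the argument is essentially bookkeeping. The only genuinely delicate points are the binary-utility identity (which makes the two left-hand sides literally equal) and checking that the parameter $\ell$ produced in each direction lies in the admissible range $\ell\le k$ so that the relevant definition applies --- together with the trivial edge case $T'=\emptyset$, where the PJR inequality holds vacuously because its right-hand side is an empty sum equal to $0$.
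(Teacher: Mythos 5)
Your proposal is correct and follows essentially the same route as the paper's own proof: both directions are handled via Lemma~\ref{lemma_1}(a) and~\ref{lemma_1}(b) respectively, combined with the binary-utility identity $\sum_{c\in W}\max_{i\in S}u_i(c)=|W\cap\bigcup_{i\in S}A_i|$, which the paper invokes implicitly when it says the PJR condition ``boils down to'' the committee-counting form. Your explicit checks that $\ell\le k$ and that the case $T'=\emptyset$ is vacuous are points the paper glosses over, but they are minor bookkeeping rather than a different argument.
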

\begin{proof}[Proof Sketch]
We show that on MWV-instances, Definitions \ref{def:PB-PJR} and \ref{def:PJR_MWV} are equivalent.
\fbox{PJR $\Rightarrow$MWV-PJR} Assume that a rule $\mathcal{R}$ satisfies PJR, and take an arbitrary MWV-instance $E$. 
Because $E$ satisfies the assumptions of unit-cost and approval based voting, the fact that $\mathcal{R}$ satisfies PJR boils down to the following: for all $S, \alpha: C\rightarrow[0,1]$, and  $T\subseteq C$ with $|S| \geq |T|\cdot \frac{n}{k}$ and for which $u_i(c)\geq \alpha(c) $ for all $ i\in S$ and for all $c \in T$, it is the case that $|\mathcal{R}(E)\cap(\cup_{i\in S} A_i)|\geq\sum_{c\in T}\alpha(c).$
Take arbitrary $S\subseteq N$ and $\ell\leq k$ and suppose that $S$ is $\ell$-cohesive. According to Lemma \ref{lemma_1}(b), there are $T\subseteq C$ with $|T|=\ell$ and $\alpha: C\rightarrow [0,1]$ with $\alpha(c)=1$ for all $c\in C$, such that $S$ is $(\alpha,T)$-cohesive. Because $\mathcal{R}$ satisfies PJR, this implies that $|\mathcal{R}(E)\cap(\cup_{i\in S} A_i)|\geq\sum_{c\in T}\alpha(c)$. However, because of our choice of $T$ and $\alpha$, we know that $\sum_{c\in T}\alpha(c)=|T|=\ell$, so $|\mathcal{R}(E)\cap(\cup_{i\in S} A_i)|\geq \ell$, which shows that $\mathcal{R}$ satisfies MWV-PJR.
\fbox{MWV-PJR $\Rightarrow$PJR}
Assume that a rule $\mathcal{R}$ satisfies MWV-PJR. Take arbitrary MWV-instance $E$, and suppose that a group $S$ is $(\alpha,T)$-cohesive. Then according to Lemma \ref{lemma_1}(a), when we take $T'= \{c\in T: \alpha(c)> 0\}$, $S$ is $\ell$-cohesive for $\ell=|T'|$. 
Because $R$ satisfies MWV-PJR, it follows that $|\mathcal{R}(E)\cap(\cup_{i\in S} A_i)|\geq \ell = |T'|$. By definition of $T'$, $|T'|\geq \sum_{c\in T'}\alpha(c) = \sum_{c\in T}\alpha(c)$, so $|\mathcal{R}(E)\cap(\cup_{i\in S} A_i)|\geq\sum_{c\in T}\alpha(c)$ as desired.
\end{proof}

\begin{remark}
To our knowledge, besides Definition \ref{def:PB-PJR}, the only generalization of MWV-PJR to PB is from \cite{Aziz2018}. 
They define an axiom called \textit{Strong-BPJR-L} (where L stands for the budget limit, to which we refer here as $\ell$)
that requires the following: For a budget $l$, a bundle  $W$ satisfies Strong-BPJR-L if for all $\ell \in [1, l]$ there does not exist a set of voters $S\subseteq N$ with $|S| \geq \ell \frac{n}{l}$, such that cost$(\cap_{i\in S}A_i) \geq \ell$ but cost$((\cup_{i\in S} A_i)\cap W)< \ell$. It is possible to generalise this definition further to allow arbitrary utilities instead of approval votes. However, note that the requirement in this definition is not that for every $\ell$-cohesive $S$ the \textit{utility} of the projects they all approve that are selected is at least $\ell$, but rather the \textit{cost} of this set of projects. Although this is indeed a generalisation of MWV-PJR, as is shown by \cite{Aziz2018}, we consider that the aim of PJR is to ensure a certain level of utility for every group of voters, rather than a certain cost. Definition \ref{def:PB-PJR} is equivalent to Strong-BPJR-L when assuming that a project's cost is directly proportional to a voter's utility from it.
\end{remark}

Like for EJR (Definition \ref{def:EJR_PB-up-to-one}), we can add an up-to-one-project condition to PJR: 

\begin{definition}[Proportional justified representation up to one project]\label{def:PB-PJR-up-to-one}
    A rule $\mathcal{R}$ satisfies \textit{proportional justified representation up to one project} (PJR-up-to-one) if for each PB-instance $E$ and each $(\alpha,T)$-cohesive group of voters $S$, 
    $\sum_{c\in \mathcal{R}(E)}(\max_{i\in S}u_i(c))\geq\sum_{c\in T}\alpha(c)$ or for some  $a \in C$  it holds that $\sum_{c\in \mathcal{R}(E)\cup\{a\}}(\max_{i\in S}u_i(c))>\sum_{c\in T}\alpha(c).$
\end{definition}

\subsection{Laminar Proportionality in PB} \label{sec:LP-in-PB}
The basic idea of \textit{LP} for MWV \citep{Peters2020_lim_of_welf} is that if we know about a strict separation between different parties, we can divide the chosen projects proportionally over the parties. We generalize this notion to PB in the approval voting setting by taking the budget $l$ instead of the bundle size $k$, and by using the cost of each project instead of unit-cost. 

\begin{definition}[Laminar PB-instances]\label{def:laminar_PB}
An approval-PB-instance $(P,l)$ is \textit{laminar} if either: (1) $P$ is unanimous and $\text{cost}(C(P))\geq l$; (2) there is $c\in C(P)$ such that $c\in A_i$ for all $A_i\in P$, the profile $P_{-c}$ (i.e., $P$ once we remove $c$) is not unanimous and instance $(P_{-c}, l-\text{cost}(c))$ is laminar (with $P_{-c}=(A_1\backslash \{c\}, ..., A_n\backslash\{c\})$); or (3) There are two laminar PB-instances $(P_1, l_1)$ and $(P_2, l_2)$ with $C(P_1)\cap C(P_2)=\emptyset$ and  $|P_1|\cdot l_2 = |P_2|\cdot l_1$ such that $P=P_1+P_2$ and $l=l_1+l_2$.
\end{definition}

\begin{example}\label{ex:laminar_PB}
\begin{table}[t]
    \centering
        \setlength{\tabcolsep}{20pt}
    \begin{adjustbox}{width=0.6\columnwidth, center=\columnwidth}
    \begin{tabular}{c c c }
    \cline{1-2}
        \multicolumn{2}{|c|}{$c_3$, 3} &   \\ \hline
        \multicolumn{2}{|c|}{\cellcolor[HTML]{C1C1C1}$c_2$, 3} &   \multicolumn{1}{c|}{$c_5$, 4}  \\ \hline
        \multicolumn{2}{|c|}{\cellcolor[HTML]{C1C1C1}$c_1$, 2} &   \multicolumn{1}{c|}{\cellcolor[HTML]{C1C1C1}$c_4$, 2} \\ \hline
        \multicolumn{3}{|c|}{\cellcolor[HTML]{C1C1C1}$c_6$, 1}  \\ \hline
        $v_1$ & $v_2$ & $v_3$ 
    \end{tabular}
    \end{adjustbox}
    \caption{Example of a laminar proportional bundle $W$ (shaded) in a laminar election instance.  Each column represents the approval set of a voter (written beneath it), and each box shows a project with its cost.  E.g., voter $v_2$ approves $c_6,c_1,c_2,$ and $c_3$ and cost$(c_5)=4$. 
    }
    \label{tab:ex_lamprop}
\end{table}
The instance $P$ in Table \ref{tab:ex_lamprop} associated with the budget $l=10$ is laminar. The instance $P_1$ with $v_1$ and $v_2$ and projects $c_1, c_2$, and $c_3$ with limit $l_1 =6$ satisfies the first item of Definition \ref{def:laminar_PB}, as does the instance $P_2$ with only voter $v_3$ and projects $c_4$ and $c_5$, and limit $l_2=3$. Those two instances can be added by Definition \ref{def:laminar_PB}, item 3, since $|P_1|\cdot l_2 =2\cdot3 = 1\cdot 6 = |P_2|\cdot l_1$. Then $c_6$ can be added by Definition \ref{def:laminar_PB}, item 2, to get $P$ with limit $l=6+3+\text{cost}(c_6) = 10$.
\end{example}

\begin{definition}[Laminar proportionality]\label{def:lamprop_PB}
    A rule $\mathcal{R}$ satisfies laminar proportionality (\textit{LP}) if for every laminar PB-instance $E=(P,l)$, $\mathcal{R}(E) = W$ where $W$ is a \textit{laminar proportional bundle}, i.e.: (1) if $P$ is unanimous, then $W \subseteq C(P)$ (if everyone agrees, then part of the projects they agree on is chosen); (2) if there is a unanimously approved project $c$ s.t. $(P_{-c}, l-\text{cost}(c))$ is laminar, then $W = W'\cup \{c\}$ where $W'$ is laminar proportional for $(P_{-c}, l-\text{cost}(c))$; or (3) If $P$ is the sum of laminar PB-instances $(P_1, l_1)$ and $(P_2, l_2)$, then $W = W_1 \cup W_2$ where $W_1$ is laminar proportional for $(P_1, l_1)$ and $W_2$ is laminar proportional for $(P_2, l_2)$.
\end{definition}
It is trivial that in case of unit-cost and budget $k$, these definitions are equivalent to the corresponding MWV definitions.

\begin{example}
Elaborating on Example \ref{ex:laminar_PB}, bundle $W = \{c_1, c_2, c_4, c_6\}$ (grey in Table \ref{tab:ex_lamprop}) is laminar proportional in that instance  with a budget of $l=10$. In $(P_1, l_1)$, $\{c_1, c_2\}$ is laminar proportional, as is $\{c_4\}$ in $(P_2, l_2)$. Hence, $\{c_1, c_2, c_4\}$ is laminar proportional in $(P_1+P_2, l_1+l_2)$, and $\{c_1, c_2, c_4, c_6\}$ is laminar proportional in $(P, l)$.
\end{example}


\section{Proportionality Properties of Rules}

\begin{table*}[t]
\begin{adjustbox}{width=1.1\textwidth,center=\textwidth}
\begin{tabular}{|l|l|l|l|l|l|l|}
\hline
            & \multicolumn{2}{c|}{\textbf{PAV}}
                    & \multicolumn{2}{c|}{\textbf{Phragm\'{e}n}}
                            & \multicolumn{2}{c|}{\textbf{Rule X} }                \\
    \hline
     & MWV & PB & MWV & PB & MWV & PB\\
    \hline
\textbf{core} & 
 \multicolumn{2}{c|}{\xmark\citep{Aziz2017}} 
 & \multicolumn{2}{c|}{\xmark\citep{Brill2017}} & \xmark \citep{Peters2020_lim_of_welf}& \xmark \cite{peters2020proportionalPB}  \\ \hline
\textbf{EJR} & 
\checkmark \citep{Aziz2017}& \xmark  \citep{peters2020proportionalPB} &\multicolumn{2}{c|}{\xmark\citep{Brill2017}} &  \checkmark \citep{Peters2020_lim_of_welf} & \checkmark up-to-one \citep{peters2020proportionalPB}  \\ \hline
\textbf{PJR} & 
\checkmark  \cite{sanchezfernandez2016} & \cellcolor[HTML]{C0C0C0} \xmark (Prop. \ref{prop:PAV-PJR})
& \checkmark  \citep{Brill2017} &\cellcolor[HTML]{C0C0C0} \checkmark (Prop. \ref{prop:Phragmen-PJR})& \checkmark  \citep{Peters2020_lim_of_welf}& \cellcolor[HTML]{C0C0C0} \checkmark up-to-one (Prop. \ref{prop:RuleX-PJR_PB})  \\ \hline
\textbf{p.bility} &
\multicolumn{2}{c|}{\xmark\citep{Peters2020_lim_of_welf}} 
& \checkmark \citep{Peters2020_lim_of_welf}& \cellcolor[HTML]{C0C0C0} \checkmark (Prop. \ref{prop:Phragmen-priceable})& \checkmark \citep{Peters2020_lim_of_welf} & \checkmark  \citep{peters2020proportionalPB}\\ \hline
\textbf{LP} &
\multicolumn{2}{c|}{\xmark\citep{Peters2020_lim_of_welf}} 
& \checkmark \citep{Peters2020_lim_of_welf} &\cellcolor[HTML]{C0C0C0}\xmark (Prop. \ref{prop:Phragmen-negative})& \checkmark \citep{Peters2020_lim_of_welf} &\cellcolor[HTML]{C0C0C0}\xmark (Prop. 
\ref{prop:Phragmen-negative}) \\ \hline 
\end{tabular}

\end{adjustbox}
\caption{Three rules and the properties they satisfy: \checkmark indicates satisfaction, \xmark ~failure. Shaded entries indicate new results, references to the corresponding propositions or literature are included for each entry. Recall that PAV and Phragm\'{e}n are assessed on approval-PB-instances.
}
\label{tab:rules_and_axioms_PB}
\end{table*}
Table \ref{tab:rules_and_axioms_PB} summarizes the findings of this section. 

The literature already provides several results about PAV: it does not satisfy the core, priceability, or LP in MWV (and hence not in PB either). In MWV-instances PAV satisfies PJR, but it does not in PB: 
\begin{proposition}\label{prop:PAV-PJR}
PAV does not satisfy PJR.
\end{proposition}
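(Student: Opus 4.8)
The plan is to refute this universal claim by exhibiting a single counterexample: a PB-instance on which the bundle returned by PAV violates the PJR inequality of Definition \ref{def:PB-PJR} for a suitable cohesive group. The instance of Table \ref{tab:ex_1} already furnishes such a witness, so I would organize the argument around it rather than build a fresh instance. The conceptual reason a counterexample must exist is that PAV maximizes a harmonic welfare score that is blind both to project costs and to the precise utility levels tracked by $\alpha$: a group can clear the cost-based cohesiveness threshold of Definition \ref{def:alpha-T-cohesiveness} while being assigned, in the winning bundle, projects whose summed maximal utility falls short of $\sum_{c\in T}\alpha(c)$.

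First, I would pin down PAV's output. Since PAV consumes approval ballots, one binarizes the profile at the stated threshold and then compares the PAV-score across all affordable bundles. As recorded in Example \ref{ex:rules}, the set $W=\{c_2,c_3\}$ attains the maximal score ($4.5$) among affordable bundles, so $\mathcal{R}(E)=\{c_2,c_3\}$. This is the one genuinely computational step, and it is where I expect the main obstacle to lie: one must confirm that no other affordable bundle scores higher, which reduces to comparing the harmonic welfare of the handful of two-element bundles, larger bundles being infeasible under $l=1$. Everything after this is a short verification.

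Next, I would produce the cohesive group that exposes the failure. Take $S=\{v_1,v_2\}$, $T=\{c_1\}$, and $\alpha(c_1)=0.7$ (with $\alpha$ arbitrary off $T$). I would check $(\alpha,T)$-cohesiveness directly: the size condition $|S|=2\geq \frac{\mathrm{cost}(c_1)}{l}\cdot n=\frac{0.4}{1}\cdot 4=1.6$ holds, and the threshold condition $u_i(c_1)\geq 0.7$ holds for both members of $S$ since $u_{v_1}(c_1)=1$ and $u_{v_2}(c_1)=0.7$.

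Finally, I would check that Equation \ref{eq:PB-PJR} fails for this $S$. The right-hand side is $\sum_{c\in T}\alpha(c)=0.7$, whereas the left-hand side is $\sum_{c\in W}\max_{i\in S}u_i(c)=\max_{i\in S}u_i(c_2)+\max_{i\in S}u_i(c_3)=0.4+0.2=0.6<0.7$. Hence $W=\mathcal{R}(E)$ is not a PJR bundle for $E$, and therefore PAV does not satisfy PJR. If a self-contained witness were preferred over a reference to Table \ref{tab:ex_1}, the same three-step template—compute PAV's bundle, name a cohesive $(\alpha,T)$-group, and verify the inequality fails—applies to any small instance engineered around the cost/utility mismatch described above.
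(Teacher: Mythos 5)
There is a genuine gap: you run PAV on one instance but evaluate PJR on a different one. PAV is defined only on approval-PB-instances (see the remark following the rule definitions, and the remark after Definition \ref{def:Priceability_PB}, which stipulates that when assessing whether PAV satisfies an axiom, the axiom is considered only with respect to approval-PB-instances). The instance of Table \ref{tab:ex_1} has cardinal utilities, so PAV is not defined on it; the threshold of $0.3$ used in Example \ref{ex:rules} is an external binarization device, not part of the rule. Once you commit to the binarized approval instance as the actual input---the only instance on which $\{c_2,c_3\}$ \emph{is} PAV's output---you must also evaluate Definition \ref{def:PB-PJR} with those approval utilities, and then the violation disappears: for $S=\{v_1,v_2\}$ and $T=\{c_1\}$ both voters approve $c_2$, so $\sum_{c\in W}\max_{i\in S}u_i(c)=1+0=1\geq\alpha(c_1)=\sum_{c\in T}\alpha(c)$, since $\alpha(c_1)\leq 1$. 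Your $0.6<0.7$ computation mixes cardinal utilities (on the PJR side) with approval ballots (on the PAV side); this is precisely why the paper, in Example \ref{ex:PB-PJR}, only claims that the \emph{bundle} $\{c_2,c_3\}$ fails PJR in the cardinal instance, and does not offer that example as a proof of Proposition \ref{prop:PAV-PJR}.

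The paper's actual proof stays inside the approval setting: it invokes the Onetown instance of \cite{peters2020proportionalPB_arxiv}, an approval-PB-instance with non-unit costs, in which the Leftside voters form a $T$-cohesive group for $T=\{L_1,L_2,L_3\}$ yet PAV's winning bundle contains only two projects approved by some member of that group, violating approval-PJR (Definition \ref{def:PJR_PB_approval}). To repair your argument you would need a witness of that kind: an instance with genuinely $\{0,1\}$ utilities in which the failure is driven by cost asymmetries (PAV's blindness to costs), not by utility thresholds---the latter cannot be exploited once utilities are binary, because cohesiveness then forces the relevant $\max_{i\in S}u_i(c)$ terms to equal $1$ whenever the group's approved projects are selected.
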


We turn now to our analysis of Phragm\'{e}n and Rule X.
\begin{proposition}\label{prop:Phragmen-PJR}
Phragm\'{e}n satisfies PJR.
\end{proposition}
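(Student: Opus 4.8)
The plan is to adapt the classical ``money/time'' argument for sequential Phragm\'{e}n from the MWV setting to PB. First I would simplify the goal in the approval setting. Since Phragm\'{e}n is assessed on approval-PB-instances, $u_i(c)\in\{0,1\}$, so $\max_{i\in S}u_i(c)$ equals $1$ exactly when some member of $S$ approves $c$; hence the left-hand side of Equation \ref{eq:PB-PJR} is just the number $|\mathcal{R}(E)\cap\bigcup_{i\in S}A_i|$ of selected projects approved by at least one member of $S$, which I abbreviate $W_S$. Moreover, for any project with $\alpha(c)>0$ the cohesiveness condition $u_i(c)\ge\alpha(c)$ forces every voter in $S$ to approve $c$; setting $T'=\{c\in T:\alpha(c)>0\}$, the whole group $S$ commonly approves $T'$, while $\sum_{c\in T}\alpha(c)\le|T'|$ and $\text{cost}(T')\le\text{cost}(T)\le\frac{|S|}{n}\cdot l$. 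It therefore suffices to prove $|W_S|\ge\sum_{c\in T}\alpha(c)$, which I attack by contradiction: if it fails then, since $|W_S|<|T'|$, some commonly-approved project $c^*\in T'$ is left unselected.

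Next I would set up the accounting for Phragm\'{e}n. Every voter earns currency at unit rate, so by time $t$ the group $S$ has earned $|S|\cdot t$ in total, while the whole electorate has earned $n\cdot t$ and has spent exactly the cost of the projects bought so far. A voter only ever pays for projects she approves, so all payments made by members of $S$ go to projects in $W_S$, and their total is at most $\text{cost}(W_S)$. Consequently, at the termination time $t^*$ the group $S$ still holds at least $|S|\,t^*-\text{cost}(W_S)$ units of currency, and the approvers of $c^*$ (a superset of $S$) hold at least as much. The aim is to show this leftover reaches $\text{cost}(c^*)$ before the budget is used up, so that Phragm\'{e}n's greedy step would have added a project approved by $S$, contradicting $|W_S|<\sum_{c\in T}\alpha(c)$. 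The two ingredients driving this are the cohesiveness inequality $|S|\ge\frac{\text{cost}(T')}{l}\,n$, which turns earned time into affordability (once $t$ reaches the per-capita value $l/n$, the quantity $|S|\,t$ already covers $\text{cost}(T')$), and the fact that Phragm\'{e}n buys well-supported, cheap projects first: every project in $T'$ enjoys the full support of $S$, so choosing $c^*$ to be the cheapest unselected project in $T'$ keeps $\text{cost}(c^*)$ small and the comparison $|S|\,t^*-\text{cost}(W_S)\ge\text{cost}(c^*)$ tractable.

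The main obstacle is the termination condition, which differs essentially from MWV. In MWV the committee is always full (total cost exactly $k$), giving the clean bound $t^*\ge k/n$ that powers the contradiction; in PB the process stops as soon as the next affordable project would overshoot the budget $l$, so the budget may be left partially unspent and no such lower bound on $t^*$ is immediate. I would therefore split into two cases. If $c^*$ is not yet affordable at $t^*$, then $S$'s leftover $|S|\,t^*-\text{cost}(W_S)$ is below $\text{cost}(c^*)$, which I would contradict by exploiting the greedy order in which the (fully $S$-supported) projects of $T'$ are purchased relative to the narrowly-supported projects that could otherwise inflate $\text{cost}(W_S)$. If instead $c^*$ is affordable but blocked by overshoot, then $\text{cost}(W)>l-\text{cost}(c^*)$, and combined with the global balance $n\,t^*\ge\text{cost}(W)$ this pushes $t^*$ high enough that the accumulated currency again exceeds $\text{cost}(c^*)$. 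The delicate point throughout is that, unlike in MWV, the bound on what $S$ spends is a \emph{cost} ($\text{cost}(W_S)$) rather than a count, so it is precisely the cheap-and-well-supported-first behaviour of Phragm\'{e}n that must be invoked to guarantee that enough $S$-approved projects are selected before the budget is exhausted.
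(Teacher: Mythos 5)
Your setup is on the right track and matches the paper's proof in its skeleton: the reduction to approvals, the sets $T'$ and $W_S$, the earn/spend accounting for $S$, the use of cohesiveness, and the recognition that the PB termination condition is the essential new obstacle. The gaps are at the two critical junctures. First, your case (b) does not produce a contradiction: showing that ``the accumulated currency again exceeds $\text{cost}(c^*)$'' contradicts nothing, because Phragm\'{e}n is \emph{defined} to stop when an affordable project would overshoot the budget, no matter how much currency is in the system. The contradiction must be numerical and must target the budget $l$, not $\text{cost}(c^*)$. At termination one has the full accounting $n t^* = \text{cost}(W) + \text{cost}(c^{\text{stop}}) + y > l$, where $c^{\text{stop}}$ is the stopping project whose supporters hold exactly its cost and $y \geq 0$ is what non-supporters hold; your weaker balance $n t^* \geq \text{cost}(W)$ only yields $n t^* > l - \text{cost}(c^*)$, which is not enough. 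Against this lower bound one needs the upper bound: if $S$ spent its money only on projects of $T'$, then $|S| t^* \leq \text{cost}(T')$, so cohesiveness gives $n t^* \leq \text{cost}(T') \cdot \frac{n}{|S|} \leq l$. The clash of these two bounds is the paper's second case, and neither bound appears in your plan.

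Second, the step you defer in case (a) --- ``exploiting the greedy order'' so that $\text{cost}(W_S)$ cannot soak up $S$'s earnings --- is the heart of the whole proof, not a finishing detail, and it is needed to justify the hypothesis of the budget argument above as well, so it cannot be confined to one branch. This is why the paper splits cases differently: not on whether $c^*$ is affordable at $t^*$, but on whether $S$ paid anything toward projects outside $T'$. That dichotomy is exactly what the greedy argument resolves: every payment by members of $S$ toward a project outside $T'$ adds a new project to $W_S$ (they only pay for projects they approve), so keeping $|W_S| < |T'|$ forces the average group payment to such projects to exceed the average cost of projects in $T'$; but greediness forces every such group payment to stay below the cost of every still-unselected project of $T'$, since otherwise $S$ alone could already afford that commonly-approved project and Phragm\'{e}n would have bought it earlier. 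That tension eliminates the ``$S$ wasted money outside $T'$'' case, and the budget clash eliminates the other. As written, your proposal names these ingredients but executes neither, and its case (b) aims at an inequality that cannot yield a contradiction.
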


\begin{proof}
Assume towards a contradiction that there exist a group of voters $S\subseteq N$, a set of projects $T\subseteq C$, and a function $\alpha:C\rightarrow [0,1]$ such that $S$ is $(\alpha,T)$-cohesive, and for this $S$, $\alpha$, and $T$, the winning bundle $W$ of Phragm\'{e}n does not contain enough projects that voters from $S$ like enough: $\sum_{c\in W}(\max_{i\in S}u_i(c)) < \sum_{c\in T}\alpha(c)$. Note that in approval-PB-instances this boils down to $|W\cap \cup_{i\in S}A_i|<\sum_{c\in T}\alpha(c)$. Because of $(\alpha,T)$-cohesiveness, for every voter $i\in S$ and each project $c\in T$, $u_i(c)\geq \alpha (c)$, 
so either $\alpha(c)=0$ or $c\in A_i$, and therefore 
$\sum_{c\in T}\alpha(c) \leq |T\cap \cap_{i\in S}A_i|$. We 
write $T'$ for $T\cap \cap_{i\in S}A_i$, and $W'$ for $W\cap \cup_{i\in S}A_i$. Hence, $|W'|<|T'|\leq |T|$. Let $t$ be the moment when the rule stops:  a project $c$ is reached that would overshoot the budget. Clearly, cost$(W)+\text{cost}(c)>1$, but cost$(W)\leq 1$. 
Let $x$ be the amount of virtual money earned by all voters so far, so 
\begin{equation}\label{eq:total}
    t\cdot n = x = \text{cost}(W)+\text{cost}(c)+y,
\end{equation}
where $y\geq 0$ is the money that non-supporters of $c$ have earned in the meantime. 
Because $|W'|<|T'|$, there must be some project in $T$ that is not in $W$. The voters in $S$ together have earned $\frac{x}{n}\cdot |S|$,  and because $S$ is $(\alpha,T)$-cohesive, $|S|\geq \text{cost}(T)\cdot n$, so 
\begin{equation}\label{eq:money_S}
   \frac{x}{n}\cdot |S| \geq \text{cost}(T)\cdot n\cdot \frac{x}{n}= \text{cost}(T)\cdot x. 
\end{equation}
 From (\ref{eq:total}) and the fact that cost$(W)+\text{cost}(c)>1$, it follows that $x>1+y$ (and 
 $x>1$).  From (\ref{eq:money_S}), it follows that the voters in $S$ have earned enough together at time $t$ to buy all projects from $T$ (and therefore from $T'$), but have not done so. Hence, either they have also paid for projects not in $T'$, or, if they only spent their money on projects in $T'$, $c$ must be in $T'$, i.e. they do have the virtual money to buy $T'$ but it would overshoot the budget. In the first case, for every project not in $T'$ that members of $S$ pay for, $|W'|$ grows by one (since they can only pay for projects they approve). In order to keep $|W'|<|T'|$, the mean amount of money they have paid at time $t$ for such a project must be greater than the mean cost of projects in $T'$. Otherwise, the number of projects they would pay  for (that they approve of and that are selected) would exceed the number of projects in $T'$. However, for each project not in $T'$ that voters from $S$ pay for, they should (as a group) pay less than the cost of any project from $T'$ not yet selected. Otherwise, they would have paid earlier for a cheaper project from $T'$. This is a contradiction. 
 Hence, the voters from $S$ only spent their money on projects from $T'$, and $c\in T'$. Let us assume that $c$ is the last project from $T'$ that is not yet selected.\footnote{If there are more projects from $T'$ not yet selected, we get that $x\leq \text{cost}(T')\cdot \frac{n}{|S|}$, so $x\leq 1$ still holds.} Because the rule stops exactly when $c$ can be paid by its supporters, we know that at that point in time, the voters in $S$ have earned exactly cost$(T')$ units of money, so $t\cdot |S| = \text{cost}(T')$. Hence, the total amount of money earned at time $t$ is $x=t\cdot n = \text{cost}(T')\cdot \frac{n}{|S|}$. Because $S$ is $(\alpha,T)$-cohesive, cost$(T')\leq \text{cost}(T)\leq \frac{|S|}{n}$, so 
 $
     x= \text{cost}(T')\cdot \frac{n}{|S|} \leq \frac{|S|}{n}\frac{n}{|S|}=1.
 $
 However, we also had that $x>1+y>1$. Contradiction.
\end{proof}
\begin{proposition}\label{prop:Phragmen-priceable}
    \hyperlink{link:prop:Phragmen-priceable}{Phragm\'{e}n satisfies priceability.}
\end{proposition}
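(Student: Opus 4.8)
The plan is to build the required price system directly from the run of Phragm\'{e}n on the given approval-PB-instance $E$. Let $t^*$ denote the time at which the rule halts, and for each voter $i$ and project $c$ let $p_i(c)$ be the amount that $i$ pays for $c$ during the process; thus $p_i(c)=0$ whenever $c\notin\mathcal{R}(E)$ or $c\notin A_i$, and for a selected project $c$ the payments of its supporters are exactly their current balances at the instant $c$ is bought. I would set the initial budget to $b=n\cdot t^*$, so that each voter's per-capita budget $\frac{b}{n}$ equals $t^*$, the total currency each voter has earned by the time the rule stops. With this choice, $\frac{b}{n}-\sum_{c'\in\mathcal{R}(E)}p_i(c')$ is precisely the unspent balance $\text{bal}_i(t^*)$ of voter $i$ at the halting time.

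First I would verify that $\mathbf{ps}=(b,(p_i)_{i\in N})$ is a legal price system in the sense of Definition~\ref{def:Pricesystems_PB}. Payments are only ever made by approvers, so $u_i(c)=0$ forces $p_i(c)=0$; and since a voter never pays out more than she has earned, $\sum_{c\in C}p_i(c)\le t^*=\frac{b}{n}$. For the constraint $b\ge 1$ I would invoke the halting condition: the rule stops because the next affordable project $c^*$ would overshoot, i.e.\ $\text{cost}(\mathcal{R}(E))+\text{cost}(c^*)>l=1$. Since the total earned $n t^*$ equals the total spent $\text{cost}(\mathcal{R}(E))$ plus all remaining balances, and the supporters of $c^*$ alone hold balance $\text{cost}(c^*)$, we obtain $b=n t^*\ge\text{cost}(\mathcal{R}(E))+\text{cost}(c^*)>1$. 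The degenerate case in which the rule instead exhausts all projects is separate and trivial: the support conditions for non-selected projects become vacuous, and $b$ may simply be taken large enough to respect the per-voter budget while staying at least $1$. The support conditions~(1) and~(2) of Definition~\ref{def:Priceability_PB} are then immediate from the construction, since each selected project is paid exactly its cost at the moment it is bought, and no currency is ever spent on a non-selected project.

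The crux is support condition~(3): for every non-selected project $c$ we need the supporters of $c$ to hold total unspent balance at most $\text{cost}(c)$ at time $t^*$, that is $\sum_{i:\,c\in A_i}\text{bal}_i(t^*)\le\text{cost}(c)$, which (given our choice of $b$) is exactly the required inequality. The main obstacle is to establish this against the continuous, sequential dynamics of the rule. The idea is that the moment the supporters of any not-yet-selected project first accumulate total balance equal to its cost, the rule must act: it either selects that project---impossible for a non-selected $c$---or it halts because the purchase overshoots, which would force the stopping time to be that very moment. Hence no non-selected project can reach affordability strictly before $t^*$, and at $t^*$ itself such a project can carry balance at most its cost (with equality only if it becomes affordable simultaneously with $c^*$). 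Making this precise requires tracking balances across the finitely many selection events: each purchase resets only the balances of the purchased project's own supporters, so a non-selected $c$ that shares supporters with some bought project has its balance knocked down and must regrow, and one argues by induction over these events that the supporter balance of $c$ stays below $\text{cost}(c)$ all the way up to $t^*$. Once this claim is in hand, condition~(3) follows immediately, completing the verification that $\mathbf{ps}$ supports $\mathcal{R}(E)$.
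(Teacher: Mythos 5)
Your construction is exactly the one in the paper's proof: take $b = n t^*$ with $t^*$ the halting time, let the payments be those made during the run of Phragm\'{e}n, and observe that a non-selected project's supporters can never hold more unspent balance than its cost, since otherwise the rule would have bought it. Your additional care about the $b\geq 1$ constraint and the no-overshoot degenerate case makes the argument tighter than the paper's, but the approach is the same.
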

\cite{Peters2020_lim_of_welf} show that Phragm\'{e}n does not satisfy EJR in MWV- and, therefore, PB-instances. Since the core implies EJR, Phragm\'{e}n does not satisfy the core neither in MWV- nor in PB-instances.
\begin{proposition}\label{prop:Phragmen-negative}
    \hyperlink{link:prop:Phragmen-negative}{Phragm\'{e}n and Rule X do not satisfy LP.}
\end{proposition}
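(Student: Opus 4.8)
The plan is to refute LP with a single laminar PB-instance on which both Phragm\'{e}n and Rule X return the same non-laminar-proportional bundle; handling both rules with one instance is possible because the same cost asymmetry diverts their (otherwise quite different) dynamics. The instance I would use has three voters and three projects: a project $c$ approved by all three voters, a project $a$ approved only by $v_1,v_2$, and a project $b$ approved only by $v_3$, with $\mathrm{cost}(c)$ comparatively large and $\mathrm{cost}(a),\mathrm{cost}(b)$ small. I would build it bottom-up via Definition~\ref{def:laminar_PB}: start from the two unanimous singleton instances ($\{v_1,v_2\}$ on $\{a\}$ and $\{v_3\}$ on $\{b\}$) using clause~(1); combine them with clause~(3), choosing the sub-budgets $l_1,l_2$ so that the split equation $|P_1|\cdot l_2=|P_2|\cdot l_1$ holds (here $2\,l_2=l_1$) and $\mathrm{cost}(a)\ge l_1$, $\mathrm{cost}(b)\ge l_2$; and finally prepend the universally approved project $c$ through clause~(2), noting that the sub-profile it is added to is non-unanimous. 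The total budget $l$ is then fixed by these choices.

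The first real step is to read off, from Definition~\ref{def:lamprop_PB}, what the laminar proportional bundles are. Since $c$ enters via clause~(2), that clause forces $c$ into \emph{every} laminar proportional bundle. Hence it suffices to show that each rule instead outputs a bundle omitting $c$, which will be $\{a,b\}$. For Phragm\'{e}n I would trace the continuous process: the cheap projects $b$ and $a$ reach affordability earliest (their supporters accumulate their small costs before the three supporters of $c$ jointly accumulate $\mathrm{cost}(c)$), so $b$ and $a$ are bought first; I then check that when $c$ first becomes affordable, adding it would push the total past $l$, so the process halts at $\{a,b\}$. For Rule X I would compute, for each project, the minimal $\rho$ at which it is affordable from the per-voter endowments (each a $1/n$ share of $l$): the two cheap projects have strictly smaller $\rho$ and are selected first, after which the residual budget of $c$'s supporters sums to strictly less than $\mathrm{cost}(c)$, so $c$ is never $\rho$-affordable and Rule X also terminates at $\{a,b\}$. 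In both cases the output misses the mandatory $c$ and is therefore not laminar proportional.

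The main obstacle is the simultaneous tuning of the costs: they must satisfy the laminarity relations exactly (the split equation together with $\mathrm{cost}(C(P_i))\ge l_i$ at each level) while also guaranteeing that both rules buy $a$ and $b$ before $c$ and are then unable to afford $c$. The ordering conditions turn out to be essentially the same inequalities for both rules (roughly $3\,\mathrm{cost}(b)<\mathrm{cost}(c)$ and $3\,\mathrm{cost}(a)<2\,\mathrm{cost}(c)$), and the crucial ``pricing-out'' condition is the single inequality $\mathrm{cost}(a)+\mathrm{cost}(b)+\mathrm{cost}(c)>l$, which for Phragm\'{e}n forces a budget overshoot and for Rule X exhausts the supporters' money. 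Reconciling these inequalities with the laminarity equalities is the only delicate part; a small $a$ and $b$ together with a moderately large $c$ satisfy all of them, after which verification is routine arithmetic. It is worth stressing why the phenomenon is genuinely about PB: under unit costs a universally approved candidate can never be crowded out of the budget by cheaper ones, which is precisely why both rules satisfy LP in MWV yet fail it here.
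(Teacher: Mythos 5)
Your proposal is correct and takes essentially the same approach as the paper: the paper's own proof also exhibits a single laminar PB-instance with cheap party-specific projects (cost $0.1$ each, split between $\{v_1,v_2\}$ and $\{v_3\}$) and one expensive unanimously approved project (cost $0.7$, budget $1$), on which both Phragm\'{e}n and Rule X spend the budget on the cheap projects and thus omit the unanimous project that every laminar proportional bundle must contain. Your three-project version is just a leaner instantiation of the same mechanism, and your constraint system is indeed satisfiable (e.g.\ $l_1=2$, $l_2=1$, $\mathrm{cost}(a)=2.5$, $\mathrm{cost}(b)=1.5$, $\mathrm{cost}(c)=9$, $l=12$).
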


\begin{proof}[Proof sketch]
LP requires any affordable unanimously approved project to be selected, but Phragm\'{e}n and Rule X do
not necessarily select those if their cost is high enough compared to the other projects.
\end{proof}

\begin{proposition}\label{prop:RuleX-PJR_PB}
    Rule X satisfies PJR-up-to-one.
\end{proposition}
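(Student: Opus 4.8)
The plan is to obtain PJR-up-to-one as an immediate consequence of the already-available fact, established by \cite{peters2020proportionalPB}, that Rule X satisfies EJR-up-to-one (Definition \ref{def:EJR_PB-up-to-one}). The bridge between the two axioms is a single monotonicity observation: for any group $S\subseteq N$, any voter $i\in S$, and any bundle $X\subseteq C$, since $u_i(c)\leq\max_{j\in S}u_j(c)$ for every project $c$, summing over $c\in X$ gives $u_i(X)=\sum_{c\in X}u_i(c)\leq\sum_{c\in X}\max_{j\in S}u_j(c)$. In other words, the per-voter utility that EJR-up-to-one secures for some witness voter is always bounded above by the max-over-the-group quantity that PJR-up-to-one demands, so the latter inequality comes for free from the former.

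Concretely, I would fix an arbitrary PB-instance $E$ and an arbitrary $(\alpha,T)$-cohesive group $S$, and write $W=\mathcal{R}(E)$ for the bundle returned by Rule X. By EJR-up-to-one there is a voter $i\in S$ for which one of two cases holds. In the first case $u_i(W)\geq\sum_{c\in T}\alpha(c)$; applying the monotonicity observation with $X=W$ yields $\sum_{c\in W}\max_{j\in S}u_j(c)\geq u_i(W)\geq\sum_{c\in T}\alpha(c)$, which is exactly the first disjunct of PJR-up-to-one. In the second case there is some $a\in C$ with $u_i(W\cup\{a\})>\sum_{c\in T}\alpha(c)$; applying the observation with $X=W\cup\{a\}$ yields $\sum_{c\in W\cup\{a\}}\max_{j\in S}u_j(c)\geq u_i(W\cup\{a\})>\sum_{c\in T}\alpha(c)$, which is the second disjunct, witnessed by the same project $a$. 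Since one of the two cases must occur, PJR-up-to-one holds for $S$, and as $E$ and $S$ were arbitrary, Rule X satisfies PJR-up-to-one.

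I expect essentially no difficulty in the reduction itself: it is the ``up-to-one'' analogue of the generic implication EJR $\Rightarrow$ PJR (cf.\ Theorem \ref{thm:PB-EJR->PB-PJR}), and the whole argument amounts to applying the one-line inequality above once in each case. The genuinely hard content is not reproved here; it is the claim that Rule X satisfies EJR-up-to-one, which is precisely the result of \cite{peters2020proportionalPB} that I invoke as a black box. The only point that needs a moment of care is to carry the witness voter $i$ and, in the second case, the witness project $a$ unchanged from the EJR-up-to-one conclusion into the PJR-up-to-one inequality; since the monotonicity bound $u_i(X)\leq\sum_{c\in X}\max_{j\in S}u_j(c)$ holds for any fixed bundle $X$ (including the degenerate case $a\in W$, where the second disjunct only becomes easier), this transfer is unconditional.
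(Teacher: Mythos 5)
Your proposal is correct and follows exactly the paper's route: the paper also proves this proposition by citing \cite{peters2020proportionalPB} for EJR-up-to-one of Rule X and invoking the implication EJR-up-to-one $\Rightarrow$ PJR-up-to-one (Theorem \ref{thm:PB-EJR->PB-PJR}), whose proof is the same per-voter-to-max monotonicity argument you spell out, with the same witness voter and project. If anything, your version is slightly cleaner in preserving the strict inequality in the second case, which the paper's appendix proof glosses over.
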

\begin{proof} Rule X satisfies EJR-up-to-one \citep{peters2020proportionalPB}. Theorem \ref{thm:PB-EJR->PB-PJR} will show that EJR-up-to-one implies PJR-up-to-one. Hence Rule X also satisfies PJR-up-to-one.
\end{proof}


\section{Relations Between Axioms}
We study the logical relationships among the axioms we introduced and report on the results recapitulated in Figure \ref{fig:relations}.

\subsection{Priceability, PJR, EJR, and the Core}\label{sec:pr-PJR_PB}
We start by showing that PJR, EJR, and the core do not imply priceability in MWV-instances, even in laminar ones. 
\begin{theorem}\label{thm:PJR-pr-laminarinstances_MWV}
In laminar MWV-instances there exist bundles that satisfy PJR, EJR, or are in the core, but are not priceable.
\end{theorem}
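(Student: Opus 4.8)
The plan is to exhibit a single laminar MWV-instance together with a winning committee that simultaneously satisfies PJR, EJR, and membership in the core, yet admits no supporting price system. Since EJR implies PJR (Theorem \ref{thm:PB-EJR->PB-PJR}) and, in MWV, the core implies EJR (this is the \cite{petersmarket} result cited in the introduction), it suffices to construct an instance with a committee that is in the core but not priceable; the PJR and EJR claims then follow for free. Because the statement asks only for \emph{existence} of such bundles, I do not need the example to be the output of any particular rule—I just need to name a concrete committee and verify the two sides.

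First I would search for the smallest laminar instance that separates these properties. The natural construction exploits item (3) of Definition \ref{def:laminar_PB}: take two disjoint groups of voters, each unanimous over its own block of candidates, with the balance condition $|P_1|\cdot l_2 = |P_2|\cdot l_1$ ensuring laminarity. Within such an instance, laminar proportionality pins down exactly how many seats each block \emph{should} receive, but the core and EJR are satisfied by \emph{any} committee that gives each cohesive block enough representation—these axioms are about lower bounds on representation, not about exact proportional division. The key idea is that priceability is a \emph{global} budget-balance condition: it demands a single price system in which leftover money across the entire electorate cannot jointly afford any unselected candidate. I would choose the committee to be a core committee that is deliberately \emph{misbalanced} across the two blocks—giving one block slightly more than its proportional share and the other slightly less—so that the core's per-group guarantees still hold (each group still clears its representation threshold), while no price system can be patched together: the underfunded block will have too little residual budget to be internally consistent, or the overfunded block's supporters will have collective leftover money exceeding the cost of an unselected candidate they all approve, violating clause (3) of Definition \ref{def:Priceability_PB}.

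The concrete steps would be: (i) write down the instance explicitly (voter blocks, approval sets, unit cost, budget $k$) and verify it is laminar by unwinding Definition \ref{def:laminar_PB}; (ii) name the committee $W$ and check $|W\cap(\cup_{i\in S}A_i)|$ against $\ell$ for every $\ell$-cohesive $S$ to confirm PJR (equivalently invoke the implication from EJR); (iii) check the EJR utility bound and the core condition of Definition \ref{def:Core_PB} for every candidate deviating coalition $(S,T)$; and (iv) argue non-priceability by assuming a supporting price system $\mathbf{ps}=(b,(p_i)_{i\in N})$ exists and deriving a contradiction from clauses (1)--(3). For step (iv) the cleanest route is an accounting argument: sum the payment constraints over each block, use disjointness of approval sets to localize payments within blocks, and show the two block-level balance equations are jointly infeasible given the chosen seat split.

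The main obstacle will be step (iv): ruling out \emph{every} price system, not just the obvious one. Priceability grants a lot of freedom—the budget $b$ can be any value $\geq 1$ and payments need only respect the per-voter cap $b/n$ and the support condition—so the non-priceability argument must be robust to all choices of $b$ and all feasible payment distributions. I expect the decisive leverage to come from combining clause (1) (selected candidates are exactly paid for) with clause (3) (no unselected candidate is collectively affordable) on the \emph{overrepresented} block: since that block's voters unanimously approve more selected candidates than their fair share, their total spending is forced high, leaving the complementary block's residual budget pinned; a short inequality chain should then show clause (3) must fail for some unselected candidate in the underrepresented block. Getting the numerical parameters so that this inequality is strict—while keeping the core condition satisfied—is the delicate balancing act, and is where I would spend most of the effort.
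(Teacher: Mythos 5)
Your high-level reduction is fine (core $\Rightarrow$ EJR $\Rightarrow$ PJR by Theorem \ref{thm:PB-EJR->PB-PJR} and the cited result of \cite{petersmarket}, so a core-but-not-priceable committee suffices), and your step (iv) block-accounting idea is the right kind of argument. The gap is that the instance family you commit to in step (i)---a laminar instance built solely from item (3) of Definition \ref{def:laminar_PB}, i.e.\ two disjoint blocks each unanimous over its own candidates---provably contains no such committee, so the ``delicate balancing act'' of step (iv) has no solution. Write $n_1,n_2$ for the block sizes and $k_1,k_2$ for the block budgets. The laminarity condition $|P_1|\cdot k_2=|P_2|\cdot k_1$ makes each block's proportional share an \emph{integer}: $k\,n_j/n=k_j$. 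Now if block $j$ receives $|W\cap C_j|\le k_j-1$ seats, take $S$ to be all of block $j$ and $T\subseteq C_j$ with $|T|=k_j$ (which exists because item (1) of Definition \ref{def:laminar_PB} forces $|C_j|\ge k_j$): then $|S|=n_j=k_j\,n/k$, so $T$ is affordable, and every voter in $S$ strictly prefers $T$ to $W$; this violates the core---and in fact block $j$ is $k_j$-cohesive, so even PJR already fails. Hence \emph{every} committee satisfying any of your three axioms gives each block exactly $k_j$ seats, and that committee is priceable: set $b=k$ and let each block-$j$ voter pay $1/n_j$ for each selected candidate of her block; every candidate in $W$ collects its cost, every voter spends exactly $k_j/n_j=k/n=b/n$, so all residual budgets are zero and clause (3) of Definition \ref{def:Priceability_PB} holds vacuously. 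The slack you were counting on (``core is a lower bound, not an exact division'') is exactly zero in this family.

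The missing ingredient is item (2) of Definition \ref{def:laminar_PB}: an unanimously approved candidate that the committee \emph{omits}. This is precisely what the paper's counterexample (Table \ref{tab:counterexample_PJR_priceability}) uses: $A_{v_1}=A_{v_2}=\{c_1,c_2,c_5\}$, $A_{v_3}=\{c_3,c_4,c_5\}$, $k=4$, and $W=\{c_1,c_2,c_3,c_4\}$. The unanimous $c_5$ inflates $k$ so that the blocks' proportional shares become fractional ($8/3$ and $4/3$), and this creates genuine core slack: $W$ overrepresents $v_3$ and underrepresents $\{v_1,v_2\}$, yet $\{v_1,v_2\}$ can afford at most $\lfloor 8/3\rfloor=2$ candidates---not enough to beat their utility of $2$ under $W$---and a grand-coalition deviation would need all five candidates, which is unaffordable. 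Non-priceability then follows by exactly the accounting you sketched, except that the tension runs between a block and the unselected unanimous candidate rather than between the two blocks: $v_3$ must pay for $c_3,c_4$ alone, forcing the per-voter budget low, while $v_1,v_2$ finance only two candidates and support the unselected $c_5$, so clause (3) of Definition \ref{def:Priceability_PB} forces it high---contradiction. Without an unselected unanimously approved candidate, nothing generates that second constraint.
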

Since laminar MWV-instances are a specific type of MWV-instances, which are a specific type of PB-instances, the same result holds for MWV and PB.

In MWV-instances, every priceable bundle satisfies PJR, as is shown by \citep[Prop. 1]{Peters2020_lim_of_welf}. 
This raises the question whether this relation is also present in the PB setting. We show now that this is not the case.
\begin{theorem}\label{thm:pr_not->PJR-PB}
    There are priceable bundles not satisfying PJR.
\end{theorem}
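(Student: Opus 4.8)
The plan is to construct a concrete PB-instance together with a bundle $W$ and a supporting price system, while exhibiting a cohesive group for which the PJR inequality of Equation~\ref{eq:PB-PJR} fails. The key is to exploit the fact that priceability and PJR measure proportionality in genuinely different currencies once the unit-cost assumption is dropped: priceability only demands that no group of supporters of a rejected project has collectively enough leftover budget to \emph{buy} that project (a \emph{cost}-based condition), whereas PJR demands that a cohesive group receive enough \emph{utility}. When a single expensive project can absorb the unspent budget of a large group, a price system can be satisfied while that group remains badly under-represented in utility terms.

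Concretely, I would take an approval-PB-instance with a small number of voters and projects in which a cohesive group $S$ can afford, with its fair share $\frac{|S|}{n}\cdot l$, a set $T$ of several cheap projects that its members all approve, so that $\sum_{c\in T}\alpha(c)$ is large. The winning bundle $W$ is then engineered to contain only one (or too few) of the projects approved by $S$, so that $\sum_{c\in W}(\max_{i\in S}u_i(c))$ falls strictly below $\sum_{c\in T}\alpha(c)$, violating PJR exactly as in Example~\ref{ex:PB-PJR}. The remaining projects in $W$ should be approved by voters outside $S$, so that the payments of $S$'s members are mostly routed, under the price system, into one comparatively costly project that $S$ shares with others, thereby exhausting their budget.

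The main work is to verify priceability, i.e.\ to hand-build a price system $\mathbf{ps}=(b,(p_i)_{i\in N})$ satisfying conditions (1)--(3) of Definition~\ref{def:Priceability_PB}. For each selected project I would distribute its cost among its approvers so that the total payments equal its cost (condition~1), set all payments to non-selected projects to zero (condition~2), and---this is the crucial point---choose the budget split so that after paying for $W$, the supporters of each rejected project $c\notin W$ have jointly at most $\mathrm{cost}(c)$ of unspent money (condition~3). The hard part is balancing these demands simultaneously: I must make the cheap rejected projects in $T$ \emph{not} individually affordable by $S$'s residual budget, which forces the selected expensive project to soak up enough of $S$'s money, yet still keep the total payments within each voter's budget $\frac{b}{n}$. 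Choosing $b$ slightly larger than $1$ if needed gives extra slack for condition~(3) without affecting the PJR violation. Once such an instance is exhibited, the claim follows: the bundle is priceable (some rule outputs it with the constructed price system) yet fails PJR, so priceability does not imply PJR in PB.
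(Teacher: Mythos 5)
Your approach is sound, but it takes a genuinely different route from the paper's proof, and the difference is instructive. The paper's counterexample keeps all costs \emph{equal} (five projects of cost $0.2$, budget $1$) and exploits cardinal utilities: two voters value $t_1,t_2$ at $0.6$ each while the cohesiveness thresholds are $\alpha(t_1)=\alpha(t_2)=0.4$, so with only $t_2$ selected the group's max-utility sum is $0.6<0.8$, even though priceability---which only distinguishes zero from positive utility---is easily arranged. Your construction instead stays inside \emph{approval} utilities and exploits \emph{non-uniform costs}: one expensive selected project soaks up the cohesive group's budget, so the cheap rejected projects in $T$ never trigger condition~(3) of Definition~\ref{def:Priceability_PB}. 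Since priceability does imply PJR in MWV (approval preferences plus unit costs), each counterexample drops exactly one of the two MWV assumptions; yours additionally shows that the implication already fails on approval-PB-instances, the class on which Phragm\'{e}n is defined, which the paper's cardinal-utility example does not establish. Your plan does instantiate concretely: take $N=\{1,2,3,4\}$, $l=1$, $A_1=A_2=\{e,t_1,t_2\}$, $A_3=A_4=\{e\}$, $\text{cost}(e)=0.6$, $\text{cost}(t_1)=\text{cost}(t_2)=0.2$, $W=\{e\}$, and the price system $b=1$ with $p_1(e)=p_2(e)=0.25$, $p_3(e)=p_4(e)=0.05$ and all other payments zero; then payments for $e$ sum to $0.6$, voters $1,2$ have zero leftover so condition~(3) holds for $t_1,t_2$, yet $S=\{1,2\}$ is $(\alpha,T)$-cohesive for $T=\{t_1,t_2\}$ and $\alpha\equiv 1$ on $T$ (since $2\geq\tfrac{0.4}{1}\cdot 4$), while $\sum_{c\in W}\max_{i\in S}u_i(c)=1<2$. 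One small correction: your remark that choosing $b$ slightly larger than $1$ ``gives extra slack for condition~(3)'' is backwards---raising $b$ only increases voters' unspent budget, which \emph{tightens} condition~(3); what it relaxes is the budget-feasibility requirement of Definition~\ref{def:Pricesystems_PB}. Fortunately your construction does not need it.
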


\begin{proof}[Proof sketch]
PJR is based on the utility of the voters being higher than some threshold $\alpha(c)$, while priceability only discriminates between utilities of 0 and utilities above zero. 
It is therefore possible to construct a PB-instance with a bundle that is priceable but does not satisfy PJR.
\end{proof}

From this result, it follows that priceability neither implies EJR nor the core. Priceability would otherwise imply PJR, since the core implies EJR, and EJR implies PJR (Theorem \ref{thm:PB-EJR->PB-PJR}).
These results hold even for MWV. 
 
\begin{theorem}\label{thm:PB-EJR->PB-PJR}
EJR(-up-to-one) implies PJR(-up-to-one).\end{theorem}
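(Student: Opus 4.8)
The plan is to prove both implications (the plain version and the up-to-one version) in parallel, since their structures are nearly identical and differ only in which representation inequality the relevant voter satisfies. The key observation is that EJR and PJR share exactly the same antecedent: both quantify over $(\alpha,T)$-cohesive groups $S$. Only the consequent differs. EJR guarantees the existence of a \emph{single} voter $i\in S$ whose personal utility from the winning bundle is large, namely $u_i(\mathcal{R}(E))\geq\sum_{c\in T}\alpha(c)$, whereas PJR asks only for the weaker aggregate quantity $\sum_{c\in \mathcal{R}(E)}(\max_{i\in S}u_i(c))\geq\sum_{c\in T}\alpha(c)$.

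The heart of the argument is therefore a pointwise domination between these two quantities. First I would fix an arbitrary PB-instance $E$, an arbitrary $(\alpha,T)$-cohesive group $S$, and let $W=\mathcal{R}(E)$. For any fixed voter $i\in S$ we have the chain
\begin{equation}
u_i(W)=\sum_{c\in W}u_i(c)\leq\sum_{c\in W}\bigl(\max_{j\in S}u_j(c)\bigr),
\end{equation}
since $u_i(c)\leq\max_{j\in S}u_j(c)$ for every project $c$. In particular this holds for the specific voter $i$ whose existence is promised by EJR. Assuming $\mathcal{R}$ satisfies EJR, that voter satisfies $u_i(W)\geq\sum_{c\in T}\alpha(c)$, and combining with the displayed inequality yields $\sum_{c\in W}(\max_{j\in S}u_j(c))\geq u_i(W)\geq\sum_{c\in T}\alpha(c)$, which is exactly the PJR requirement of Definition~\ref{def:PB-PJR}. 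Since $S$ was arbitrary, $\mathcal{R}$ satisfies PJR.

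For the up-to-one variant I would run the same comparison but carry the extra project $a$ through the inequality. By EJR-up-to-one the promised voter $i$ satisfies $u_i(W)\geq\sum_{c\in T}\alpha(c)$ or $u_i(W\cup\{a\})>\sum_{c\in T}\alpha(c)$ for some $a\in C$. In the first case the argument above already gives the plain PJR conclusion, which implies PJR-up-to-one. In the second case I apply the same domination to the augmented bundle: $u_i(W\cup\{a\})=\sum_{c\in W\cup\{a\}}u_i(c)\leq\sum_{c\in W\cup\{a\}}(\max_{j\in S}u_j(c))$, so the strict inequality $u_i(W\cup\{a\})>\sum_{c\in T}\alpha(c)$ transfers to $\sum_{c\in W\cup\{a\}}(\max_{j\in S}u_j(c))>\sum_{c\in T}\alpha(c)$, matching the second disjunct of Definition~\ref{def:PB-PJR-up-to-one} with the same witness $a$.

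I do not anticipate a serious obstacle here; the result is essentially the observation that a per-voter lower bound is stronger than the max-over-the-group aggregate bound. The only points requiring care are purely notational: ensuring the additive definition $u_i(T)=\sum_{c\in T}u_i(c)$ is used consistently so the sums line up, and verifying that the witness project $a$ in the up-to-one case can be kept fixed across the translation from the EJR statement to the PJR statement (it can, since the same $a$ appears on both sides). The statement covers all four combinations at once precisely because the plain-to-plain and up-to-one-to-up-to-one arguments are instances of the same inequality, so no separate case analysis beyond the two disjuncts is needed.
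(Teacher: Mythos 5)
Your proposal is correct and takes essentially the same route as the paper's proof: both apply the pointwise domination $u_i(c)\leq\max_{j\in S}u_j(c)$ to the witness voter $i$ supplied by EJR(-up-to-one), summing over $\mathcal{R}(E)$ (or $\mathcal{R}(E)\cup\{a\}$) to obtain the PJR(-up-to-one) consequent. If anything, yours is slightly more careful than the paper's write-up, which lets the strict inequality in the up-to-one case silently degrade to $\geq$ in its final line, whereas you correctly carry the strictness through to match the second disjunct of Definition~\ref{def:PB-PJR-up-to-one}.
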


\subsection{Laminar Proportionality and Priceability}\label{sec:LP-pr_PB}
We first show that in MWV-instances, and therefore also in  PB-instances, priceability does not imply LP.
\begin{theorem}\label{th:pr_not_imply_lp}
In laminar MWV-instances there exist priceable bundles that are not laminar proportional.
\end{theorem}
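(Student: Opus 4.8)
The plan is to prove this separation constructively, by exhibiting a single laminar MWV-instance together with a committee that is priceable but fails laminar proportionality. I would take four voters $N=\{v_1,v_2,v_3,v_4\}$ and five unit-cost candidates, with budget $l=k=3$, where $v_1,v_2$ approve $\{c_1,c_2\}$, voters $v_3,v_4$ approve $\{c_3,c_4\}$, and all four voters additionally approve $c_5$. First I would check, directly from Definition~\ref{def:laminar_PB}, that this instance is laminar: the restriction to $\{v_1,v_2\}$ on $\{c_1,c_2\}$ with budget $1$ is unanimous with $\text{cost}=2\ge 1$ (item~1), and likewise for $\{v_3,v_4\}$ on $\{c_3,c_4\}$ with budget $1$; these two sub-instances have disjoint candidate sets and satisfy $|P_1|\cdot l_2 = 2 = |P_2|\cdot l_1$, so their sum with budget $2$ is laminar (item~3); finally $c_5$ is unanimously approved while the sum is not unanimous, so item~2 yields the full instance with budget $3$.

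Next I would read off from Definition~\ref{def:lamprop_PB} that the laminar proportional bundles here are exactly those of the form $\{c_5,x,y\}$ with $x\in\{c_1,c_2\}$ and $y\in\{c_3,c_4\}$, and then take the candidate committee $W=\{c_5,c_1,c_2\}$, which picks two projects from the first group and none from the second and is therefore not laminar proportional. It then remains to supply a price system supporting $W$ in the sense of Definitions~\ref{def:Pricesystems_PB} and~\ref{def:Priceability_PB}. I would choose initial budget $b=4$, so each voter has $\frac{b}{n}=1$ to spend, let $v_1$ and $v_2$ each pay $\frac{1}{2}$ towards each of $c_1$ and $c_2$ (exhausting their budgets), and let $v_3,v_4$ each pay $\frac{1}{2}$ towards $c_5$. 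Then every selected project is fully paid and no unselected project receives any payment, giving conditions~(1)--(2); the only unselected candidates are $c_3,c_4$, whose supporters are exactly $v_3,v_4$, and their combined unspent budget is $(1-\frac{1}{2})+(1-\frac{1}{2})=1$, which equals $\text{cost}(c_3)=\text{cost}(c_4)=1$, so the no-affordability condition~(3) holds (with equality).

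The main obstacle—and the reason a naive symmetric sum of two groups fails—is the equal per-voter budget $\frac{b}{n}$ built into price systems: to over-pay the first group one must raise $b$, but this same budget is handed to the second group, leaving it able to afford one of its unselected projects and thereby violating condition~(3). The key idea is that the unanimously approved project $c_5$ acts as a \emph{budget sink}: routing the second group's spending into $c_5$ trims its leftover to exactly the cost of a single project, making condition~(3) tight rather than strict, while freeing the first group to concentrate its entire budget on the extra project $c_2$. I expect the verification of the three priceability conditions to be routine once the instance is fixed; the design of the nested laminar instance is where the work lies.
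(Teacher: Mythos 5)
Your proof is correct: the instance is laminar (the two unanimous sub-instances with $k_1=k_2=1$ satisfy $|P_1|\cdot k_2 = |P_2|\cdot k_1$, and $c_5$ is added on top), the committee $W=\{c_5,c_1,c_2\}$ indeed fails laminar proportionality since after removing $c_5$ the remainder $\{c_1,c_2\}$ cannot split as $W_1\cup W_2$ with $|W_2|=k_2=1$, and your price system with $b=4$ satisfies all conditions of Definitions \ref{def:Pricesystems_PB} and \ref{def:Priceability_PB}, the blocking condition holding with equality for $c_3$ and $c_4$. While the overall strategy matches the paper's (an explicit counterexample on a laminar instance built from two voter groups joined by item 3 of Definition \ref{def:Laminar_instances_MWV} plus one unanimously approved candidate added by item 2), the mechanism of your counterexample is genuinely different. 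The paper's committee (Table \ref{tab:counterexample2} with unit costs, $k=4$) \emph{omits} the unanimous candidate $c_6$, so LP fails via item 2 of Definition \ref{def:Laminar_Proportionality_MWV}; this forces a carefully tuned per-candidate price ($p=0.65$) so that neither the lone voter $v_3$ nor all voters jointly retain enough unspent budget to afford an unelected candidate. Your committee instead \emph{contains} the unanimous candidate and violates LP through the proportional-split condition (item 3), with the unanimous project serving, as you put it, as a budget sink that absorbs the under-represented group's money and makes the no-affordability constraint tight rather than strict. Each construction buys something: the paper's shows priceability cannot even force a unanimously approved candidate into the committee, whereas yours shows that even committees respecting the unanimous candidate can be priceable while distributing the remaining seats entirely disproportionally; your payment scheme is also arguably cleaner, using exact rational payments with no slack to tune.
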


However it is worth reporting that for a class of price systems (called balanced systems) we are able to prove that the implication goes through. 
Furthermore, LP implies priceability on laminar election instances even in the PB setting.

\begin{theorem}\label{thm:LP-pr_PB}
LP implies priceability in laminar PB-instances.
\end{theorem}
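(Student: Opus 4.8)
The plan is to prove the constructive statement underlying the theorem: every laminar proportional bundle $W$ of a laminar approval-PB-instance $(P,l)$ is supported by a price system in the sense of Definition~\ref{def:Priceability_PB}. I would argue by structural induction on the recursive definition of laminarity (Definition~\ref{def:laminar_PB}), following the parallel recursion in the definition of laminar proportional bundles (Definition~\ref{def:lamprop_PB}). The single invariant carried through the induction is that the constructed payment functions $(p_i)_{i\in N}$ give every voter exactly the per-voter budget $\frac{l}{n}$, i.e.\ one may take the initial budget $b=l$. Since laminarity, laminar proportionality, and all three support conditions are invariant under simultaneously rescaling every cost and the budget by a common positive factor, I may assume at the top level that $l\geq 1$, so that $b=l\geq 1$ is admissible. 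With this invariant fixed, the two composition cases glue together transparently, and the whole argument reduces to checking the base case and verifying that the three support conditions of Definition~\ref{def:Priceability_PB} are preserved under the recursion.

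For the base case (clause~(1)), $P$ is unanimous, so all $n$ voters approve every project in $C(P)$, and $W\subseteq C(P)$ is a maximal affordable subset (the bundle returned by a rule is affordable and exhaustive). I would let each voter pay the equal share $p_i(c)=\frac{\text{cost}(c)}{n}$ for every $c\in W$ and nothing otherwise; then $\sum_{i\in N}p_i(c)=\text{cost}(c)$ for $c\in W$, no unselected project is paid, and each voter spends $\frac{\text{cost}(W)}{n}\leq\frac{l}{n}$. The only support condition needing care is the remaining-budget condition: for $c\notin W$ all $n$ voters support $c$, so their total unspent budget is $l-\text{cost}(W)$, and maximality of $W$ gives $\text{cost}(c)>l-\text{cost}(W)$, yielding the required inequality. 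For clause~(2), a unanimously approved project $c$ is added to a laminar proportional $W'$ for $(P_{-c},l-\text{cost}(c))$; by the induction hypothesis I have payments for $W'$ with per-voter budget $\frac{l-\text{cost}(c)}{n}$, and I extend them by charging every voter the additional amount $\frac{\text{cost}(c)}{n}$ for $c$. This raises both each voter's budget and each voter's spending by exactly $\frac{\text{cost}(c)}{n}$, so every voter's unspent budget is unchanged; since the supporters and approvals of every project other than $c$ are identical in $P$ and $P_{-c}$, all three support conditions for the remaining projects are inherited verbatim from the induction hypothesis. For clause~(3), I take the disjoint union of the price systems for $(P_1,l_1,W_1)$ and $(P_2,l_2,W_2)$ given by the induction hypothesis, letting each voter use the payments of the subinstance she belongs to. The key point is that the laminarity condition $|P_1|\cdot l_2=|P_2|\cdot l_1$ is exactly $\frac{l_1}{n_1}=\frac{l_2}{n_2}=\frac{l}{n}$, so the two subinstances already assign every voter the common per-voter budget $\frac{l}{n}$ and the invariant is maintained without adjustment. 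Because $C(P_1)\cap C(P_2)=\emptyset$, every unselected project is approved only by voters of its own subinstance, so its supporters' total unspent budget equals that in the relevant subinstance and the remaining-budget condition is inherited, while conditions (1) and (2) are immediate from the two subsystems.

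The step I expect to be the main obstacle is the base case, and specifically its remaining-budget condition: this is the only place where the bare requirement $W\subseteq C(P)$ is too weak, and one must use that a laminar proportional bundle is a \emph{maximal} affordable subset of $C(P)$ (otherwise an affordable but non-maximal $W$ could leave the unanimous electorate with enough collective leftover to afford some cheap unselected project, violating support condition~(3)). Once the invariant ``per-voter budget $=\frac{l}{n}$'' is fixed, the two inductive steps are essentially bookkeeping: the crucial structural facts are that adding a unanimous project (clause~2) preserves every voter's leftover, and that the numeric laminarity constraint in clause~3 is precisely what forces the per-voter budgets of the two subinstances to coincide, so that the disjoint subsystems combine into a single price system with $b=l\geq 1$.
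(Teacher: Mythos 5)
Your proof follows the same skeleton as the paper's: structural induction on Definition~\ref{def:laminar_PB}, mirroring the recursion of Definition~\ref{def:lamprop_PB}, with the base case, the unanimous-project step, and the disjoint-sum step. The substantive difference is the invariant, and that is exactly where your gap sits. You endow every voter with the full instance budget ($b=l$), whereas the paper endows voters with exactly the cost of the bundle ($b=\text{cost}(W)$) and has every voter spend her entire endowment. Your base case then needs $W$ to be a \emph{maximal} affordable subset of $C(P)$, but Definition~\ref{def:lamprop_PB}(1) requires only $W\subseteq C(P)$, and nothing in the paper lets you assume the rule is exhaustive --- LP quantifies over all bundles satisfying clause (1), and the rules studied here (Phragm\'{e}n, Rule X) can themselves return non-exhaustive bundles. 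Concretely: one voter approving three projects of cost $0.4$ each with $l=1$ is a laminar instance (since $\text{cost}(C(P))=1.2\geq l$), and $W=\{c_1\}$ is laminar proportional; under your construction the voter's leftover is $0.6>0.4=\text{cost}(c_2)$, violating support condition (3) of Definition~\ref{def:Priceability_PB}. The paper's choice $b=\text{cost}(W)$ sidesteps this entirely: every voter spends everything, leftovers are zero, and condition (3) holds vacuously, with no maximality needed. (Your rescaling remark also only normalizes the instance you analyze; it does not produce a $b\geq 1$ system for the original instance when $l<1$, though the paper's own choice $b=\text{cost}(W)$ ignores that clause as well.)

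That said, your invariant is not a worse idea --- it is the one that makes the composition step tight. The laminarity condition $|P_1|\cdot l_2=|P_2|\cdot l_1$ is a statement about the \emph{budgets}, so with $b=l$ the per-voter endowments of the two subsystems coincide, $\frac{l_1}{n_1}=\frac{l_2}{n_2}=\frac{l}{n}$, and the two price systems glue with nobody's leftover changing. With the paper's invariant there is no guarantee that $\frac{\text{cost}(W_1)}{n_1}=\frac{\text{cost}(W_2)}{n_2}$ (again because bundles need not exhaust their budgets), and the paper's verification of the budget constraint in the sum case silently uses $\frac{b_1}{n}$ where the inductive hypothesis only provides $\frac{b_1}{n_1}$; so its glue step is loose at precisely the point where yours is exact. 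The upshot is that in PB (unlike MWV, where committees have exact size $k$ and both issues disappear) non-exhaustive base-case bundles are the real obstruction: your proof must exclude them explicitly, and you cannot do so from the stated definitions. If cost-maximality in the unanimous case were added to Definition~\ref{def:lamprop_PB}(1) (arguably the intended reading), your argument would go through and its sum case would be cleaner than the paper's; as written, however, the base case fails on bundles the definition allows, so the proof has a genuine gap.
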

\begin{proof}[Proof sketch] 
By induction on the structure of laminar PB-instances. We show that for every bundle $W$ that is laminar proportional for a laminar PB-instance ($P,l$), there exists a price system $\mathbf{ps}=(b,(p_i)_{i\in N})$ where  $b=\text{cost}(W)$. 

Consider three cases. The first is the basis of the induction. 
If $P$ is unanimous with $\text{cost}(C(P))\geq l$ and $W$ is laminar proportional for ($P,l$) (with cost$(W)\leq l$), then $W\subseteq C(P)$, so voters can divide their budget over $W$. With an initial $b=\text{cost}(W)$, all and only the projects in $W$ can be bought.

In the second case, a unanimously approved project $c\in W$ such that $W'=W\backslash\{c\}$ is laminar proportional. By the inductive hypothesis, we know that there exists a price system $\mathbf{ps'}$ with initial budget $b'=\text{cost}(W')$. Because $c$ is unanimously approved, all voters could pay for $c$.
We know that in $\mathbf{ps'}$, there was no project not in $W'$ that was affordable to its supporters.
If every voter got $\frac{\text{cost}(c)}{n}$ more budget, to spend entirely on $c$, $c$ would get funded and no voter would have more unspent budget than before. 
Also, the initial budget of every voter is now $\frac{b'}{n}+\frac{\text{cost}(c)}{n} = \frac{\text{cost}(W')+\text{cost}(c)}{n} = \frac{\text{cost}(W)}{n}=\frac{b}{n}$ units of money, and the initial budget is $b=\text{cost}(W)$ and all the individual payment functions stay the same. Because for every project $c$ in $W'$ the sum of the individual payments was equal to cost$(c)$, this is also the case for every project in $W$.

In the third case, the profile consists of two laminar profiles:  $P=P_1+P_2$ and $l=l_1+l_2$, where $W_1$ and $W_2$ are laminar proportional for respectively $(P_1,l_1)$, $(P_2,l_2)$. Take $W=W_1\cup W_2$, which is by definition laminar proportional for $(P,l)$. By the inductive hypothesis, there exist price systems $\mathbf{ps_1}=(b_1,\{p_{1,i}\}_{i\in N})$ and $\mathbf{ps_2}=(b_2,\{p_{2,i}\}_{i\in N})$  with initial budgets $b_1=\text{cost}(W_1)$ and $b_2=\text{cost}(W_2)$, and that $|P_1|\cdot l_2 = |P_2|\cdot l_1$. We can now define a price system $\mathbf{ps}$ that supports $W$ as follows: $\mathbf{ps}=(b,(p_i)_{i\in N})$ with $b=\text{cost}(W) = b_1+b_2$, and for all voters $i\in N$,
    $ p_i(c)= p'_{1,i}(c) + p'_{2,i}(c),$ where $p'_{1,i}$ and $p'_{2,i}$ are extended versions of respectively $p_{1,i}$ and $p_{2,i}$ that yield zero for the projects that those are not defined for.
    It is easy to check that $\mathbf{ps}$ supports $W$ according to the criteria from Def. \ref{def:Priceability_PB}.
\end{proof}

\subsection{Laminar Proportionality and the Core} \label{sec:LP_core_PB}

\begin{theorem}\label{thm:EJR-lp_-laminarinstances_MWV}
In laminar MWV-instances, there exist bundles that satisfy PJR, EJR, or are in the core, but do not satisfy LP.
\end{theorem}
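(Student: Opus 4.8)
The plan is to prove this separation by a single counterexample. I would exhibit one laminar MWV-instance together with one bundle $W$ that lies in the \emph{core}, and then invoke the bundle-level implications core $\Rightarrow$ EJR \citep{petersmarket} and EJR $\Rightarrow$ PJR (Theorem~\ref{thm:PB-EJR->PB-PJR}) to conclude that the very same $W$ also satisfies EJR and PJR. Since core-membership is the strongest of the three properties, showing that this one $W$ fails LP settles all three disjuncts of the statement at once, exactly as the companion result Theorem~\ref{thm:PJR-pr-laminarinstances_MWV} does for priceability.

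Concretely, I would take five unit-cost candidates $c_0,a_1,a_2,b_1,b_2$, two voters with $A_1=\{c_0,a_1,a_2\}$ and $A_2=\{c_0,b_1,b_2\}$, and budget $l=3$. First I verify laminarity via Definition~\ref{def:laminar_PB}: $c_0$ is unanimously approved, so by clause~(2) it suffices that $(P_{-c_0},2)$ be laminar; and $P_{-c_0}$ is the disjoint sum of the single-voter unanimous instances $(\{a_1,a_2\},1)$ and $(\{b_1,b_2\},1)$, which combine by clause~(3) since $1\cdot 1 = 1\cdot 1$. The witness bundle is $W=\{c_0,a_1,a_2\}$.

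Next I would check that $W$ is in the core (Definition~\ref{def:Core_PB}). With $n=2$ and $l=3$ the blocking threshold $\frac{\text{cost}(T)}{l}\,n=\frac{2|T|}{3}$ restricts singleton coalitions to $|T|=1$ and the grand coalition to $|T|\le 3$. For $S=\{v_1\}$ we have $u_{v_1}(W)=3\ge u_{v_1}(T)$ for any $|T|=1$; for $S=\{v_2\}$ the only relevant comparison is $u_{v_2}(W)=1$ (coming from the shared $c_0$) against the best single project, which also gives $u_{v_2}(T)=1$, so $1\ge 1$ holds; and for $S=\{v_1,v_2\}$ voter $v_1$ already witnesses $u_{v_1}(W)=3\ge u_{v_1}(T)$. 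Hence no coalition blocks and $W$ is a core bundle. For the LP failure, Definition~\ref{def:lamprop_PB} forces every laminar proportional bundle here to have the shape $\{c_0\}\cup W_1\cup W_2$ with $W_1\subseteq\{a_1,a_2\}$ and $W_2\subseteq\{b_1,b_2\}$, each feasible within budget $1$, i.e.\ $|W_1|,|W_2|\le 1$; since $W\setminus\{c_0\}=\{a_1,a_2\}$ would force $W_1=\{a_1,a_2\}$ of cost $2>1$, the bundle $W$ cannot match this shape and is therefore not laminar proportional.

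The subtle, load-bearing point is the \emph{nested} structure. The separation arises precisely because the shared project $c_0$ lets the ``underserved'' voter $v_2$ reach exactly their guaranteed core utility (one project's worth) while $v_1$ absorbs the remaining shared budget; a flat sum of two disjoint parties with no common project would instead leave one party with zero utility and would already fail the core, collapsing the gap. The hard part is thus designing the instance so that LP's recursive, budget-splitting requirement is genuinely rigid while the weaker ``some voter is satisfied'' guarantee of the core is still met by the lopsided bundle, and then verifying core-membership exhaustively over the (few) coalitions; the LP failure itself is immediate once the forced decomposition is written down.
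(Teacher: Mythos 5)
Your proposal is correct and follows essentially the same strategy as the paper: a single laminar MWV-instance with one bundle shown to be in the core (hence satisfying EJR and PJR via the implications core $\Rightarrow$ EJR $\Rightarrow$ PJR, which the paper also relies on) that fails LP. The paper's counterexample (Table~\ref{tab:counterexample_PJR_priceability}, three voters, $k=4$) makes LP fail by omitting the unanimous project entirely, whereas yours keeps the unanimous project but violates the forced proportional split between the two parties; this is a cosmetic difference, not a different method.
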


\begin{theorem} \label{thm:lp-core_PB}
There exist laminar proportional bundles that do not satisfy PJR, EJR, or are not in the core.
\end{theorem}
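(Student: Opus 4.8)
The plan is to prove the statement by exhibiting a single counterexample: a laminar PB-instance together with a bundle $W$ that is laminar proportional but fails PJR. Since the core implies EJR and EJR implies PJR (Theorem~\ref{thm:PB-EJR->PB-PJR}), PJR is the weakest of the three notions, so any bundle violating PJR automatically violates EJR and the core as well. It therefore suffices to construct one laminar proportional bundle that is not PJR, and I would exploit the gap between \emph{cost} and \emph{utility} that distinguishes PB from MWV.

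I would work in the simplest laminar case, namely a unanimous profile (item (1) of Definition~\ref{def:laminar_PB}). Concretely, take a set of voters $N$ in which everyone approves the same three projects $C(P)=\{a,b_1,b_2\}$, and set $\text{cost}(a)=l$ and $\text{cost}(b_1)=\text{cost}(b_2)=\tfrac{l}{2}$. This instance is laminar by item (1), since $P$ is unanimous and $\text{cost}(C(P))=2l\geq l$. The candidate bundle is $W=\{a\}$, which is laminar proportional because $W\subseteq C(P)$ and $\text{cost}(W)=l\leq l$; note that $W$ even spends the whole budget, so the counterexample cannot be dismissed as mere under-spending.

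I would then verify that $W$ fails PJR. Let $S=N$, $T=\{b_1,b_2\}$, and $\alpha(c)=1$ for all $c\in T$. Then $S$ is $(\alpha,T)$-cohesive: every voter approves $b_1$ and $b_2$, so $u_i(c)\geq\alpha(c)$ on $T$, and $|S|=n\geq\frac{\text{cost}(T)}{l}\cdot n=n$. However, $\sum_{c\in W}(\max_{i\in S}u_i(c))=\max_{i\in S}u_i(a)=1$, while $\sum_{c\in T}\alpha(c)=2$, so Equation~\ref{eq:PB-PJR} is violated. By the implications above, $W$ therefore also fails EJR and the core; indeed one can check this directly, since $u_i(W)=1<2=\sum_{c\in T}\alpha(c)=u_i(T)$ for every $i\in S$, so no voter in $S$ is sufficiently represented.

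The argument involves no genuine obstacle; the only point requiring care is confirming that the laminar proportional bundle definition is permissive enough in the unanimous case to allow $W=\{a\}$ — that is, that item (1) of Definition~\ref{def:lamprop_PB} imposes no maximality or full-spending requirement beyond $W\subseteq C(P)$ and affordability. This is precisely the feature absent in the MWV setting, where a committee must contain exactly $k$ candidates, and it is what makes the cost/utility mismatch exploitable here.
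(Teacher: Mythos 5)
Your proof is correct, and it reaches the theorem by a genuinely different, more minimal route than the paper. The paper's proof also builds a counterexample, but a structurally richer one: four voters, a unanimously approved project $c$ of cost $1$, eight projects $t_1,\dots,t_8$ of cost $\tfrac{1}{3}$ approved by $\{v_1,v_2,v_3\}$, four projects $x_1,\dots,x_4$ of cost $\tfrac{1}{3}$ approved by $v_4$, and budget $l=\tfrac{11}{3}$; the instance is assembled through the full inductive structure of Definition~\ref{def:laminar_PB} (composition of two sub-instances, then addition of the unanimous $c$), the bundle $W=\{c,t_1,\dots,t_6,x_1,x_2\}$ is exhibited, and core failure is checked directly for $S=\{v_1,v_2,v_3\}$, $T=\{t_1,\dots,t_8\}$, with EJR and PJR failure then verified via their approval-based specializations. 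Your construction instead sits entirely in the base case: a unanimous profile with $\{a,b_1,b_2\}$, where item (1) of Definition~\ref{def:lamprop_PB} licenses $W=\{a\}$, and you reduce everything to a PJR violation via the chain core $\Rightarrow$ EJR $\Rightarrow$ PJR (also checking the failures directly, so nothing hinges on reading those rule-level implications bundle-wise). What your approach buys is brevity and a sharp diagnosis: the LP base case imposes no efficiency or representation requirement beyond $W\subseteq C(P)$ and affordability, and that alone already breaks all three axioms. What the paper's heavier example buys is a strictly stronger fact: because the unanimous $c$ is forced into \emph{every} laminar proportional bundle (item 2 of Definition~\ref{def:lamprop_PB}) and the ratio condition forces the budget split $l_1=2$, $l_2=\tfrac{2}{3}$, no LP bundle in that instance can give the group $S$ more than $7$ of the $8$ projects it can afford — so LP and the core are incompatible there as \emph{rule} properties, not merely violated by one perverse bundle. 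In your instance this is not so: $\{b_1,b_2\}$ is also laminar proportional and satisfies all three axioms, so the failure is attributable to the choice of bundle within the LP class. Both arguments prove the stated existence claim, and both are consistent with Theorem~\ref{thm:LP-core_u-afford}, since in each case the blocking bundle $T$ fails u-afford (it contains no project of cost at least that of the expensive unanimous project); but the paper's version more directly supports the absence-of-arrow claim in Figure~\ref{fig:relations} and motivates the u-afford restriction.
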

Theorem \ref{thm:lp-core_PB} holds because of instances where there is one relatively expensive unanimously approved project, that is included in any laminar proportional bundle, but where there are many cheap projects that can be satisfactory enough for a group of voters. 
We show now that under certain restrictions, laminar proportional bundles are in the core (and hence also satisfy EJR and PJR).
To do that we define a property of bundles which we call \textit{unanimity-affordability} (u-affordability):
\begin{definition}\label{def:P-u-afford}
 A bundle $T$ is u-affordable (shortly, u-afford) w.r.t. instance $(P,l)$ whenever
 for any unanimously approved project $c\in C(P)$ 
 there exists $t\in T$ s.t. $\text{cost}(t)\geq\text{cost}(c)$.
\end{definition}
Since $\text{cost}(c)\geq\text{cost}(c)$, the definition is also satisfied if $c\in T$. 
We will show that in laminar PB-instances, a laminar proportional bundle satisfies the core if it is {\em subject to u-afford}, that is, if u-afford holds for the bundle inductively on the structure of the instance.
\begin{theorem}\label{thm:LP-core_u-afford}
In laminar PB-instances, laminar proportional bundles subject to u-afford satisfy the core.
\end{theorem}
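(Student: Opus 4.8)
The plan is to argue by induction on the recursive construction of laminar PB-instances given by Definition \ref{def:laminar_PB}, mirroring the three clauses of the definition of a laminar proportional bundle (Definition \ref{def:lamprop_PB}). Since laminar PB-instances are approval-PB-instances, I would first rewrite the core condition (Definition \ref{def:Core_PB}) in approval form, $u_i(W)=|W\cap A_i|$ and $u_i(T)=|T\cap A_i|$, and note that a threatening deviation $T$ may be taken inside $\bigcup_{i\in S}A_i$, since adding projects approved by no one in $S$ only increases $\text{cost}(T)$ and tightens the affordability premise $|S|\geq\frac{\text{cost}(T)}{l}\cdot n$ without raising any $u_i(T)$. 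The hypothesis that $W$ is subject to u-afford (Definition \ref{def:P-u-afford}) will be invoked along the decomposition to guarantee that whenever a unanimously approved project $c$ is present, the relevant bundle contains a project at least as expensive as $c$.

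In the base case $P$ is unanimous, so all voters share the approval set $C(P)$ and are interchangeable; the binding instance of the core premise is $S=N$, and the claim reduces to $|W|\geq|T\cap C(P)|$ for every affordable $T$, which I would derive from $W$ being a laminar proportional, u-afford selection from $C(P)$. The heart of the proof is the second clause, in which a unanimously approved project $c$ is peeled off: $W=W'\cup\{c\}$ with $W'$ laminar proportional and subject to u-afford for the instance $(P_{-c},\,l-\text{cost}(c))$, and, since every voter approves $c$, $u_i(W)=u_i(W')+1$ for all $i$. Given a coalition $S$ and a deviation $T$ with $|S|\geq\frac{\text{cost}(T)}{l}\cdot n$ (so in particular $\text{cost}(T)\leq l$ because $|S|\leq n$), I would distinguish $c\in T$ from $c\notin T$. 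When $c\in T$, I would apply the induction hypothesis in $(P_{-c},\,l-\text{cost}(c))$ to the deviation $T\setminus\{c\}$; the affordability premise transfers because $\frac{\text{cost}(T)}{l}\geq\frac{\text{cost}(T)-\text{cost}(c)}{\,l-\text{cost}(c)\,}$ exactly when $\text{cost}(T)\leq l$, and the single unit contributed by $c$ to $u_i(W)$ cancels the unit the deviation spent on $c$.

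The main obstacle is the subcase $c\notin T$: then $S$ can afford $T$ under budget $l$ but perhaps not under the reduced budget $l-\text{cost}(c)$, so the induction hypothesis does not apply to $T$ itself. This is exactly the point at which u-afford is needed. I would use it to select a project $t^\ast\in T$ with $\text{cost}(t^\ast)\geq\text{cost}(c)$ and pass to $T^-=T\setminus\{t^\ast\}$; removing $t^\ast$ frees enough budget, since the shortfall between the two budget levels is at most $\frac{|S|}{n}\,\text{cost}(c)\leq\text{cost}(c)\leq\text{cost}(t^\ast)$, so $T^-$ is affordable for $S$ in $(P_{-c},\,l-\text{cost}(c))$. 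The induction hypothesis then yields $i\in S$ with $u_i(W')\geq u_i(T^-)\geq u_i(T)-1$, and the extra unit from $c$ restores $u_i(W)=u_i(W')+1\geq u_i(T)$.

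For the third, additive clause, where $P=P_1+P_2$, $l=l_1+l_2$, and $C(P_1)\cap C(P_2)=\emptyset$, I would split the coalition as $S_j=S\cap N_j$ and the deviation as $T_j=T\cap C(P_j)$; because the project sets are disjoint, each voter in $N_j$ derives utility only from $C(P_j)$, so $u_i(W)$ and $u_i(T)$ collapse to their side-$j$ parts. Using the balance condition $|P_1|\cdot l_2=|P_2|\cdot l_1$, equivalently $\frac{|N_j|}{l_j}=\frac{n}{l}$, a short counting argument shows that if both sides failed their own affordability premise $|S_j|\geq\frac{\text{cost}(T_j)}{l_j}\cdot|N_j|$ then summing would contradict $|S|\geq\frac{\text{cost}(T)}{l}\cdot n$; hence at least one side satisfies its premise, and the induction hypothesis there supplies the desired voter. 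The delicate bookkeeping to check is that the u-afford hypothesis restricts correctly to each sub-instance along the decomposition (this is precisely the content of being subject to u-afford) and that the budget arithmetic in the peeling step is tight; beyond these, each case is routine.
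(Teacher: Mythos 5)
Your proposal is correct and takes essentially the same approach as the paper's proof: induction on the laminar structure with the same three cases, the same budget arithmetic $\frac{\text{cost}(T)-\text{cost}(c)}{l-\text{cost}(c)}\leq\frac{\text{cost}(T)}{l}$ for the peeling step, and the same splitting argument via $|P_1|\cdot l_2=|P_2|\cdot l_1$ for the additive case. The only difference is organizational: in the subcase $c\notin T$ you invoke u-afford directly to remove a project $t^*$ with $\text{cost}(t^*)\geq\text{cost}(c)$ and apply the induction hypothesis to $T\setminus\{t^*\}$, whereas the paper assumes a blocking pair and derives that every $t\in T$ costs less than $\text{cost}(c)$, contradicting u-afford --- the same inequality read in contrapositive.
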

 \begin{proof}[Proof sketch]
The proof is by induction on laminar PB instances. Recall that a set of projects is in the core if for every group of voters $S\subseteq N$ and set of projects $T\subseteq C$ such that $S$ can afford $T$ with their share of the budget there is a voter $i\in S$ such that $u_i(W)\geq u_i(T)$. The proof then shows that if for any unanimously approved project $c$ in $W$ either $c$ is part of $T$, or there is some project in $T$ that costs at least as much as $c$, then there exists a voter $i\in S$ such that $u_i(W)\geq u_i(T)$.
The interesting case concerns the step with a unanimous candidate $c$: any group $S$ that could possibly block the core with bundle $T$ should all prefer $T$ over $W$ or not be able to buy $W$. This is impossible because by u-afford $T$ contains a project with a larger cost than $c$.
\end{proof}

Since the core implies EJR and PJR (as showed by \cite{peters2020proportionalPB} and Theorem \ref{thm:PB-EJR->PB-PJR}), any laminar profile also satisfies EJR and PJR under the same restrictions.
Note that in MWV-instances there always is a project in $T$ that has cost of at least cost$(c)$, so our restriction applies.
Hence, there, laminar bundles are in the core, and so satisfy EJR and PJR.
\begin{corollary}\label{cor:LP-PJR-EJR-core_MWV}
In laminar MWV-instances, LP implies PJR, EJR, and the core.
\end{corollary}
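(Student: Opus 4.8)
The plan is to derive Corollary~\ref{cor:LP-PJR-EJR-core_MWV} directly from Theorem~\ref{thm:LP-core_u-afford} by observing that the u-afford restriction becomes vacuous in the MWV setting. First I would recall that an MWV-instance is an approval-PB-instance in which all projects share the same cost; write $\kappa$ for this common cost, so $\text{cost}(c)=\kappa$ for every $c\in C$. The key observation, already flagged in the remark preceding the corollary, is that in this uniform-cost regime the condition ``there exists $t\in T$ with $\text{cost}(t)\geq\text{cost}(c)$'' from Definition~\ref{def:P-u-afford} holds automatically whenever $T$ is nonempty, since every project costs exactly $\kappa$.

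Next I would verify that u-afford (and indeed its inductive strengthening, \emph{subject to u-afford}) holds trivially for any laminar proportional bundle in an MWV-instance. Take a laminar MWV-instance $E=(P,l)$ and a laminar proportional bundle $W=\mathcal{R}(E)$. For any unanimously approved project $c\in C(P)$ arising in the decomposition of the instance, the relevant bundle at that stage is nonempty (a laminar proportional bundle at a stage with a unanimous project necessarily contains that project by item~(2) of Definition~\ref{def:lamprop_PB}, so $c$ itself witnesses the condition with $\text{cost}(c)\geq\text{cost}(c)$). Because every sub-instance appearing in the laminar decomposition is again an MWV-instance (uniform cost is preserved when we remove a project or split the profile), the u-afford condition propagates through the entire inductive structure. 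Hence $W$ is subject to u-afford in the sense required by Theorem~\ref{thm:LP-core_u-afford}.

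With the hypothesis of Theorem~\ref{thm:LP-core_u-afford} satisfied unconditionally, I can apply it to conclude that every laminar proportional bundle in a laminar MWV-instance is in the core. The remaining two implications are then immediate: the core implies EJR by \cite{peters2020proportionalPB}, and EJR implies PJR by Theorem~\ref{thm:PB-EJR->PB-PJR}. Chaining these gives that LP implies the core, EJR, and PJR on laminar MWV-instances, which is exactly the statement of the corollary.

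I expect the only genuine point requiring care to be the bookkeeping in the second paragraph: one must confirm that the \emph{subject to u-afford} notion, defined inductively on the instance structure, really does reduce to a condition that is satisfied at every node of the laminar decomposition, rather than merely at the top level. This amounts to checking that at each inductive step a unanimous project is always accompanied by a nonempty bundle of equal-cost projects, which follows from the definition of laminar proportionality itself (each stage selects at least the unanimous project). Everything else is a direct invocation of results already established, so no further calculation is needed.
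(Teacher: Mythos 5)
Your overall route is the paper's own: invoke Theorem~\ref{thm:LP-core_u-afford}, observe that the u-afford restriction is vacuous under uniform costs, then chain core $\Rightarrow$ EJR (\cite{peters2020proportionalPB}) $\Rightarrow$ PJR (Theorem~\ref{thm:PB-EJR->PB-PJR}). Your first and third paragraphs are exactly this argument, and they already suffice, modulo the one-line remark that an empty blocking bundle can never block, since $u_i(\emptyset)=0\leq u_i(W)$ for every voter.

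The problem is your second paragraph, which you single out as the crux. There you read ``subject to u-afford'' as a property of the \emph{winning} bundle $W$, and verify it by noting that $W$ contains the unanimous project $c$ at each stage of the decomposition. That is the wrong object: as the proof of Theorem~\ref{thm:LP-core_u-afford} makes clear (see the case $c\notin T$), u-afford restricts the \emph{blocking} bundles $T$ quantified over in the core definition --- the theorem asserts that $W$ cannot be blocked by any u-affordable $T$, not that $W$ itself must be u-affordable. The tell that your reading cannot be right is that your paragraph-two argument never uses the uniform-cost hypothesis in an essential way: the witness is always $c\in W$ itself, so the identical argument would ``verify'' the hypothesis in \emph{every} laminar PB-instance and yield that LP implies the core in full generality, contradicting Theorem~\ref{thm:lp-core_PB}, whose counterexample has the unanimous project $c$ inside the laminar proportional bundle $W$ and yet $W$ outside the core. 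The correct justification is the one you already gave in your first paragraph: with all costs equal to $\kappa$, every nonempty $T$ contains a project of cost at least $\text{cost}(c)$, so every potential blocking bundle is u-affordable (with respect to every sub-instance as well, since uniform cost is preserved by the laminar decomposition), and the restriction in the theorem simply disappears.
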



\section{Conclusions and Future Work}

Our study is a contribution towards the systematization of the axiomatic landscape of proportionality in PB (Figure \ref{fig:relations}, Table \ref{tab:rules_and_axioms_PB}). With respect to proportionality-inspired rules, we showed that priceability and PB generalizations of PJR and LP do not discriminate between Phgragm\'{e}n and Rule X (unlike EJR). 

We focused on the basic PB framework and on cardinal and approval-based utilities. Extending the study to richer PB settings (e.g., diversity constraints \citep{div_constraints1, div_constraints2, div_constraints3}, project groups \citep{Project_groups}, negative attitudes \citep{PB_with_neg_feelings}, or resource types \citep{aziz_shah_pb}), or the yet more general ordinal preferences setting \citep{aziz21proportionality}, are natural avenues of future research.

\section*{Acknowledgments}
The authors are greatly indebted to Jan Maly, who pointed out a mistake in an earlier version of the paper, and to the anonymous reviewers of IJCAI/ECAI'22 for many helpful comments. 
Zoé Christoff acknowledges support from the project Social Networks and Democracy (VENI project number Vl.Veni.201F.032) financed by the Netherlands
Organisation for Scientific Research (NWO). Davide Grossi acknowledges support by the \href{https://hybrid-intelligence-
centre.nl}{Hybrid Intelligence Center}, a 10-year program funded by the Dutch
Ministry of Education, Culture and Science through the Netherlands
Organisation for Scientific Research (NWO).


\bibliographystyle{named}
\bibliography{biblio}

\clearpage

\appendix
\section{Proofs of Theorems and Propositions}\label{sec:appendix-proofs}
\paragraph{Proposition \ref{prop:ejr-up-to-one}}
\begin{quote}\it
Def.~\ref{def:EJR_PB} and Def.~\ref{def:EJR_PB-up-to-one} are equivalent in MWV-instances.
\end{quote}
\begin{proof}
Since in MWV utilities are either 0 or 1, we can assume that for an $(\alpha,T)$-cohesive group of voters $S$, for each $c\in T$, $\alpha(c)=1$. Then $\sum_{c\in T}\alpha(c)=|T|$, and  $u_i(\mathcal{R}(E))\geq\sum_{c\in T}\alpha(c)$ translates to $|A_i\cap\mathcal{R}(E)|\geq |T|$. The ``up-to-one'' condition  ``for some $a\in C$,  $u_i(\mathcal{R}(E)\cup \{a\})>\sum_{c\in T}\alpha(c)$'' (note the strict inequality) is in MWV equivalent to ``for some $a\in C$, $|A_i\cap\mathcal{R}(E)\cup\{a\}|>|T|$'', which is, since we only count the size of a set in integers, also equivalent to $|A_i\cap\mathcal{R}(E)|\geq|T|$.
\end{proof}

\paragraph{Theorem \ref{thm:PB-PJR}}
\begin{quote} \it
    MWV-PJR and PJR are equivalent in MWV-instances.
\end{quote}

In order to show that given definition (Definition \ref{def:PB-PJR}) is indeed a generalisation of MWV-PJR, we apply it to the multi-winner voting setting and show that MWV-PJR and PJR are equivalent in MWV-instances. Recall the definition of MWV-PJR:
\begin{definition}[Proportional justified representation for MWV \citep{sanchezfernandez2016}]
 An approval based voting rule $\mathcal{R}$ satisfies proportional justified representation for MWV (MWV-PJR) if for every ballot profile $P$ and committee size $k$, the rule outputs a committee $W = \mathcal{R}(P,k)$ s.t.:
for every $\ell\leq k$ and every $\ell$-cohesive set of voters $S\subseteq N$, it holds that $|W\cap\left(\cup_{i\in S}A_i\right)|\geq \ell$, where a set $S$ is $\ell$-cohesive if $|S|\geq \ell\cdot \frac{n}{k}$ and $|\cap_{i\in S}A_i|\geq \ell$.
\end{definition}

We need the following lemma:

\begin{lemma} \label{lemma1}
Let $N$ be a set of agents, $C$ a set of alternatives, let each project $c$ have a cost  $\text{cost}(c)=\frac{1}{k}$, and let each voter $i$ have a utility $u_i(c)\in \{0,1\}$ for project $c$. Define the approval set $A_i$ of voter $i$ as $A_i=\{c\in C: u_i(c)=1\}$.
Then
\begin{enumerate}[(a)]
    \item for given $\alpha:C\rightarrow[0,1] $ and $T\subseteq C$,  any group of voters $S$ that is $(\alpha, T)$-cohesive is also $\ell$-cohesive for $\ell=|T'|$ with $T'= \{c\in T: \alpha(c)> 0\}$, and
    \item for every group $S$ that is $\ell$-cohesive there are $T\subseteq C$ with $|T|=\ell$ and $\alpha: C\rightarrow [0,1]$ with $\alpha(c)=1$ for all $c\in C$, such that $S$ is $(\alpha,T)$-cohesive.
\end{enumerate}
\end{lemma}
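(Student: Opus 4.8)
The plan is to verify each direction by directly unpacking the two definitions of cohesiveness and exploiting the unit-cost, binary-utility structure of MWV-instances. The key preliminary observation is that, since every project costs $\frac{1}{k}$ and the budget is $l=1$, the cohesiveness size threshold collapses to the MWV one: $\frac{\text{cost}(T)}{l}\cdot n = \frac{|T|}{k}\cdot n = |T|\cdot\frac{n}{k}$. This lets me compare the bound $|S|\geq\frac{\text{cost}(T)}{l}\cdot n$ of $(\alpha,T)$-cohesiveness with the bound $|S|\geq\ell\cdot\frac{n}{k}$ of $\ell$-cohesiveness on equal footing.

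For part (a), I would assume $S$ is $(\alpha,T)$-cohesive and check the two clauses of $\ell$-cohesiveness for $\ell=|T'|$. The size clause is immediate: since $T'\subseteq T$ we have $|T'|\leq|T|$, so $|S|\geq|T|\cdot\frac{n}{k}\geq|T'|\cdot\frac{n}{k}$. For the intersection clause I would show $T'\subseteq\cap_{i\in S}A_i$: for any $c\in T'$ we have $\alpha(c)>0$ by definition of $T'$, and $(\alpha,T)$-cohesiveness gives $u_i(c)\geq\alpha(c)>0$ for every $i\in S$; since utilities are binary, this forces $u_i(c)=1$, i.e. $c\in A_i$ for all $i\in S$. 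Hence $|\cap_{i\in S}A_i|\geq|T'|=\ell$, completing this direction.

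For part (b), starting from an $\ell$-cohesive $S$, I would use $|\cap_{i\in S}A_i|\geq\ell$ to choose any $T\subseteq\cap_{i\in S}A_i$ with $|T|=\ell$, and set $\alpha(c)=1$ for every $c\in C$. The size condition $|S|\geq\frac{\text{cost}(T)}{l}\cdot n=\ell\cdot\frac{n}{k}$ is exactly the first clause of $\ell$-cohesiveness, and for every $i\in S$ and $c\in T\subseteq A_i$ we have $u_i(c)=1=\alpha(c)$, so $S$ is indeed $(\alpha,T)$-cohesive.

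Neither direction is technically deep; the only point requiring genuine care is the step in (a) where I pass from $\alpha(c)>0$ to membership in every $A_i$, which is precisely where the binary-utility assumption does the work and explains why $T'$ rather than $T$ is the correct witness set: projects with $\alpha(c)=0$ impose no approval constraint and must be discarded before counting. I would also make sure the asymmetry between the two parts (the witness index is $\ell=|T'|$ in (a), whereas in (b) we are handed $\ell$ and build $T$ with $|T|=\ell$) is stated explicitly, so that the equivalence underlying Theorem~\ref{thm:PB-PJR} reads cleanly.
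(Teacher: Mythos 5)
Your proof is correct and follows essentially the same route as the paper's: in (a) you use $T'\subseteq T$ for the size bound and the binary-utility assumption to get $T'\subseteq\cap_{i\in S}A_i$, and in (b) you pick $T\subseteq\cap_{i\in S}A_i$ with $|T|=\ell$ and set $\alpha\equiv 1$, exactly as in the paper. Your explicit remark that $\frac{\text{cost}(T)}{l}\cdot n=|T|\cdot\frac{n}{k}$ under unit costs is a small clarity improvement over the paper's implicit use of this identity, but the argument is the same.
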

\begin{proof}[Proof of of Lemma \ref{lemma1}]
\textbf{(a)}: Assume that $S\subseteq N$ is $(\alpha, T)$-cohesive for some  $\alpha: C\rightarrow [0,1]$ and $T\subseteq C$. By definition, in the approval based multi-winner setting this means that $|S|\geq |T|\cdot \frac{n}{k}$ and that for all $i\in S$ and all $c\in T$, $u_i(c)\geq \alpha(c)$. Now take a subset $T'$ from $T$ of all the projects in $T$ that have at least some utility for all voters in $S$: $T'= \{c\in T: \alpha(c)> 0\}$. 
Because $u_i(c)\in \{0,1\}$ (i.e. voting is approval based), $\alpha(c)>0$ and $u_i(c)\geq \alpha(c)$ imply that $u_i(c)=1$. Hence, for all $i\in S$ and all $c \in T', u_i(c)=1$. 
We can rewrite this as $T'\subseteq \cap_{i\in S}A_i$, so  $|T'|\leq |\cap_{i\in S}A_i|$.
Now if we call $|T'|=\ell$, we have $\ell\leq |T|$, so $|S|\geq \ell\cdot \frac{n}{k}$ and $|\cap_{i\in S}A^\alpha_i|\geq \ell$
, so $S$ is $\ell$-cohesive.\\
\textbf{(b)}: Assume that $S\subseteq N$ is $\ell$-cohesive for some $\ell\leq k$. This means that $|S|\geq \ell \cdot \frac{n}{k}$ and $|\cap_{i\in S}A_i|\geq \ell$. We take $T\subseteq \cap_{i\in S}A_i$ with $|T|=\ell$ (which we can do because $\ell\leq |\cap_{i\in S}A_i|$). Now take $\alpha: C\rightarrow [0,1]$ such that $\alpha(c) =1$ for all $c\in C$. Then $|S|\geq|T|\cdot \frac{n}{k}$ and for all $i\in S$ and all $c \in T, u_i(c)=1=\alpha (c)$, so $S$ is $(\alpha, T)$-cohesive.
\end{proof}

\begin{proof}[Proof of Theorem \ref{thm:PB-PJR}]
We show that a rule that satisfies the new definition of PJR also satisfies the old MWV-PJR for every MWV-instance. 

\textbf{PJR $\Rightarrow$MWV-PJR}: Assume that a rule $\mathcal{R}$ satisfies PJR. Now take an arbitrary MWV-instance $E$ where all projects have unit-cost, and voting is approval based: the utility of a project for a voter is either 0 (when the voter does not approve the project) or 1 (when the voter does approve the project). According to PJR, it is true that for all $S$, $\alpha: C\rightarrow[0,1]$ ,  and $T$ with  $T\subseteq C$, if $S$ is $(\alpha,T)$-cohesive ($|S|\geq \text{cost}(T)\cdot n$ and it holds that $u_i(c)\geq \alpha (c)$ for every voter $i\in S$ and each project $c\in T$), then 
$$\sum_{c\in \mathcal{R}(E)}(\max_{i\in S}u_i(c))\geq\sum_{c\in T}\alpha(c).$$

Because $E$ satisfies the assumptions of unit-cost and approval based voting, this boils down to the following: for all $S, \alpha: C\rightarrow[0,1]$, and  $T\subseteq C$ with $|S| \geq |T|\cdot \frac{n}{k}$
and for which $u_i(c)\geq \alpha(c) $ for all $ i\in S$ and for all $c \in T$,
it is the case that
\begin{equation}
    |\mathcal{R}(E)\cap(\cup_{i\in S} A_i)|\geq\sum_{c\in T}\alpha(c).
\end{equation}

In order to show that $\mathcal{R}$ satisfies MWV-PJR, we have to prove that for all $S$ and $\ell\leq k$ where $S$ is $\ell$-cohesive ($|S|\geq \ell \cdot \frac{n}{k}$ and $|\cap_{i\in S}A_i|\geq l$), $$|\mathcal{R}(E)\cap(\cup_{i\in S} A_i)|\geq \ell.$$ 
Take arbitrary $S\subseteq N$ and $\ell\leq k$ and suppose that $S$ is $\ell$-cohesive. According to Lemma 1(b), there are $T\subseteq C$ with $|T|=\ell$ and $\alpha: C\rightarrow [0,1]$ with $\alpha(c)=1$ for all $c\in C$, such that $S$ is $(\alpha,T)$-cohesive. Because $\mathcal{R}$ satisfies PJR, this implies that 
$|\mathcal{R}(E)\cap(\cup_{i\in S} A_i)|\geq\sum_{c\in T}\alpha(c)$. However, because of our choice of $T$ and $\alpha$, we know that $\sum_{c\in T}\alpha(c)=|T|=\ell$, so $|\mathcal{R}(E)\cap(\cup_{i\in S} A_i)|\geq \ell$, which is what we had to prove. \\
\\
We still have to prove that a rule that satisfies MWV-PJR also satisfies the newly defined PJR in MWV-elections:

\textbf{MWV-PJR $\Rightarrow$PJR}:
Assume that a rule $\mathcal{R}$ satisfies MWV-PJR: in every election instance $E$, for every $\ell$-cohesive group of voters $S$, $|\mathcal{R}(E)\cap(\cup_{i\in S} A_i)|\geq \ell.$ We want to prove that in every MWV-instance, any ($\alpha, T$)-cohesive group $S$ satisfies $|\mathcal{R}(E)\cap(\cup_{i\in S} A_i)|\geq\sum_{c\in T}\alpha(c)$. Take arbitrary such election instance $E$, and suppose that a group $S$ is $(\alpha,T)$-cohesive. Then according to Lemma 1(a), when we take $T'= \{c\in T: \alpha(c)> 0\}$, $S$ is $\ell$-cohesive for $\ell=|T'|$. 
Because $R$ satisfies MWV-PJR, it follows that $|\mathcal{R}(E)\cap(\cup_{i\in S} A_i)|\geq \ell = |T'|$. By definition of $T'$, $|T'|\geq \sum_{c\in T'}\alpha(c) = \sum_{c\in T}\alpha(c)$, so $|\mathcal{R}(E)\cap(\cup_{i\in S} A_i)|\geq\sum_{c\in T}\alpha(c)$, which is what we had to prove. 
\end{proof}
\newpage
\paragraph{Proposition \ref{prop:PAV-PJR}}
\begin{quote}\it
    PAV does not satisfy PJR.
\end{quote}
\begin{proof}As \cite{peters2020proportionalPB_arxiv} show by the example of Onetown, PAV does guarantee proportional representation, so in specific it does not satisfy PJR. Because PAV uses approval voting, we should use  Definition \ref{def:PJR_PB_approval} for PJR here. The group of voters that live in Leftside are $T$-cohesive for $T=\{L_1, L_2, L_3\}$: they can with their share of the money afford all projects in $T$ and do all approve all projects in $T$. However, the amount of projects in the committee $W$ that PAV returns that at least one of the voters in Leftside ($S$) approves of is $|W\cap_{i\in S}A_i| =2$, which is less than the number of projects in $T$.
\end{proof}

\paragraph{\hypertarget{link:prop:Phragmen-priceable}{Proposition \ref{prop:Phragmen-priceable}}}
\begin{quote}\it
    Phragm\'{e}n satisfies priceability.
\end{quote}
\begin{proof}As \cite{peters2020proportionalPB} mention, there is no obvious generalization of Phragm\'{e}n to non-approval voting. For non-unit costs, it is however directly applicable. We can easily see that Phragm\'{e}n for non-unit costs is still priceable (with the PB definition of priceability). We can construct a price system as follows: If the rule stopped at time $t$, let the initial budget of every voter be $t$, so the total initial budget is $b=t\cdot n$. Then every time that a project is selected by Phragm\'{e}n and added to the winning bundle $W$, add the amount of money that a voter $i$ pays for it to $p_i(c)$. Clearly, in the price system constructed in this way, the cost for every project $c\in W$ is paid, every voter has spend at most $\frac{b}{n}$ units of money and pays only for voters she approves, and for each non-selected project $c\notin W$, its supporters do not have enough money to buy it, because if they would have, $c$ would have been added to $W$ already.
\end{proof}
 
 \paragraph{Proposition \hypertarget{link:prop:Phragmen-negative}{\ref{prop:Phragmen-negative}}}
 \begin{quote}\it
    Phragm\'{e}n and Rule X do not satisfy LP.
 \end{quote}
 \begin{proof} As a counterexample, we can use Table \ref{tab:counterexample2}:
 \begin{table}[t]
    \centering
        \setlength{\tabcolsep}{20pt}
    \begin{tabular}{c c c }
    \cline{1-2}
        \multicolumn{2}{|c|}{\cellcolor[HTML]{C1C1C1}$c_3$, 0.5} &   \\ \hline
        \multicolumn{2}{|c|}{\cellcolor[HTML]{C1C1C1}$c_2$, 0.5} &   \multicolumn{1}{c|}{$c_5$, 0.5}  \\ \hline
        \multicolumn{2}{|c|}{\cellcolor[HTML]{C1C1C1}$c_1$, 0.5} &   \multicolumn{1}{c|}{\cellcolor[HTML]{C1C1C1}$c_4$, 0.5} \\ \hline
        \multicolumn{3}{|c|}{$c_6$, 0.7}  \\ \hline
        $v_1$ & $v_2$ & $v_3$ 
    \end{tabular}
    
    \caption{A counterexample that shows that Phragm\'{e}n and Rule X do not satisfy LP and that priceability does not imply LP: profile $P$ with bundle $W$ (in grey). It should be read in the same way as Table \ref{tab:ex_lamprop}}
    \label{tab:counterexample2}
\end{table}
Suppose the approval votes are as shown in Table \ref{tab:counterexample2}, projects $c_1, ..., c_5$ have a cost of 0.1 and project $c_6$ has a cost of 0.7, and the total budget is 1. The profile is still a laminar PB-instance according to Definition \ref{def:laminar_PB}. In this situation, Phragm\'{e}n will return $\{c_1, ..., c_5\}$: at time $t=0.05$, $v_1$ and $v_2$ can buy $c_1$, then at $t=0.1$, they can buy $c_2$ and $v_3$ can buy $c_4$, at $t=0.15$, $v_1$ and $v_2$ can buy $c_3$ and at $t=0.2$, $v_3$ can buy $c_5$. The remaining amount of budget is 0.5, so $c_6$ is not affordable anymore, and $W=\{c_1, ..., c_5\}$ is returned. The same bundle will be returned for Rule X: each voter starts with a budget of $\frac{1}{3}$. $c_1, c_2$, and $c_3$ are $\rho$-affordable for $\rho=0.05$, $c_4$ and $c_5$ are $\rho$-affordable for $\rho=0.1$ (whereas $c_6$ would only be $\rho$-affordable for $\rho\geq \frac{0.7}{3}$ in the beginning), the remaining budget is not enough to buy $c_6$, so $W$ will be returned. A laminar proportional bundle, however, would consist of $c_6$, two of $\{c_1, c_2, c_3\}$ and one of $\{c_4, c_5\}$.
\end{proof}

\paragraph{Theorems \ref{thm:PJR-pr-laminarinstances_MWV} and \ref{thm:EJR-lp_-laminarinstances_MWV}}
\begin{quote}\it
    In laminar MWV-instances there exist bundles that satisfy PJR or EJR or are in the core, but are not priceable. (3)\\
    In laminar MWV-instances, there exist bundles that satisfy PJR, EJR, or are in the core, but do not satisfy LP. (8)
\end{quote}
\begin{proof} 
\begin{table}[t]
    \centering
        \setlength{\tabcolsep}{20pt}
    \begin{tabular}{c c c }
    \cline{1-2}
        \multicolumn{2}{|c|}{\cellcolor[HTML]{C1C1C1}$c_2$} & \multicolumn{1}{c|}{\cellcolor[HTML]{C1C1C1}$c_4$}  \\ \hline
        \multicolumn{2}{|c|}{\cellcolor[HTML]{C1C1C1}$c_1$} &   \multicolumn{1}{c|}{\cellcolor[HTML]{C1C1C1}$c_3$} \\ \hline
        \multicolumn{3}{|c|}{$c_5$}  \\ \hline
        $v_1$ & $v_2$ & $v_3$ 
    \end{tabular}
    \caption{A counterexample that shows that PJR, EJR, and the core do not imply priceability or LP: profile P with bundle W indicated in grey. It should be read in the same manner as Table \ref{tab:ex_lamprop}, the costs are the same for all projects.
    }
    \label{tab:counterexample_PJR_priceability}
\end{table}We start by giving a counterexample (shown in Table \ref{tab:counterexample_PJR_priceability}) that shows that a bundle $W$ in a laminar profile can satisfy PJR without being priceable. Suppose we have 3 voters, $v_1, v_2,$ and $v_3$, and 5 projects $c_1, ..., c_5$, and that $k=4$. Projects $c_1$ and $c_2$  are approved by  voters $v_1$ and $v_2$, projects $c_3$ and $c_4$ are approved by voter $v_3$, and project $c_5$ is approved by all three voters. This election instance is laminar, as can easily be checked, and the selected bundle $W$, as indicated in grey in Table \ref{tab:counterexample_PJR_priceability} satisfies PJR: for all $\ell \leq 4$ and all $\ell$-cohesive group of voters $S$, it holds that $|W\cap \cup_{i\in S}A_i|\geq \ell$. It is, however, not priceable: suppose, for a contradiction, that there is a price system $\mathbf{ps}$ that supports $W$. The price $p$ of this system has to be such that $v_3$ can pay for both $c_3$ and $c_4$ (because no other voter can pay for these projects), so $p\leq 0.5$. However, $v_1$ and $v_2$ together have a budget of 2, which they must spend only on $c_1$ and $c_2$. Hence, $p>\frac{2}{3}$, because otherwise $v_1$ and $v_2$ together have enough unspent budget to pay for $c_5$ which is not in $W$. Hence $p\leq 0.5$ \textit{and} $p>\frac{2}{3}$, which is a contradiction. Therefore, there is no price system that supports $W$.
 
This counterexample also shows that a committee that satisfies EJR or is in the core does not necessarily have to be priceable or laminar proportional. The committee in Table \ref{tab:counterexample_PJR_priceability} does satisfy EJR and it is in the core, as can easily be checked, but it is neither priceable nor laminar proportional. 
\end{proof}
 
\paragraph{Theorem \ref{thm:pr_not->PJR-PB}} 
\begin{quote}\it
    There exist priceable bundles that do not satisfy PJR.
\end{quote}
 \begin{proof} 
Take a PB-instance $E$ with $N=\{s_1, s_2, v_1, v_2, v_3\}$, $C=\{t_1, t_2, c_1, c_2, c_3\}$, the cost of all projects is 0.2, $\alpha(t_1)=\alpha(t_2)=0.4$, voters $s_1$ and $s_2$ have some utility only for the $t$-projects: $u_{s_1}(t_1)=u_{s_1}(t_2)=0.6, u_{s_1}(c)=0$ for $c\in \{c_1, c_2, c_3\}$, $u_{s_2}=u_{s_1}$, and voters $v_1, v_2$, and $v_3$ only have utility for the $c$-projects: for $v\in \{v_1, v_2, v_3\}$, $u_v(t_1)=u_v(t_2)=0$ and $u_v(c)>0$ for $c\in \{c_1, c_2, c_3\}$. Furthermore, define a pricesystem ps with $b=1$ and $p_s(c)=0$ for $c\in \{c_1, c_2, c_3, t_1\}$, and $p_s(t_2)=0.1$ for $s\in \{s_1, s_2\}$, and with $p_v(t)=0$ and $p_v(c)=\frac{0.2}{3}$ for $v\in \{v_1, v_2, v_3\}$, $t\in \{t_1, t_2\}$, and $c\in \{c_1, c_2, c_3\}$. 

Now, let bundle $W=\{t_2, c_1, c_2, c_3\}$ be the winning bundle of some election rule.
We have: 
\begin{enumerate}[(1)]
    \item A voter can only pay for projects she gets at least some utility form: if $u_i(c)=0$, then $p_i(c)=0$ for each $i\in N$ and $c\in C$;
    \item Each voter can spend the same budget of $\frac{b}{n}$ units of money: $\sum_{c\in C}p_i(c)\leq \frac{1}{n}$ for each $i\in N$;
    \item for each $c \in W $, the sum of the payments for $c$ equals its price: $\sum_{i\in N}p_i(c)=cost(c)$;
    \item no project outside of the bundle gets any payment: for all $c\notin W, \sum_{i\in N}p_i(c)=0$;
    \item there exists no non-selected project whose supporters in total have a remaining unspent budget of more than its cost: for all $c\notin W, \sum_{i\in N \text{ for which } u_i(c)>0}(b-\sum_{c'\in W}p_i(c'))\leq cost(c)$.
\end{enumerate}
Hence, $W$ is a priceable bundle. However, if we take $S=\{s_1, s_2\}$ and $T=\{t_1, t_2\}$, we have $|S|=2=cost(T)\cdot n$ and $u_i(c)\geq \alpha (c)$ for all $i\in S, c\in T$, so $S$ is $(\alpha, T)$-cohesive. Nevertheless, $\sum_{c\in W}(\max_{i\in S}u_i(c))=0.6<0.8=\sum_{c\in T}\alpha(c)$, s$W$ does not satisfy PJR. Therefore, this counterexample shows that it is not the case that all priceable bundles satisfy PJR. 
\end{proof}

\paragraph{Theorem \ref{thm:PB-EJR->PB-PJR}}
\begin{quote}\it
    EJR(-up-to-one) implies PJR(-up-to-one).
\end{quote}

\begin{proof} 
We prove the implication with the up-to-one condition. The implication without the condition follows directly. Suppose that rule $\mathcal{R}$ satisfies EJR-up-to-one and take an $(\alpha,T)$-cohesive group of voters $S$ for 
some $\alpha: T\rightarrow [0,1], T\subseteq C$. Because $\mathcal{R}$ satisfies EJR-up-to-one, there is a voter $i\in S$ such that either (1) $u_i(\mathcal{R}(E))\geq\sum_{c\in T}\alpha(c)$ or (2) for some $a\in C$ it holds that $u_i(\mathcal{R}(E)\cup \{a\})>\sum_{c\in T}\alpha(c)$. For this voter $i$, in case (1),  $\sum_{c\in \mathcal{R}(E)}(u_i(c))\geq\sum_{c\in T}\alpha(c)$. In case (2), $\sum_{c\in \mathcal{R}(E)\cup \{a\}}(u_i(c))\geq\sum_{c\in T}\alpha(c)$. Hence, $\mathcal{R}$ satisfies PJR-up-to-one, since $\sum_{c\in \mathcal{R}(E)}(\max_{i\in S}u_i(c))\geq\sum_{c\in T}\alpha(c)$ or $\sum_{c\in \mathcal{R}(E)\cup\{a\}}(\max_{i\in S}u_i(c))\geq\sum_{c\in T}\alpha(c)$.
\end{proof}

\paragraph{Theorem \ref{th:pr_not_imply_lp}}
\begin{quote}\it
    In laminar MWV-instances, priceability does not imply LP.
\end{quote}
\begin{proof} Let us recall the definitions of laminar MWV-instances and laminar proportionality from \cite{Peters2020_lim_of_welf}.
\begin{definition}[Laminar MWV-instances] \label{def:Laminar_instances_MWV}
   An MWV-instance $(P,k)$ is \textit{laminar} if either:
    \begin{enumerate}
        \item $P$ is unanimous and $|C(P)|\geq k$.
        \item There is a candidate $c\in C(P)$ such that $c\in A_i$ for all $A_i\in P$, the profile $P_{-c}$ is not unanimous and the instance $(P_{-c}, k-1)$ is laminar (with $P_{-c}=(A_1\backslash \{c\}, ..., A_n\backslash\{c\})$).
        \item There are two laminar MWV-instances $(P_1, k_1)$ and $(P_2, k_2)$ with $C(P_1)\cap C(P_2)=\emptyset$ and  $|P_1|\cdot k_2 = |P_2|\cdot k_1$ such that $P=P_1+P_2$ and $k=k_1+k_2$.
    \end{enumerate}
    \end{definition}
    \begin{definition}[LP for MWV-instances]  \label{def:Laminar_Proportionality_MWV}
     A rule $\mathcal{R}$ satisfies LP if for every laminar MWV-instance with ballot profile $P$ and committee size $k$, $\mathcal{R}(P,k) = W$ where $W$ is a laminar proportional committee, i.e. 
    \begin{enumerate}
        \item If $P$ is unanimous, then $W \subseteq A_i$ for some $i\in N$ (if everyone agrees, then part of the candidates they agree on is chosen. 
        \item If there is a unanimously approved candidate $c$ s.t. $(P_{-c}, k-1)$ is laminar, then $W = W'\cup \{c\}$ where $W'$ is a committee which is laminar proportional for $(P_{-c}, k - 1)$.
        \item If $P$ is the sum of laminar instances $(P_1, k_1)$ and $(P_2, k_2)$, then $W = W_1 \cup W_2$ where $W_1$ is
laminar proportional for $(P_1, k_1)$ and $W_2$ is laminar proportional for $(P_2, k_2)$.
    \end{enumerate}
    \end{definition}
In order to show that priceability does not imply LP, we construct a counterexample. Consider a profile $P$ as in Table \ref{tab:counterexample2}, but where every project has unit cost. 
We have 3 voters, $v_1, v_2,$ and $v_3$, and 6 projects $c_1, ..., c_6$. Projects $c_1, c_2,$ and $c_3$ are approved by  voters $v_1$ and $v_2$, projects $c_4$ and $c_5$ are approved by voter $v_3$, and project $c_6$ is approved by all three voters.  This profile is laminar for $k=4$, we can construct it according to the inductive Definition \ref{def:Laminar_instances_MWV} by first concatenating $\{c_1, c_2, c_3\}$ ($k=2$) and $\{c_4, c_5\}$ ($k=1$) by Definition \ref{def:Laminar_instances_MWV}.3 and then adding $c_6$ by Definition \ref{def:Laminar_instances_MWV}.2, which results in $k=2+1+1=4$.\\ The selected bundle $W$, as indicated in grey, is not laminar proportional, because in order to be laminar proportional, $c_6$ would have to be included in $W$. It is, however, priceable: we can construct a price system with price $p=0.65$: $v_1$ and $v_2$ together pay for $c_1, c_2,$ and $c_3$ and have $2-3\cdot0.65 =0.05$ left over, and $v_3$ pays for $c_4$ and has $1-0.65=0.35$ left over. Hence, $v_3$ cannot pay for $c_5$ anymore, and all voters together have an unspent budget of $0.4$, so they cannot pay for $c_6$.
\end{proof}

\paragraph{Theorem \ref{thm:LP-pr_PB}}
\begin{quote}\it
    LP implies priceability. 
\end{quote}
\begin{proof} We construct an inductive proof on the structure of laminar PB-instances, very similar to the proof for the unit-cost case, to prove that for every bundle $W$ that is laminar proportional for a laminar PB-instance ($P,l$), where $P$ is the list of approval sets of the voters and $l$ is the budget, there exists a price system $\mathbf{ps}=(b,(p_i)_{i\in N})$ where  $b=\text{cost}(W)$. 

\textbf{Basis}: If $P$ is unanimous with $\text{cost}(C(P))\geq l$ and $W$ is laminar proportional for ($P,l$) (with cost$(W)\leq l$), then $W\subseteq C(P)$, so the voters can just divide their budget over the projects in $W$. If we set the initial budget to be $b=\text{cost}(W)$, every voter can spend $\frac{b}{n} =\frac{\text{cost}(W)}{n}$. We can now let every voter spend $\frac{\text{cost}(c)}{n}$ on every project $c\in W$, so every project $c\in W$ gets exactly cost$(c)$. Then every voter spends in total $\sum_{c\in W}\frac{\text{cost}(c)}{n} = \frac{\text{cost}(W)}{n}$, so does not have anything left to spend on other projects.

\textbf{Inductive Hypothesis}: Suppose that $(P',l')$, $(P_1,l_1)$, and $(P_2,l_2)$ are laminar PB-instances, bundles $W', W_1$, and $W_2$ are laminar proportional for respectively $(P',l')$, $(P_1,l_1)$, and $(P_2,l_2)$, and suppose that for $W'$ there exists a price system $\mathbf{ps'}$ with initial budget $b'=\text{cost}(W')$, for $W_1$ there exists a price system \textbf{ps$_1$} with initial budget $b_1=\text{cost}(W_1)$ and for $W_2$ there exists a price system \textbf{ps$_2$} with initial budget $b_2=\text{cost}(W_2)$. Furthermore, suppose that $P'$ is not unanimous, that $C(P_1)\cap C(P_2)=\emptyset$ and that $|P_1|\cdot l_2 = |P_2|\cdot l_1$.

\textbf{Inductive step}:
\begin{itemize}
    \item There is a unanimously approved project $c$ such that $P = P'_{+c}$, where $P'_{+c}=(A_1\cup\{c\}, ... A_n\cup\{c\})$ (case 2 of Definition \ref{def:laminar_PB}). Suppose that $W$ is laminar proportional for $(P,l'+\text{cost}(c))$, then $W=W'\cup \{c\}$. By the inductive hypothesis, there exists a price system $\mathbf{ps'}$ for $W'$ with initial budget $b'=\text{cost}(W')$. Because $c$ is unanimously approved, in theory all voters can pay for $c$. We know that in $\mathbf{ps'}$, there was no project that was not in $W'$ for which its supporters together had enough (more than its cost) unspent budget. If we would give every voter $\frac{\text{cost}(c)}{n}$ more budget, which we let them spend entirely on $c$, $c$ will get enough money and no voter will have more unspent budget than they had before. Also, the initial budget of every voter is now $\frac{b'}{n}+\frac{\text{cost}(c)}{n} = \frac{\text{cost}(W')+\text{cost}(c)}{n} = \frac{\text{cost}(W)}{n}=\frac{b}{n}$ units of money, and the initial budget is $b=\text{cost}(W)$ and all the individual payment functions stay the same. Because for every project $c$ in $W'$ the sum of the individual payments was equal to cost$(c)$, this is also the case for every project in $W$.
    
    Formally we define the price system $\mathbf{ps}$ for the instance $(P,l)$ as follows: $\mathbf{ps}=(b,(p_i)_{i\in N})$ with $b=\text{cost}(W)$ and $p_i: C\rightarrow [0,1]$ such that $p_i(c)=\frac{\text{cost}(c)}{n}$ and $p_i(d)=p'_i(d)$ for all other projects $d\in C(P)$, where $p'_i$ is the payment function of voter $i$ in the price system $\mathbf{ps'}$.
    
    To show that this is indeed a valid price system that supports $W$, we look at the five points of the definition of a price system that supports a bundle:
    \begin{enumerate}
        \item Voters only pay for projects they get at least some utility from because they did so in $ps'$, and the only project which they now pay for that they did not pay for before is $c$, which is unanimously approved, so has some utility for all $i\in N$.
        
        \item All voters $i\in N$ have an initial budget of $\frac{b}{n}$:
        \begin{eqnarray}
        \sum_{d\in C}p_i(d) &=& \sum_{d\in C}p'_i(d)+\frac{\text{cost}(c)}{n}\\
        &\leq & \frac{b'}{n}+ \frac{\text{cost}(c)}{n}\label{eq:1}\\
        &=& \frac{b}{n}\label{eq:2},
        \end{eqnarray}
        where (\ref{eq:1}) follows from the inductive hypothesis: because \textbf{ps$'$} is a price system that supports $W'$, the sum of the payments of voter $i$ for the items in $W'$ is smaller than or equal to $\frac{b}{n}$. Equation \ref{eq:2} holds because the new budget $b$ is defined as $b=\text{cost}(W)=\text{cost}(W'\cap \{c\})=b'+\text{cost}(c)$.
        
        \item For each selected project $d\in W$, if $d\neq c$ the sum of the payments is 
        \begin{eqnarray}
        \sum_{i\in N}p_i(d)&=& \sum_{i\in N}p'_i(d)\\
        &=&  \text{cost}(d).
        \end{eqnarray}
         This follows from the inductive hypothesis: because \textbf{ps$'$} is a price system that supports $W'$, the sum of the payments of all voters for $d$ equals its cost. 
         For $c$, $\sum_{i\in N}p_i(c)= n\cdot \frac{\text{cost}(c)}{n}=\text{cost}(c)$.

        \item For any non-selected project $d\notin W$, $ \sum_{i\in N}p_i(d) = \sum_{i\in N}p'_i(d) = 0$.
        \item For any project outside of the bundle $d\notin W$, its supporters do not have a remaining unspent budget of more than $\text{cost}(c)$:
        \begin{eqnarray}
           & & \sum_{i\in N \text{ for which } u_i(d)>0}(b-\sum_{e\in W=W'\cup \{c\}} p_i(e))\\
            & = &\sum_{i\in N: u_i(d)>0}(b-p_i(c)-\sum_{e\in W'} p_i(e))\\
            & =&\sum_{i\in N: u_i(d)>0}(\frac{\text{cost}(W)}{n}-\frac{\text{cost}(c)}{n}-\sum_{e\in W'} p'_i(e))\\
            & =&\sum_{i\in N: u_i(d)>0} (\frac{\text{cost}(W')}{n} - \sum_{e\in W'} p'_i(e))\\
            & =& \sum_{i\in N: u_i(d)>0} (b' - \sum_{e\in W'} p'_i(e)) \label{eq:3}\\
            &\leq& \text{cost}(d),
        \end{eqnarray}
    so there is no non-selected project whose supporters in total have a remaining unspent budget of more than its cost. Equation (\ref{eq:3}) follows from the inductive hypothesis because $\mathbf{ps'}$ is a price system that supports $W'$, so satisfies Definition \ref{def:Priceability_PB}.5, the other equations are just rewritings of the formula.
    \end{enumerate}
    Hence, $\mathbf{ps}$ is indeed a valid price system that supports bundle $W$.
    
    \item $P=P_1+P_2$ and $l=l_1+l_2$ (case 3 of Definition \ref{def:laminar_PB}). Take $W=W_1\cup W_2$, which is by definition laminar proportional for $(P,l)$. We have to show that $W$ is priceable for this election instance. Note that there are no overlapping projects between $P_1$ and $P_2$, there is no voter in $P_1$ that gets any utility from a project from $C(P_2)$, and no voter in $P_2$ that gets any utility from a project from $C(P_1)$. 
    By the inductive hypothesis,   there exists a price system $\mathbf{ps_1}=(b_1,\{p_{1,i}\}_{i\in N})$ for $W_1$ with initial budget $b_1=\text{cost}(W_1)$, and for $W_2$ there exists a price system $\mathbf{ps_2}=(b_2,\{p_{2,i}\}_{i\in N})$ with  $b_2=\text{cost}(W_2)$. Also by the inductive hypothesis, $|P_1|\cdot l_2 = |P_2|\cdot l_1$.
    
    We can now define a price system $\mathbf{ps}$ that supports $W$ as follows: $\mathbf{ps}=(b,(p_i)_{i\in N})$ with $b=\text{cost}(W) = b_1+b_2$, and for all voters $i\in N$,
    $$ p_i(c)= p'_{1,i}(c) + p'_{2,i}(c),$$ where $p'_{1,i}$ and $p'_{2,i}$ are extended versions of respectively $p_{1,i}$ and $p_{2,i}$ that yield zero for the candididates that those are not defined for: 
    \begin{displaymath}
        p'_{1,i}(c) = \left\{
        \begin{array}{ll}
        p_{1,i}(c) & \mbox{if $c\in C(P_1)$ and $i\in P_1$}; \\
        0 & \mbox{if $c\in C(P_2)$ or $i\in P_2$}.
        \end{array}
        \right.
    \end{displaymath}
    \begin{displaymath}
        p'_{2,i}(c) = \left\{
        \begin{array}{ll}
        0 & \mbox{if $c\in C(P_1)$ or $i\in P_1$}; \\
        p_{2,i}(c) & \mbox{if $c\in C(P_2)$ and $i\in P_2$}.
        \end{array}
        \right.
    \end{displaymath}

    Again, we show that this is a valid price system that supports $W$ by looking at the five points of the definition:
    \begin{enumerate}
        \item We know that $\mathbf{ps_1}$ is a valid price system that supports $W_1$, so for voters $i\in P_1$ and projects $c\in W_1$, if $p_{1,i}(c)>0$, then $u_i(c)>0$, so  $c\in A_i$. Analogously, for voters $i\in P_2$ and $c\in W_2$ if $p_{2,i}(c)>0$, then $u_i(c)>0$, so $c\in A_i$. 
        Suppose $p_i(c)>0$. If $c\in C(P_1)$, then $p_i(c) = p_{1,i}(c)$, so $i\in P_1$ because there is no voter in $P_2$ that approves a project from $C(P_1)$ and vice versa. Hence, for $c\in C(P_1)$, if $p_i(c)>0$, then $p_{1,i}(c)>0$ and then $c\in A_i$. Similarly, we can argue that for $c\in C(P_2)$, if $p_i(c)>0$, then $p_{2,i}(c)>0$ and then $c\in A_i$. Because $P=P_1+P_2$, $C(P)=C(P_1)\cup C(P_2)$, so for all $c\in C(P)$, if $p_i(c)>0$ then $c\in A_i$, so $u_i(c)>0$.
        \item 
        $\sum_{c\in C(P)}p_i(c) = \sum_{c\in C(P)}p'_{1,i}(c) + p'_{2,i}(c)$. 
        We already saw that voters from $P_1$ do not pay for projects from $C(P_2)$ and vice versa. Hence, if $i\in P_1$, then $\sum_{c\in C(P)}p_i(c) = \sum_{c\in C(P)}p'_{1,i}(c) \leq \frac{b_1}{n}$ by the inductive hypothesis (because \textbf{ps$_1$} is a valid price system with initial budget $b_1$). Furthermore, we have $\frac{b_1}{n}\leq\frac{b_1+b_2}{n}=\frac{b}{n}$, so $\sum_{c\in C(P)}p_i(c)\leq \frac{b}{n}$. If $i\in P_2$, then  $\sum_{c\in C(P)}p_i(c) = \sum_{c\in C(P)}p'_{2,i}(c)
        \leq \frac{b}{n}$, in the same way.
        \item For each selected project $c\in W$, the sum of its payments is $\sum_{i\in N}p_i(c)=\sum_{i\in N}(p'_{1,i}(c) + p'_{2,i}(c)) $. For $c\in C(P_x)$ (with $x\in\{1,2\}$) this is $\sum_{i\in N}p'_{x,i}(c) = \text{cost}(c) $. This follows from the inductive hypothesis that \textbf{ps$_1$} and \textbf{ps$_2$} are price systems that support $W_1$ and $W_2$, so the sum of payments of all voters in these systems for a selected project is equal to the cost of the project.
        \item Because $W = W_1\cup W_2$, any project that is not selected in the new bundle, $c\notin W$, was not selected in $W_1$ or $W_2$, so did not get any payment there: for $c\in C(P_x)$, $\sum_{i\in N}p_{x,i}(c)=0$. Hence, it also does not get any payment in the new system: for $c\in C(P_x)$,  $\sum_{i\in N}p_{i}(c)=\sum_{i\in N}p'_{x,i}(c)=\sum_{i\in N}p_{x,i}(c)=0$ (for $x\in\{1,2\}$).
        \item All non-selected projects are only supported by voters from their own `old' system, who did not have in total a remaining unspent budget of more than its cost there, so neither will they have it now:\\
        Without loss of generality, assume that an non-selected project $c\notin W$ is part of $C(P_1)$. Then because $c\notin W$, we also have $c\notin W_1$, because if it was in $W_1$, it would also have been in $W$. Because $\mathbf{ps_1}$ is a price system that supports $W_1$, we know that $\sum_{i\in N \text{ for which } u_i(c)>0}(1-\sum_{e\in W_1} p_{1,i}(e))\leq \text{cost}(c)$. 
        However, for all voters $i \in N \text{ for which } c\in A_i$, we have $i\in P_1$, so for all $e\in W_1, p_{1,i}(e)=p_i(e)$, and for all $e\in W_2, p_{i}(e)=0$. This implies that
        \begin{eqnarray*}
                \sum_{i\in N \text{ for which } u_i(c)>0}(1-\sum_{e\in W_1} p_{1,i}(e))&\leq& \text{cost}(c) \\
                &\Leftrightarrow&\\ 
                 \sum_{i\in N: u_i(c)>0}(1-\sum_{e\in W=W_1\cup W_2} p_i(e)) &\leq & \text{cost}(c).
        \end{eqnarray*}
        We can analogously show the same for $c\in C(P_2)$, so conclude that for all $c\in C(P)=C(P_1)\cup C(P_2)$, if $c\notin W$, $$\sum_{i\in N: u_i(c)>0}(1-\sum_{e\in W} p_i(e)) \leq \text{cost}(c)$$
    \end{enumerate}
    By these five points, we have shown that $\mathbf{ps}$ is indeed a valid price system that supports bundle $W$.
\end{itemize}
We have shown by induction over laminar PB-instances that, if a bundle $W$ is laminar proportional in a laminar PB-instance ($P,l$), it is also supported by a price system with with $b=\text{cost}(W)$. Hence we can conclude that LP implies priceability in laminar PB-instances.
\end{proof}

\paragraph{Theorem \ref{thm:lp-core_PB}}
\begin{quote}\it
    There exist laminar proportional bundles that do not satisfy PJR, EJR, or are not in the core.
\end{quote}
\begin{proof}
We will prove this theorem by giving a counterexample. Consider a situation with $N=\{v_1, v_2, v_3, v_4\}$, a unanimously approved project $c$ with cost$(c) = 1$, a set of 8 projects $T=\{t_1, ..., t_8\}$ that cost $\frac{1}{3}$ each and are all approved by $v_1, v_2,$ and $v_3$, and a set of 4 projects $\{x_1, x_2, x_3, x_4\}$ that also cost $\frac{1}{3}$ and are approved by $v_4$. This profile is shown in Figure \ref{fig:counter_lp-core}.
\begin{figure}[h]
\centering
    \begin{lrbox}{\mytable}
        \setlength{\tabcolsep}{20pt}
    \begin{tabular}{c c c c}
    \cline{1-3}
        \multicolumn{3}{|c|}{$t_8$, \tiny $\frac{1}{3}$} &  \\ \cline{1-3}
        \multicolumn{3}{|c|}{$t_7$, \tiny $\frac{1}{3}$} &  \\ \cline{1-3}
        \multicolumn{3}{|c|}{\cellcolor[HTML]{C1C1C1}$t_6$, \tiny $\frac{1}{3}$} &  \\ \cline{1-3}
        \multicolumn{3}{|c|}{\cellcolor[HTML]{C1C1C1}$t_5$, \tiny $\frac{1}{3}$} &  \\ \hline
        \multicolumn{3}{|c|}{\cellcolor[HTML]{C1C1C1}$t_4$, \tiny $\frac{1}{3}$}  &  \multicolumn{1}{c|}{$x_4$, \tiny $\frac{1}{3}$}  \\ \hline 
        \multicolumn{3}{|c|}{\cellcolor[HTML]{C1C1C1}$t_3$, \tiny $\frac{1}{3}$}  &  \multicolumn{1}{c|}{$x_3$, \tiny{$\frac{1}{3}$}}  \\ \hline 
        \multicolumn{3}{|c|}{\cellcolor[HTML]{C1C1C1}$t_2$, \tiny $\frac{1}{3}$}  & \multicolumn{1}{c|}{\cellcolor[HTML]{C1C1C1}$x_2$, \tiny $\frac{1}{3}$}  \\ \hline 
        \multicolumn{3}{|c|}{\cellcolor[HTML]{C1C1C1}$t_1$, \tiny $\frac{1}{3}$}  & \multicolumn{1}{c|}{\cellcolor[HTML]{C1C1C1}$x_1$, \tiny $\frac{1}{3}$}  \\ 
         \cline{1-3}  \cline{4-4}
        \multicolumn{4}{|c|}{\cellcolor[HTML]{C1C1C1}$c$, 1}  \\ \hline
        $v_1$ & $v_2$ & $v_3$ & $v_4$
    \end{tabular}
    \end{lrbox}
    \setlength{\fboxrule}{2pt}
    \ooalign{%
        \vspace{-58pt}
      \hss\usebox{\mytable}\hss \cr
      {\color{red}%
      \fbox{\phantom{\rule[-88pt]{\dimexpr\wd\mytable-56pt}{\dimexpr\baselineskip+5pt}}}
      }
      \vspace{15pt}
    }
    \caption{A laminar PB-instance with a laminar proportional bundle $W$ (indicated in grey), and a set of projects $T$ (indicated with a red border). The figure should be read in the same way as Table \ref{tab:ex_lamprop}.}
    \label{fig:counter_lp-core}
\end{figure}

The bundle $W=\{c, t_1, ..., t_6, x_1, x_2\}$ as indicated in grey in the figure is laminar proportional for limit $l=\frac{11}{3}$ (which is also its cost). However, $S = \{v_1, v_2, v_3\}$ is a blocking coalition. $S$ can afford $T$:  $|S|=3>\frac{\frac{8}{3}}{\frac{11}{3}}\cdot 4 = \frac{\text{cost}(T)}{l}\cdot n$, and for any voter $i\in S$, $u_i(T)=8> 7 = u_i(W)$. Therefore, $W$ is not in the core.

Note that all laminar PB-instances are approval-PB-instances.
In approval-PB-instances, the definitions of $(\alpha,T)$-cohesiveness and EJR simplify to the following \citep{peters2020proportionalPB}:
\begin{definition}[$T$-cohesiveness and EJR for approval-PB-instances] \label{def:EJR_PB_approval}
 A group of voters $S$ is $T$-cohesive for $T\subseteq C$ if $T$ is affordable with their share of the budget and they all approve all projects in $T$: $|S|\geq \frac{\text{cost}(T)}{l}\cdot n$ and $T\subseteq \cup_{i\in S}A_i$. A bundle $W$ satisfies \textit{approval-EJR} if for all $T$-cohesive groups $S\subseteq N$, there is a voter $i$ in $S$ who approves at least as many projects in $W$ as in $T$: $|W\cap A_i|\geq |T|$.
\end{definition} 
In the same way we can restrict our definition of PJR to the approval-based setting, and obtain the following:
\begin{definition}[Approval-PJR]\label{def:PJR_PB_approval}
 A bundle $W$ satisfies \textit{approval-PJR} if for all $T$-cohesive groups $S\subseteq N$, the number of projects in $W$ that is approved by at least one of the voters in $S$ is larger than the number of projects in $T$: $|W\cap \cup_{i\in S}A_i|\geq |T|$.
\end{definition}
 Using the definitions for EJR (Def. \ref{def:EJR_PB_approval}) and PJR (Def. \ref{def:PJR_PB_approval}) for approval-PB-instances, we see that in the counterexample in Figure \ref{fig:counter_lp-core}, the group $S$ is $T$-cohesive for given $T$, and there is no voter $i\in S$ such that $|W\cap A_i|\geq |T|$, neither is $|W\cap \cup_{i\in S}A_i|\geq |T|$.
\end{proof}
\paragraph{Theorem \ref{thm:LP-core_u-afford}}
\begin{quote}\it
    Laminar proportional bundles satisfy the core subject to u-afford.
\end{quote}
\begin{proof} We will prove this by induction over the structure of laminar profiles.\\
\textbf{Basis}: For unanimous profiles $P$ (Definition \ref{def:laminar_PB}, item 1), $W$ will consist of projects that are approved by every voter, so clearly for every group of voters $S \subseteq N$ and $T \subseteq C$ with $|S| \geq \frac{\text{cost}(T)}{l} \cdot n$, there is some voter in $S$ (namely all voters in $S$) who approves at least as many projects in $W$ as in $T$. This is true even in the general situation, without the restriction of u-afford.\\
\textbf{Inductive hypothesis}: Suppose that $W', W_1$, and $W_2$ are arbitrary bundles in respective laminar instances $(P', l'), (P_1, l_1)$, and $(P_2, l_2)$  with $C(P_1)\cap C(P_2)=\emptyset$ and $|P_1|\cdot l_2 = |P_2|\cdot l_1$, that  $W', W_1$, and $W_2$ are laminar proportional and are in the core subject to u-afford.\\
\textbf{Inductive step}: 
\begin{itemize}
    \item Suppose there is a unanimously approved project $c$ and $(P_{-c}, l-\text{cost}(c)) = (P', l')$ is laminar, $W'$ is a laminar proportional bundle for $(P', l')$, and $W=W'\cup \{c\}$ (Definition \ref{def:laminar_PB}, item 2). \\
    Assume for a contradiction that $W$ is not in the core. Then there must exist a group of voters $S$ and a set of projects $T$ such that $|S|\geq \frac{\text{cost}(T)\cdot n}{l}$, with $u_i(T)>u_i(W)$ for all voters $i\in S$. Because utilities are assumed to be additive and everyone approves project $c$, we know that for all voters $i\in S$, $u_i(T\backslash \{c\}) = u_i(T) - u_i(c)$ if $c\in T$ and $u_i(T\backslash \{c\}) = u_i(T)$ otherwise, and $u_i(W') = u_i(W) - u_i(c)$, so for all $i\in S$ \begin{equation}
    u_i(T\backslash\{c\})>u_i(W') \label{eq:uT>uW}
    \end{equation} 
    We distinguish two cases, with either $c\in T$ or $c\notin T$, and show that in both cases $W$ is in the core.
    \begin{enumerate}
        \item $c\in T$:\\
        Because we have chosen $S$ and $T$ such that $\frac{|S|}{n}\geq \frac{\text{cost}(T)}{l}$ and because cost$(T)\leq l$ (since by definition $|S|\leq n$), we find that 
    \begin{equation} \label{eq:costT-c}
        \frac{\text{cost}(T\backslash \{c\})}{l'}= \frac{\text{cost}(T) - \text{cost}(c)}{l - \text{cost}(c)} \leq \frac{\text{cost}(T)}{l}\leq \frac{|S|}{n}. 
    \end{equation}
    However, from the inductive hypothesis we know that $W'$ is in the core (subject to u-afford) in the election instance $(P',l')$, so for all $S'\subseteq N', T'\subseteq C'$ with $|S'|\geq \frac{\text{cost}(T')\cdot n}{l'}$, there is a voter $i'\in S'$ with $u_{i'}(W) \geq u_{i'}(T)$. In this instance, we can take $T'=T\backslash \{c\}$ and $S'=S$. As shown above, $S$ can afford $T'$ ($|S|\geq \frac{\text{cost}(T')\cdot n}{l'}$), so there is a voter $i'\in S'$ with $u_{i'}(W) \geq u_{i'}(T)$. This is a contradiction with \ref{eq:uT>uW}, which proves that $W$ is indeed in the core (subject to u-afford) if $c\in T$.
    \item $c\notin T$:\\
    In this case, equation \ref{eq:costT-c}    does not hold anymore, because not necessarily $\frac{\text{cost}(T)}{l - \text{cost}(c)} \leq \frac{\text{cost}(T)}{l}$,
    in fact the first fraction is greater because $c$ has a positive cost. Now suppose that $S$ can afford $T$ and that every voter in $S$ prefers $T$ to $W$. Then, since $c$ is unanimously preferred, all voters in $S$ prefer $T$ to $W\backslash\{c\}$, and even all voters in $S$ prefer $T\backslash\{t\}$, where $t$ is an arbitrary project in $T$, to $W\backslash\{c\}$,
    because we are in an approval voting setting (where the utility of a project a voter approves is 1 and the utility of all projects a voter does not approve is 0). However, according to our inductive hypothesis, if the voters in $S$ together could afford $T\backslash\{t\}$ in the situation where the budget is $l-\text{cost}(c)$, there would be a voter in $i\in S$ with $u_i(W\backslash\{c\})\geq u_i(T\backslash\{t\})$, since $W\backslash\{c\} =W'$ is in the core (subject to u-afford) there. Hence, $S$ cannot afford $T\backslash\{t\}$ in the instance $(P', l-\text{cost}(c))$. We now have that 
    \begin{equation} 
        \frac{\text{cost}(T)}{l}\leq \frac{|S|}{n} < \frac{\text{cost}(T) - \text{cost}(t)}{l - \text{cost}(c)}, 
    \end{equation}
    where $t$ was an arbitrary project in $T$, so all $t\in T$ must have a lower cost than $c$. Hence, under our restriction that there exists $t\in T$ with $\text{cost}(t)\geq\text{cost}(c)$, there is no $S$ that can block the winning bundle. Hence, if $c\notin T$, $W$ is in the core subject to u-afford.
    
    \end{enumerate}
    
    \item Suppose that $(P,l)$ is the sum of $(P_1, l_1)$ and $(P_2, l_2)$, i.e.that $P=P_1+P_2$ and $l=l_1+l_2$, and that $W = W_1\cup W_2$ (Definition \ref{def:laminar_PB}, item 3). Assume for a contradiction that $W$ is not in the core. Then there must exist a group of voters $S$ and a set of projects $T$ such that $|S|\geq \frac{\text{cost}(T)\cdot n}{l}$, with $u_i(T)>u_i(W)$ for all voters $i\in S$. Because $P_1$ and $P_2$ are strictly separated and voters can only approve projects from their own election instances, each voter only gets utility from the selected projects from his own instance, so we can divide $S$ into $S_1$ and $S_2$, and $T$ into $T_1$ and $T_2$ such that all voters from $S_1$ and projects from $T_1$ only occur in $P_1$ and all voters from $S_2$ and projects from $T_2$ only occur in $P_2$. Then we have that for all voters $i\in S_1$, $u_i(T_1)>u_i(W)$, and  for all voters $i\in S_2$, $u_i(T_2)>u_i(W)$.
    From $|S|\geq \frac{\text{cost}(T)\cdot n}{l}$ follows that
    \begin{eqnarray}
       |S_1|+|S_2| &\geq& \frac{(\text{cost}(T_1)+ \text{cost}(T_2))\cdot (n_1+n_2)}{l_1+l_2}. \label{eq:S1+S2}
    \end{eqnarray}
    Now, assume for a contradiction that both $|S_1| < \frac{\text{cost}(T_1)\cdot (n_1)}{l_1} $ and $|S_2| < \frac{\text{cost}(T_2)\cdot (n_2)}{l_2}$. Then from Equation \ref{eq:S1+S2} and the fact that $\frac{n_1}{l_1}=\frac{n_2}{l_2}$ (from the inductive hypothesis) we have that:
        \begin{eqnarray*}
        \resizebox{0.9\columnwidth}{!}{
        $
        \begin{aligned}
        \frac{\text{cost}(T_1)\cdot n_1}{l_1} + \frac{\text{cost}(T_2)\cdot n_2}{l_2} &>& |S_1| +|S_2|\\
       \text{cost}(T_1)\cdot \frac{n_1}{l_1} + \text{cost}(T_2)\cdot\frac{ n_2}{l_2}&\geq& \frac{(\text{cost}(T_1)+ \text{cost}(T_2))\cdot (n_1+n_2)}{l_1+l_2}\\
       (\text{cost}(T_1)+ \text{cost}(T_2))\cdot \frac{n_2}{l_2} &>& \frac{(\text{cost}(T_1)+ \text{cost}(T_2))\cdot (n_1+n_2)}{l_1+l_2}\\
       \frac{n_2}{l_2} &>& \frac{n_1+n_2}{l_1+l_2}\\
       \frac{n_2\cdot (l_1+l_2)}{l_2\cdot (l_1+l_2)} &>& \frac{(n_1+n_2)\cdot l_2}{(l_1+l_2)\cdot l_2}\\
       l_1\cdot n_2 + l_2 \cdot n_2 &>& l_2\cdot n_1 + l_2 \cdot n_2\\
       l_1\cdot n_2&>& l_2\cdot n_1 = l_2 \cdot \frac{n_2\cdot l_1}{l_2} = n_2\cdot l_1,  
        \end{aligned}
        $
       }
    \end{eqnarray*}
    which is clearly a contradiction. Hence, at least one of $|S_1| \geq \frac{\text{cost}(T_1)\cdot (n_1)}{l_1} $ and $|S_2| \geq \frac{\text{cost}(T_2)\cdot (n_2)}{l_2}$ must be true. Without loss of generality, assume that $|S_1| \geq \frac{\text{cost}(T_1)\cdot (n_1)}{l_1}$. Then, since $W_1$ is a core solution in the instance $(P_1, l_1)$, there exists $i \in S_1$ such that $u_i(W_1) \geq u_i(T_1)$. However, we already knew that for all voters $i\in S_1$, $u_i(T_1)>u_i(W)$. Since for voters $i$ from $S_1$ $u_i(W_1)=u_i(W)$, this is a contradiction, that shows that $W$ is in the core in the election instance $(P, l)$.
\end{itemize}
This completes the proof.
\end{proof}

\section{Additional Results}
\subsection{Relations Between Axioms}
 \subsubsection{Balanced Stable Priceability}\label{sec:BSP-LP}
 Note that the counterexample in Table \ref{tab:counterexample2} that shows that priceability does not imply LP (in Appendix \ref{sec:appendix-proofs}, \hyperlink{link:th:pr_not_imply_lp}{proof} of Theorem \ref{th:pr_not_imply_lp})) is not efficient. By electing $c_6$ instead of $c_3$ no agent's utility would decrease, but $v_3$'s utility would increase. Maybe making it efficient would make it laminar proportional, because then unanimous candidates have to be selected. Note however that electing $c_6$ instead of $c_4$ would make it efficient and still keep it priceable ($v_3$ could just spend her money on $c_6$ instead of on $c_4$), but still it is not laminar proportional because the instance without $c_6$ is not laminar proportional.\\
 However, if the payments would have been equally divided over the voters that approve a candidate, it would have been laminar proportional. In \cite{petersmarket}, a property that demands exactly this is defined: Balanced Stable Priceability (BSP). This property demands that a price system is balanced: voters that get utility from a candidate must all pay the same price for this candidate, and that the system is stable: there is no coalition of voters that wants to change their payments so that they get more utility (or pay less). As \cite{petersmarket} show, the bundles that satisfy BSP for a price $p$ are the same as the bundles selected by a variant of Rule X, and because Rule X returns laminar proportional bundles in laminar profiles, probably BSP implies LP. We can show easily by induction over laminar profiles that this is indeed the case.
 \begin{theorem}\label{thm:BSP-LP}
 Balanced Stable Priceability implies LP in MWV-instances.
 \end{theorem}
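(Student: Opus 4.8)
The plan is to proceed by induction on the inductive structure of laminar MWV-instances (Definition~\ref{def:Laminar_instances_MWV}), mirroring the three-case recursion that defines laminar proportional committees (Definition~\ref{def:Laminar_Proportionality_MWV}). Throughout, I would fix a committee $W$ that is balanced stably priceable, witnessed by a common price $p$ per selected candidate, a common per-voter budget, and balanced payment functions $(p_i)_{i\in N}$, and I would show that this $W$ decomposes exactly as the laminar structure prescribes. The two facts I expect to use repeatedly are: (i) \emph{balancedness}, which forces all voters who pay for a selected candidate to split its price $p$ equally; and (ii) \emph{stability}, which guarantees that the supporters of any unselected candidate do not retain enough unspent budget to fund it. The base case ($P$ unanimous with $|C(P)|\geq k$) is immediate: every candidate is approved by every voter, and since a priceable committee funds only selected candidates, $W\subseteq C(P)=A_i$ for all $i$, which is exactly clause~(1) of laminar proportionality.

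For the inductive step corresponding to clause~(2)---a unanimously approved candidate $c$ with $(P_{-c},k-1)$ laminar---the key claim is that BSP forces $c\in W$. I would argue by contradiction: if $c\notin W$, then its supporter set is all of $N$, and a counting argument against the candidates actually funded (each at the common price $p$, but supported by only part of the electorate) shows that the pooled unspent budget of $c$'s supporters exceeds $p$, contradicting stability. Once $c\in W$ is established, I would restrict the witnessing price system to $C\setminus\{c\}$, lowering each voter's budget by their equal share of $p$, and check that the restriction is again balanced and stable for $(P_{-c},k-1)$. The induction hypothesis then yields that $W\setminus\{c\}$ is laminar proportional for $(P_{-c},k-1)$, so $W=(W\setminus\{c\})\cup\{c\}$ is laminar proportional as required.

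For the inductive step corresponding to clause~(3)---$P=P_1+P_2$ with $C(P_1)\cap C(P_2)=\emptyset$ and $|P_1|\cdot k_2=|P_2|\cdot k_1$---note that voters in $P_1$ approve no candidate of $C(P_2)$ and vice versa, so every payment splits cleanly along the partition and $W$ decomposes as $W=W_1\cup W_2$ with $W_j=W\cap C(P_j)$ funded solely by the voters of $P_j$. The crucial quantitative step is to establish $|W_1|=k_1$ and $|W_2|=k_2$: since all selected candidates carry the same price $p$ and all voters share the same budget, the number of candidates a group can fund is proportional to its size, and the balance condition $|P_1|\cdot k_2=|P_2|\cdot k_1$ pins the split down to exactly $(k_1,k_2)$ rather than any other budget-feasible allocation. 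With the correct sizes in hand, each $W_j$ is balanced stably priceable for $(P_j,k_j)$, so the induction hypothesis gives laminar proportionality of each part, and hence of $W$.

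The main obstacle I anticipate is exactly this quantitative size-splitting in clause~(3): one must rule out ``unbalanced'' allocations in which, say, $P_1$ funds more than $k_1$ candidates by exploiting leftover budget, and this is where the ratio condition $|P_1|\cdot k_2=|P_2|\cdot k_1$, the uniformity of the price guaranteed by balancedness, and stability all have to be combined. A secondary delicate point is verifying in clause~(2) that the restricted price system stays stable after peeling off $c$---that no previously-blocked candidate becomes affordable once budgets are reduced---but this should follow because every voter's budget drops by the same amount and removing $c$ leaves all approvals among the remaining candidates unchanged.
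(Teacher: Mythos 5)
Your overall strategy---induction on the structure of laminar MWV-instances with the three cases of Definition~\ref{def:Laminar_Proportionality_MWV}---is exactly the paper's, and your base case and the decomposition of the price system in clause~(3) match its proof. However, your argument for clause~(2) has a genuine gap, rooted in how you read stability. You describe stability as guaranteeing that ``the supporters of any unselected candidate do not retain enough unspent budget to fund it''; that is condition~(3) of ordinary priceability (Definition~\ref{def:Priceability_PB}), not the stability condition of BSP. Under that reading your counting argument cannot force $c\in W$: the paper's own counterexample for Theorem~\ref{th:pr_not_imply_lp} (Table~\ref{tab:counterexample2} with unit costs) is a laminar instance where a committee excluding the unanimously approved $c_6$ is supported by a price system with $p=0.65$ whose total unspent budget is $0.4<p$, and the payments there can be split equally among each candidate's approvers, so the system is balanced as well. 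Hence balancedness plus a no-unspent-budget condition provably does \emph{not} imply LP, and no counting of leftover budget will rescue the step. What forces $c\in W$---and what the paper's proof uses---is the strictly stronger stability of BSP, in which a deviating coalition may withdraw payments \emph{already made} to selected candidates (the sets $R_i$ in the stability condition) and redirect them: if $c\notin W$, the grand coalition $N$ deviates by each voter giving up one candidate she currently pays for and all $n$ voters sharing the price of $c$; every voter's utility is unchanged (one approved candidate traded for another) while her payment strictly decreases, contradicting stability. Your proposal never invokes this reallocation feature, which is precisely what separates BSP from priceability, so as written the clause-(2) step fails.

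Two further remarks. First, your subsequent ``peeling off'' of $c$ (subtract each voter's equal share of $p$ from her budget and restrict the system to $C\setminus\{c\}$) is exactly how the paper proceeds, and it is where balancedness is genuinely needed; that part is fine. Second, on clause~(3) you are in fact more careful than the paper: the paper simply splits the price system along the partition and invokes the induction hypothesis, silently assuming $\lvert W_1\rvert=k_1$ and $\lvert W_2\rvert=k_2$, whereas you correctly observe that without this size split the phrase ``laminar proportional for $(P_j,k_j)$'' is not even well-typed. But you only name this obstacle---combining the ratio condition $\lvert P_1\rvert\cdot k_2=\lvert P_2\rvert\cdot k_1$ with the uniform price and stability---without carrying it out, so in your write-up, as in the paper's, that quantitative step remains to be supplied.
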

 \begin{proof}
  We will give an inductive proof to show this.  \\
 \textbf{Basis}: for unanimous profiles with $|C(P)|\geq k$, any candidate $c$ that is in $W$ gets at least some payment in the price system that supports $W$, and hence is in $A_i$ for some voter $i$, and because $P$ is unanimous, $c\in A_i$ for all voters $i$, so $W\subseteq C(P)$.\\
\textbf{Inductive Hypothesis}: Assume laminar profiles $(P',k')$, $(P_1, k_1)$ and $(P_2,k_2)$ are laminar and respective bundles $W'$, $W_1$, and $W_2$ are laminar proportional if they satisfy BSP, where $P'$ is not unanimous. \\
\textbf{Inductive Step}:
\begin{itemize}
    \item Suppose $c$ is a unanimously approved candidate, such that the instance $(P', k') = (P - \{c\}, k-1)$ and that $W$ satisfies BSP in the instance $(P, k)$. Then, by stability, $c\in W$. Assume for a contradiction that $c$ would not be selected, then all voters together would rather pay for $c$ and all give up one of the candidates they now pay for: then they would all get the same utility because they all approve $c$, and would have to pay less because they can divide the price for $c$ over them all. Hence $c$ is selected in $W$. Now the bundle $W$ without $c$, which we call to be the $W'$ from the inductive hypothesis, still satisfies BSP, because every voter pays the same amount for $c$ (because the price system is \textit{balanced}), and hence we can just subtract the price they all pay for $c$ from the total budget every voter gets. Then, by the inductive hypothesis, $W'$ is laminar proportional, so $W$ itself is laminar proportional as well.
    \item Suppose $(P,k)$ consists of two separate laminar MWV-instances $(P_1, k_1)$ and $(P_2, k_2)$. Define $W_1$ as the set of candidates in $W$ from $P_1$, and $W_2$ as the set of candidates in $W$ from $P_2$, so $W=W_1\cup W_2$. If $W$ satisfies BSP, then in the price system that witnesses this, voters from $P_1$ can only vote and pay for candidates in $P_1$, and voters from $P_2$ can only vote and pay for candidates in $P_2$, so we can split the price system to get a price system for both instances, which shows that both $W_1$ and $W_2$ satisfy BSP. According to the inductive hypothesis, then $W_1$ and $W_2$ are laminar proportional, so $W$ is laminar proportional. 
\end{itemize}
We have thus shown by induction over laminar profiles that if a winning bundle in a laminar profile satisfies BSP, it also satisfies LP.
\end{proof}

Note however, that this result relies on all projects having unit-cost. In the general case with non-unit costs, it does not hold anymore. To show this, we first have to give a definition of BSP for PB-instances.
Recall the definition of BSP for MWV-instances from \cite{petersmarket}. We still use approval votes (for compatibility with LP), but can have arbitrary costs. Requirement \textbf{E1} stays the same. Requirement \textbf{E5} changes a bit:\\
Condition for Stability: There exists no coalition of voters $S\subseteq N$, no bundle $(W', \{u'_i\}_{i\in N})$ ($W'\subseteq C\backslash W$) and no collections $\{p'_i\}_{i\in S}$ and $\{R_i\}_{i\in N}$ (with $R_i\subseteq W$ for each $i\in N$) such that all the following hold:
\begin{enumerate}
    \item For each $c\in W'$: there exists a value $\rho_c$ such that $p'_i(c)=u'_i(c)\cdot \rho_c$.
    \item For each $c\in W'$: $\sum_{i\in S}p'_i(c)>\text{cost}(c)$.
    \item For each $i\in S$: $p_i(W \backslash R_i)+p'_i(W')\leq \frac{\ell}{n}$.
    \item For each $i\in S$:  $(u_i(W\backslash R_i)+u'_i(W'), p_i(W\backslash R_i)+p'_i(W'))\succeq (u_i(W), p_i(W)).$
\end{enumerate}
Now we can answer the question whether BSP still implies LP in laminar profiles without the unit-cost assumption. In MWV-instances, part of the reason why BSP implies LP is because the `stability' part form BSP there requires that any unanimously approved candidate is selected in the winning bundle, since any voter would rather pay for a unanimously approved candidate than for a non-unanimously approved candidate they pay for now: a unanimous candidate costs less (costs are shared by all voters), and gives the same utility. In PB-instances, it will still give the same utility, but may not be affordable, or at least may not cost less. This is because it is possible that the unanimous candidate has a much higher cost than the other candidates. Hence, it is not the case that any bundle satisfying BSP in a laminar PB-instance is laminar proportional. 

As an example, take the profile in Table \ref{tab:example_BSP-LP-PB}, which is a laminar PB-instance with a budget of $\ell =10$. The grey bundle $W=\{c_1, c_2, c_3, c_4\}$ is Balanced Stable Priceable : every voter has an initial budget of $2\frac{1}{2}$, $v_1$ and $v_2$ can pay for $c_1$ and $c_2$, and $v_3$ and $v_4$ for $c_3$ and $c_4$. In a MWV instance, the bundle $\{c_2, c_4, c_5\}$ would be preferred, since all could share the cost for $c_5$, and still have a utility of 2. However, since $c_5$ has a higher cost here, it is not possible for any group of voters to afford $c_5$ and still have the same amount of utility. Hence, $W$ satisfies BSP, although it does not satisfy LP (since there is an non-selected unanimous candidate). One could argue about the fairness of the LP axiom here though: is there a clear reason why only selecting $c_5$  is more proportional than selecting $W$?
\begin{table}
    \centering
    \setlength{\tabcolsep}{15pt}
    \begin{tabular}{c c c c}
    \hline
        \multicolumn{2}{|c|}{\cellcolor[HTML]{C1C1C1}$c_1$, 3} &  \multicolumn{2}{|c|}{\cellcolor[HTML]{C1C1C1}$c_3$, 1} \\ \hline
        \multicolumn{2}{|c|}{\cellcolor[HTML]{C1C1C1}$c_2$, 2} &   \multicolumn{2}{|c|}{\cellcolor[HTML]{C1C1C1}$c_4$, 4}  \\ \hline
        \multicolumn{4}{|c|}{$c_5$, 10}  \\ \hline
        $v_1$ & $v_2$ & $v_3$ & $v_4$
    \end{tabular}
    \caption{Example of a PB-instance with a bundle satisfying BSP but not LP. It should be read in the same way as Table \ref{tab:ex_lamprop}.}
    \label{tab:example_BSP-LP-PB}
\end{table}

\subsubsection{Priceability does not imply EJR or the core in MWV-instances}
\begin{theorem}\label{thm:pr-not-EJR_MWV}
Priceability does not imply EJR in MWV-instances.
\end{theorem}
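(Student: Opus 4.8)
The plan is to produce a single MWV-instance together with a bundle $W$ that is priceable (Definition \ref{def:Priceability_PB}) but fails approval-EJR (Definition \ref{def:EJR_PB_approval}), exactly mirroring the counterexample strategy already used for Theorem \ref{thm:pr_not->PJR-PB}. The governing constraint is that priceability implies PJR in MWV-instances (the arrow recorded in Figure \ref{fig:relations}), so I cannot make PJR fail; the EJR violation must instead arise from a cohesive group that is represented \emph{collectively}, by enough distinct approved projects in $W$, but not \emph{individually}, in that no single member approves $|T|$ of the selected projects. The design idea is therefore to take one cohesive group and split it across two disjoint, separately-approved projects.

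Concretely, I would take $N=\{v_1,v_2,v_3,v_4\}$ with committee size $k=2$, so that each project costs $\tfrac12$ and the budget is $l=1$, four candidates $a,b,c,d$, and approval sets $A_1=A_2=\{a,b,c\}$ and $A_3=A_4=\{a,b,d\}$. I claim $W=\{c,d\}$ works. For the EJR failure, observe that the whole electorate $S=N$ is $\{a,b\}$-cohesive: all four voters approve both $a$ and $b$, and $|S|=4=\frac{\mathrm{cost}(\{a,b\})}{l}\cdot n$ since $\mathrm{cost}(\{a,b\})=1$ (Definition \ref{def:alpha-T-cohesiveness}). Yet $|W\cap A_i|=1<2=|T|$ for every $i\in N$, so no member is represented to the required level and EJR fails, while $|W\cap(\cup_{i\in S}A_i)|=|\{c,d\}|=2\geq|T|$ keeps PJR intact, as it must.

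For priceability I would exhibit the price system with total budget $b=1$, so each voter's share is $\tfrac{b}{n}=\tfrac14$, in which $v_1,v_2$ each pay $\tfrac14$ towards $c$ and $v_3,v_4$ each pay $\tfrac14$ towards $d$. Verifying Definition \ref{def:Priceability_PB} is then routine: each payment goes only to an approved, selected project; each selected project receives exactly its cost $\tfrac12$; no non-selected project receives payment; and since every voter spends her entire share, the supporters of each non-selected project ($a$ and $b$) have leftover budget $0$, trivially at most the cost. The one genuinely delicate clause is condition (3) of Definition \ref{def:Priceability_PB}, that no non-selected project is affordable from its supporters' unspent money, and I neutralise it by exhausting the budget so that no unspent money remains. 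Thus the main obstacle is not the verification but fixing the group-splitting structure correctly: once $W$ separates the two halves of the cohesive group onto $c$ and $d$, PJR is automatically preserved while EJR breaks, and priceability follows immediately.
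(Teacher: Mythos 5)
Your proposal is correct and follows essentially the same strategy as the paper's own proof: both construct an MWV-instance whose priceable committee consists of ``private'' candidates (each voter paying exactly her share for one approved committee member, exhausting all budgets so condition (3) of Definition \ref{def:Priceability_PB} holds trivially), while a set of unanimously approved candidates is left out, so the whole electorate forms a cohesive group in which no single voter reaches the required representation level. The paper uses $n=3$, $k=3$ with one private candidate per voter and three unanimous ones; your $n=4$, $k=2$ variant with pairs sharing a candidate is the same mechanism in a slightly smaller instance.
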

\begin{proof} Take an election instance $E$ with 3 voters $N=\{v_1, v_2, v_3\}$ and 6 candidates $C=\{c_1, ...c_6\}$, and let $k=3$. Let every voter approve four projects, namely one that only that voter approves and three that all three voters approve:  $A_i =\{c_i, c_4, c_5, c_6\}$. Suppose the winning committee of the election is $W = \{c_1, c_2, c_3\}$. For this committee, there is a price system with $p=1$ in which every voter $i$ pays the price of candidate $i$ and nothing else. The group $S=N$ is 3-cohesive, because $|S|=3=3\frac{n}{k}$, and all three voters agree on the projects $\{c_4, c_5, c_6\}$. However, there is no voter in $S$ who approves 3 or more projects in $W$, so $W$ does not satisfy EJR.  
\end{proof}

Since the core implies EJR, this example also shows that priceable committees are not necessarily in the core. 

\subsection{Properties of Rules}

\subsubsection{PAV does not satisfy PJR.}
As we mentioned in the main text, as shown in \cite[Figure 2]{peters2020proportionalPB_arxiv} that PAV does not give a proportional bundle without the unit cost assumption, and therefore PAV does not satisfy PJR. Here we give an exact argument why PAV does not satisfy approval-PJR (Definition \ref{def:PJR_PB_approval}).
\begin{proposition}
There exist approval-PB-instances where PAV does not satisfy PJR
\end{proposition}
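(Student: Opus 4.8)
The plan is to refute this universal claim by exhibiting a single approval-PB-instance with non-unit costs on which the PAV winner violates approval-PJR (Definition~\ref{def:PJR_PB_approval}). This is the right strategy because Theorem~\ref{thm:PB-PJR}, together with the known fact that PAV satisfies MWV-PJR, already guarantees that \emph{no} counterexample can have unit costs: under unit costs PAV is proportional. Hence the construction must deliberately exploit the mismatch between a project's cost---which governs $T$-cohesiveness via the threshold $\operatorname{cost}(T)/l\cdot n$---and its marginal contribution to the PAV-score---which governs selection. Concretely, I would reuse the ``Onetown'' instance referenced above: a ``Leftside'' block of voters that unanimously approves three cheap projects $L_1,L_2,L_3$, placed alongside further voters and projects engineered so that welfare-maximization pulls the budget toward projects liked by a broader coalition.

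First I would fix all data of the instance explicitly---the voter set $N$, the candidate set $C$, the cost function, every approval set $A_i$, and the budget $l$---and record the harmonic weights $H(k)=1+\tfrac12+\cdots+\tfrac1k$ used by the PAV-score. Second, I would isolate the would-be witness: the Leftside group $S$ and the set $T=\{L_1,L_2,L_3\}$. I would verify that $S$ is $T$-cohesive in the sense of Definition~\ref{def:EJR_PB_approval}, i.e. that $\operatorname{cost}(T)/l\cdot n\le |S|$ (so $S$ can afford all of $T$ with its budget share) and that $T\subseteq\bigcap_{i\in S}A_i$ (every Leftside voter approves every project in $T$). Third, I would identify the bundle $W$ returned by PAV and show that it contains at most two projects approved by any Leftside voter, so that $|W\cap\bigcup_{i\in S}A_i|=2<3=|T|$, which directly contradicts the requirement of Definition~\ref{def:PJR_PB_approval}.

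The main obstacle is the third step: certifying that $W$ really is a PAV-optimal bundle rather than merely a plausible one. Because PAV is defined by a global maximization over all feasible bundles, I must argue that \emph{no} bundle with $\operatorname{cost}\le l$ attains a PAV-score at least as large as that of $W$ while also granting Leftside a third project. The clean way to do this is to compare $W$ directly against the relevant competitors---any affordable bundle that would give Leftside three approved projects---and to show that, under the budget constraint, securing that third Leftside project forces dropping projects whose combined marginal PAV-score strictly exceeds the harmonic gain Leftside derives from it. Since the weights exhibit decreasing marginal returns, $H(k)-H(k-1)=1/k$, this gain is only $\tfrac13$ per Leftside voter once each already has two approved projects selected, which bounds any competitor's advantage. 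The feasible region being finite, the argument reduces to a bounded case analysis; the delicate point is organizing it so one need only examine maximal affordable bundles and invoke the concavity of $H$ to rule out all dominating alternatives at once, rather than enumerating subsets by hand.
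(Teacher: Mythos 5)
Your proposal is correct and takes essentially the same route as the paper: both rest on the Onetown instance of Peters et al., take the Leftside voters as the $T$-cohesive group with $T=\{L_1,L_2,L_3\}$, and conclude that PAV's winning bundle contains only two Leftside-approved projects, so $|W\cap\bigcup_{i\in S}A_i|=2<3=|T|$, violating approval-PJR. The only difference is one of packaging: the paper simply cites the reference for the fact that PAV selects that bundle, whereas you plan to certify PAV-optimality yourself via a harmonic-weight comparison, which is sound but strictly more work than the paper's proof performs.
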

\begin{proof}
 We use the example of Onetown from \cite{peters2020proportionalPB_arxiv}. The group of voters in Leftside are $T$-cohesive  for $T=\{L_1,L_2,L_3\}$: they can with their share of the money afford all projects in $T$ and do all approve all projects in $T$. However, the amount of projects in the bundle $W$ that PAV returns that at least one of the voters in Leftside ($S$) approves of is $|W\cap_{i\in S}A_i| =2$, which is less than the number of projects in $T$. 
\end{proof}

\subsubsection{Phragm\'{e}n satisfies the core and EJR in laminar MWV-instances.}
From Theorem \ref{thm:LP-core_u-afford} follows the following:
Since Phragm\'{e}n satisfies LP in MWV-instances, in laminar MWV-instances the winning bundle of Phragm\'{e}n will also satisfy the core and EJR.
\begin{corollary}\label{cor:Phragmen-core,ejr-laminar_instances}
In laminar MWV-instances, Phragm\'{e}n satisfies the core and EJR.
\end{corollary}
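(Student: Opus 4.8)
The plan is to combine the fact that Phragm\'{e}n satisfies LP in MWV-instances (a known result, recorded in Table \ref{tab:rules_and_axioms_PB} and due to \cite{Peters2020_lim_of_welf}) with the implication from LP to the core and EJR on laminar instances established earlier in the paper. Concretely, for any laminar MWV-instance $E$, the winning bundle $\mathcal{R}(E)$ returned by Phragm\'{e}n is laminar proportional; it then remains to invoke Corollary \ref{cor:LP-PJR-EJR-core_MWV}, which states that on laminar MWV-instances LP implies PJR, EJR, and the core, to conclude immediately.

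First I would verify that the u-afford restriction, on which Theorem \ref{thm:LP-core_u-afford} depends, is automatically satisfied in the MWV setting. By Definition \ref{def:P-u-afford}, a bundle $T$ is u-afford w.r.t.\ an instance whenever for every unanimously approved project $c$ there exists $t \in T$ with $\text{cost}(t) \geq \text{cost}(c)$. Since in MWV-instances all projects share the same cost, any nonempty $T$ trivially contains such a $t$ (indeed $\text{cost}(t) = \text{cost}(c)$ for every $t \in T$, as already noted in the discussion preceding Corollary \ref{cor:LP-PJR-EJR-core_MWV}). Hence every laminar proportional bundle in a laminar MWV-instance is subject to u-afford, and Theorem \ref{thm:LP-core_u-afford} applies without further hypotheses.

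Putting these together: Phragm\'{e}n's output on a laminar MWV-instance is laminar proportional and, by the observation above, subject to u-afford; therefore it lies in the core. Finally, since the core implies EJR (as shown by \cite{peters2020proportionalPB} together with the core--EJR relation underlying Theorem \ref{thm:PB-EJR->PB-PJR}), Phragm\'{e}n also satisfies EJR on laminar MWV-instances.

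I do not expect a genuine obstacle here, as the statement is essentially a direct corollary of earlier results. The only point requiring care is the verification that u-afford holds vacuously under the unit-cost assumption, so that the restricted Theorem \ref{thm:LP-core_u-afford} can be applied unconditionally; the substantive mathematical content has already been carried by the LP property of Phragm\'{e}n and by Theorem \ref{thm:LP-core_u-afford}.
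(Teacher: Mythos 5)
Your proposal is correct and takes essentially the same route as the paper: the paper likewise combines Phragm\'{e}n's known LP property in MWV-instances \citep{Peters2020_lim_of_welf} with Theorem \ref{thm:LP-core_u-afford}, relying on the observation (recorded just before Corollary \ref{cor:LP-PJR-EJR-core_MWV}) that u-afford holds trivially under unit costs, and then uses the fact that the core implies EJR. Your explicit verification of the u-afford condition is exactly the point the paper also flags, so there is no gap.
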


\end{document}